\newcommand{\av}[1]{\overline{#1}}
\newcommand{\aaoi}{\overline{\Delta}}
\title{Minimizing Age of Information under Arbitrary Arrival Model with Arbitrary Packet Size} 
\titlerunning{Minimizing Age of Information under Arbitrary Arrival Model with Arbitrary Packet Size} 
\author{Kumar Saurav}{Tata Institute of Fundamental Research, Mumbai, India 
}
{kumar.saurav@tifr.res.in}{https://orcid.org/0000-0001-8381-1228}{}
\author{Rahul Vaze
}{Tata Institute of Fundamental Research, Mumbai, India}{rahul.vaze@gmail.com}{https://orcid.org/0000-0002-9712-1110}{}
\authorrunning{K. Saurav and R. Vaze} 
\keywords{age of information, scheduling, arbitrary arrival, competitive ratio} 
\begin{document}

\def\onehalf{\frac{1}{2}}
\def\etal{et.\/ al.\/}
\newcommand{\bydef}{\triangleq}
\newcommand{\tr}{{\it{tr}}}
\def\SNR{{\textsf{SNR}}}
\def\bydef{:=}
\def\bba{{\mathbb{a}}}
\def\bbb{{\mathbb{b}}}
\def\bbc{{\mathbb{c}}}
\def\bbd{{\mathbb{d}}}
\def\bbee{{\mathbb{e}}}
\def\bbff{{\mathbb{f}}}
\def\bbg{{\mathbb{g}}}
\def\bbh{{\mathbb{h}}}
\def\bbi{{\mathbb{i}}}
\def\bbj{{\mathbb{j}}}
\def\bbk{{\mathbb{k}}}
\def\bbl{{\mathbb{l}}}
\def\bbm{{\mathbb{m}}}
\def\bbn{{\mathbb{n}}}
\def\bbo{{\mathbb{o}}}
\def\bbp{{\mathbb{p}}}
\def\bbq{{\mathbb{q}}}
\def\bbr{{\mathbb{r}}}
\def\bbs{{\mathbb{s}}}
\def\bbt{{\mathbb{t}}}
\def\bbu{{\mathbb{u}}}
\def\bbv{{\mathbb{v}}}
\def\bbw{{\mathbb{w}}}
\def\bbx{{\mathbb{x}}}
\def\bby{{\mathbb{y}}}
\def\bbz{{\mathbb{z}}}
\def\bb0{{\mathbb{0}}}

\def\bydef{:=}
\def\ba{{\mathbf{a}}}
\def\bb{{\mathbf{b}}}
\def\bc{{\mathbf{c}}}
\def\bd{{\mathbf{d}}}
\def\bee{{\mathbf{e}}}
\def\bff{{\mathbf{f}}}
\def\bg{{\mathbf{g}}}
\def\bh{{\mathbf{h}}}
\def\bi{{\mathbf{i}}}
\def\bj{{\mathbf{j}}}
\def\bk{{\mathbf{k}}}
\def\bl{{\mathbf{l}}}
\def\bm{{\mathbf{m}}}
\def\bn{{\mathbf{n}}}
\def\bo{{\mathbf{o}}}
\def\bp{{\mathbf{p}}}
\def\bq{{\mathbf{q}}}
\def\br{{\mathbf{r}}}
\def\bs{{\mathbf{s}}}
\def\bt{{\mathbf{t}}}
\def\bu{{\mathbf{u}}}
\def\bv{{\mathbf{v}}}
\def\bw{{\mathbf{w}}}
\def\bx{{\mathbf{x}}}
\def\by{{\mathbf{y}}}
\def\bz{{\mathbf{z}}}
\def\b0{{\mathbf{0}}}
\def\opt{\mathsf{OPT}}
\def\on{\mathsf{ON}}
\def\off{\mathsf{OFF}}
\def\bA{{\mathbf{A}}}
\def\bB{{\mathbf{B}}}
\def\bC{{\mathbf{C}}}
\def\bD{{\mathbf{D}}}
\def\bE{{\mathbf{E}}}
\def\bF{{\mathbf{F}}}
\def\bG{{\mathbf{G}}}
\def\bH{{\mathbf{H}}}
\def\bI{{\mathbf{I}}}
\def\bJ{{\mathbf{J}}}
\def\bK{{\mathbf{K}}}
\def\bL{{\mathbf{L}}}
\def\bM{{\mathbf{M}}}
\def\bN{{\mathbf{N}}}
\def\bO{{\mathbf{O}}}
\def\bP{{\mathbf{P}}}
\def\bQ{{\mathbf{Q}}}
\def\bR{{\mathbf{R}}}
\def\bS{{\mathbf{S}}}
\def\bT{{\mathbf{T}}}
\def\bU{{\mathbf{U}}}
\def\bV{{\mathbf{V}}}
\def\bW{{\mathbf{W}}}
\def\bX{{\mathbf{X}}}
\def\bY{{\mathbf{Y}}}
\def\bZ{{\mathbf{Z}}}
\def\b1{{\mathbf{1}}}

\def\bbA{{\mathbb{A}}}
\def\bbB{{\mathbb{B}}}
\def\bbC{{\mathbb{C}}}
\def\bbD{{\mathbb{D}}}
\def\bbE{{\mathbb{E}}}
\def\bbF{{\mathbb{F}}}
\def\bbG{{\mathbb{G}}}
\def\bbH{{\mathbb{H}}}
\def\bbI{{\mathbb{I}}}
\def\bbJ{{\mathbb{J}}}
\def\bbK{{\mathbb{K}}}
\def\bbL{{\mathbb{L}}}
\def\bbM{{\mathbb{M}}}
\def\bbN{{\mathbb{N}}}
\def\bbO{{\mathbb{O}}}
\def\bbP{{\mathbb{P}}}
\def\bbQ{{\mathbb{Q}}}
\def\bbR{{\mathbb{R}}}
\def\bbS{{\mathbb{S}}}
\def\bbT{{\mathbb{T}}}
\def\bbU{{\mathbb{U}}}
\def\bbV{{\mathbb{V}}}
\def\bbW{{\mathbb{W}}}
\def\bbX{{\mathbb{X}}}
\def\bbY{{\mathbb{Y}}}
\def\bbZ{{\mathbb{Z}}}

\def\cA{\mathcal{A}}
\def\cB{\mathcal{B}}
\def\cC{\mathcal{C}}
\def\cD{\mathcal{D}}
\def\cE{\mathcal{E}}
\def\cF{\mathcal{F}}
\def\cG{\mathcal{G}}
\def\cH{\mathcal{H}}
\def\cI{\mathcal{I}}
\def\cJ{\mathcal{J}}
\def\cK{\mathcal{K}}
\def\cL{\mathcal{L}}
\def\cM{\mathcal{M}}
\def\cN{\mathcal{N}}
\def\cO{\mathcal{O}}
\def\cP{\mathcal{P}}
\def\cQ{\mathcal{Q}}
\def\cR{\mathcal{R}}
\def\cS{\mathcal{S}}
\def\cT{\mathcal{T}}
\def\cU{\mathcal{U}}
\def\cV{\mathcal{V}}
\def\cW{\mathcal{W}}
\def\cX{\mathcal{X}}
\def\cY{\mathcal{Y}}
\def\cZ{\mathcal{Z}}

\def\sfA{\mathsf{A}}
\def\sfB{\mathsf{B}}
\def\sfC{\mathsf{C}}
\def\sfD{\mathsf{D}}
\def\sfE{\mathsf{E}}
\def\sfF{\mathsf{F}}
\def\sfG{\mathsf{G}}
\def\sfH{\mathsf{H}}
\def\sfI{\mathsf{I}}
\def\sfJ{\mathsf{J}}
\def\sfK{\mathsf{K}}
\def\sfL{\mathsf{L}}
\def\sfM{\mathsf{M}}
\def\sfN{\mathsf{N}}
\def\sfO{\mathsf{O}}
\def\sfP{\mathsf{P}}
\def\sfQ{\mathsf{Q}}
\def\sfR{\mathsf{R}}
\def\sfS{\mathsf{S}}
\def\sfT{\mathsf{T}}
\def\sfU{\mathsf{U}}
\def\sfV{\mathsf{V}}
\def\sfW{\mathsf{W}}
\def\sfX{\mathsf{X}}
\def\sfY{\mathsf{Y}}
\def\sfZ{\mathsf{Z}}

\def\bydef{:=}
\def\sfa{{\mathsf{a}}}
\def\sfb{{\mathsf{b}}}
\def\sfc{{\mathsf{c}}}
\def\sfd{{\mathsf{d}}}
\def\sfee{{\mathsf{e}}}
\def\sfff{{\mathsf{f}}}
\def\sfg{{\mathsf{g}}}
\def\sfh{{\mathsf{h}}}
\def\sfi{{\mathsf{i}}}
\def\sfj{{\mathsf{j}}}
\def\sfk{{\mathsf{k}}}
\def\sfl{{\mathsf{l}}}
\def\sfm{{\mathsf{m}}}
\def\sfn{{\mathsf{n}}}
\def\sfo{{\mathsf{o}}}
\def\sfp{{\mathsf{p}}}
\def\sfq{{\mathsf{q}}}
\def\sfr{{\mathsf{r}}}
\def\sfs{{\mathsf{s}}}
\def\sft{{\mathsf{t}}}
\def\sfu{{\mathsf{u}}}
\def\sfv{{\mathsf{v}}}
\def\sfw{{\mathsf{w}}}
\def\sfx{{\mathsf{x}}}
\def\sfy{{\mathsf{y}}}
\def\sfz{{\mathsf{z}}}
\def\sf0{{\mathsf{0}}}

\def\Nt{{N_t}}
\def\Nr{{N_r}}
\def\Ne{{N_e}}
\def\Ns{{N_s}}
\def\Es{{E_s}}
\def\No{{N_o}}
\def\sinc{\mathrm{sinc}}
\def\dmin{d^2_{\mathrm{min}}}
\def\vec{\mathrm{vec}~}
\def\kron{\otimes}
\def\Pe{{P_e}}
\newcommand{\expeq}{\stackrel{.}{=}}
\newcommand{\expg}{\stackrel{.}{\ge}}
\newcommand{\expl}{\stackrel{.}{\le}}
\def\SIR{{\mathsf{SIR}}}

\def\nn{\nonumber}

\maketitle

\begin{abstract}
 We consider a single source-destination pair, where 
 information updates (in short, updates) arrive at the source at arbitrary time instants. For each update, its size, i.e. the service time required for complete transmission to the destination, is also arbitrary.
 At any time, age of information (AoI) is equal to the difference between the current time, and the arrival time of the latest update (at the source) that has been completely transmitted (to the destination). AoI quantifies the staleness of the update (information) at the destination. 
 The goal is to find a causal (i.e. online) scheduling policy that minimizes the time-average of AoI, where the possible decisions at any time are i) whether to preempt the update under transmission upon arrival of a new update, and ii) if no update is under transmission, then choose which update to transmit (among the available updates). 
 
 In this paper, we propose a causal policy called SRPT$^+$ that at each time, i) preempts the update under transmission if a new update arrives with a smaller size (compared to the remaining size of the update under transmission), and ii) if no update is under transmission, then begins to transmit the update for which the ratio of the reduction in AoI upon complete transmission (if not preempted in future) and the remaining size, is maximum. We characterize the performance of SRPT$^+$ using the metric called the competitive ratio, i.e. the ratio of the average AoI of causal policy and the average AoI of an optimal offline policy (that knows the entire input in advance), maximized over all possible inputs. We show that the competitive ratio of SRPT$^+$ is at most $4$. Further, we propose a simpler policy called SRPT$^L$, that
 i) preempts the update under transmission if a new update arrives with a smaller size (compared to the remaining size of the update under transmission), and ii) if no update is under transmission, then begins to transmit the update with the latest arrival time. We show that the competitive ratio of SRPT$^L$ is at most $29$.
\end{abstract}



\section{Introduction}
\vspace{3ex}

Online scheduling is a classical problem, where jobs with different processing requirements (sizes) arrive at arbitrary times, and an algorithm has to make its scheduling decisions using only the causal information. Typical performance metrics for which online scheduling problem has been considered are flow time \cite{schrage1968proof,scully2020simple,leonardi2007approximating,bansal2009weighted}, 
completion time \cite{hall1997scheduling,khuller2019select}, makespan \cite{graham1966bounds,goel1999stochastic}, co-flow \cite{shafiee, sincronia, twoapprox,bhimaraju2020non}, and several others. For many of these problems, either an optimal or near optimal algorithms are known. For example, the shortest remaining processing time (SRPT) algorithm is optimal for minimizing the flow time for a single server \cite{schrage1968proof}, while its multi-server counterpart has the optimal (order-wise) competitive ratio \cite{leonardi2007approximating}. 

Even though the mentioned performance metrics model a wide variety of problem settings, more modern paradigms such as internet-of-things, cyber-physical systems, etc., require a metric that can capture {\it information timeliness}. Information timeliness refers to the frequency at which fresh information is received at a controller/monitor. A compelling use case is a smart car, that is equipped with multiple sensors and a centralized monitor. Updates are generated at each sensor continuously that are needed to be made available at the monitor without being too `old'. 
An elegant and popular metric to capture information timeliness is known as the \emph{age of information (AoI)} \cite{kaul2012status,kaul2012real}, which at any time is equal to the time elapsed since the generation time of the latest update of the source that has been completely transmitted to the monitor (destination). 
The effective metric is then the time-average of the AoI (in short, average AoI). We illustrate AoI plot for a particular example (Example \ref{ex:intro}) in Figure \ref{fig:example-intro}.
\begin{example} \label{ex:intro}
	Let three updates are generated at the source, at time $g_1=0$, $g_2=2$ and $g_3=4$, respectively. Let an algorithm completely transmits the three updates by time $r_1=1$, $r_2=3$ and $r_3=5$, respectively.
	As shown in Figure \ref{fig:example-intro}, at any time $t$ in interval $[r_1,r_2)=[1,3)$, the AoI is $t-g_1=t-0=t$, while at time $t$ in interval $[r_2,r_3)=[3,5)$, the AoI is $t-g_2=t-2$. At time $t\ge r_3=5$, the AoI is $t-g_3=t-4$.
\end{example}
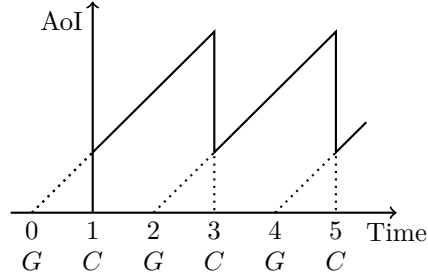
\begin{figure}
	\begin{center}
		\begin{tikzpicture}[thick,scale=0.8, every node/.style={scale=1}]
			\draw[->] (-0.35,0) to (6,0) node[below]{Time};
			\draw[->] (1,0) node[below]{$1$} to (1,3.5) node[below left]{AoI};
			
			\draw[thick] 
			(1,1) to (3,3) to (3,1) to (5,3) to (5,1) to (5.5,1.5);
			
			\draw[dotted] (0,0) node[below]{$0$} to (1,1); 
			\draw[dotted] (0,0) to (1,1);
			\draw[dotted] (2,0) node[below]{$2$} to (3,1) to (3,0) node[below]{$3$};
			\draw[dotted] (4,0) node[below]{$4$} to (5,1) to (5,0) node[below]{$5$}; 
			
			\draw (0,-0.8) node{$G$};
			\draw (1,-0.8) node{$C$};
			\draw (2,-0.8) node{$G$};
			\draw (3,-0.8) node{$C$};
			\draw (4,-0.8) node{$G$};
			\draw (5,-0.8) node{$C$};
			
%
%
%
			
		\end{tikzpicture}
		\caption{Age of information (AoI) depends on the generation time of latest completely transmitted update. Here, $G$ denotes update generation, while $C$ denotes transmission completion.
			\vspace{-4ex} 
		} 
		\label{fig:example-intro} 
	\end{center}
\end{figure}

AoI scheduling has two critical properties: i) all updates need not be processed, and ii) the AoI metric depends on the time-difference between the instants at which two consecutive updates are processed completely. Property i) makes the problem combinatorial, in the sense that an algorithm has to decide which subset of updates should be processed, while property ii) correlates the decisions across time.
Together, they make AoI scheduling fundamentally different and challenging compared to the classical scheduling for minimizing flow time, completion time, makespan, etc.

Since the definition of AoI in \cite{kaul2012real,kaul2012status}, a large body of work has been reported on scheduling for minimizing the AoI, however, the analysis is mostly restricted to stochastic inputs, as well as for some fixed scheduling algorithms. 
For example, for a single source-destination pair, \cite{kaul2012real,kaul2012status,champati2018statistical, inoue2019general,kavitha2021controlling} considered queuing models such as M/M/1 (where updates arrive with exponentially distributed inter-arrival times and sizes\footnote{Size (or equivalently, transmission/processing/service time) of an update is equal to the total time required to completely transmit/process/serve the update.}, and at any time at most one update can be transmitted) and D/G/1 (where inter-arrival times of updates are deterministic, sizes may follow any general distribution, and at any time at most one update can be transmitted), 
and analyzed the AoI for fixed scheduling policies such as the first-come-first-serve (FCFS), last-come-first serve (LCFS), etc. 
For a multi-source M/M/1 queuing models, where updates arrive from multiple sources and at any time, at most one update (from one source) can be transmitted to the destination,  \cite{multisourceyates, multisource} analyzed the distributional properties of the AoI for each source, with fixed scheduling policies such as FCFS, LCFS, etc. 

In prior work, optimality results (with respect to AoI minimization) for causal policies are known only for simpler settings, such as when there is a single source that can generate updates at any time \cite{sun2017update}, or when inter-generation time distribution is exponential. For complicated models such as with multiple sources, the performance guarantees for causal policies have been established mostly in terms of their \emph{competitive ratio} that is defined as the ratio of the cost of a causal policy, and an optimal offline policy that knows the entire update generation times and their sizes in advance.  
For example, in a discrete-time system with multiple sources, 
when the update inter-generation time and size 
distributions are geometric, 
\cite{kadota2019minimizing} showed that a causal randomized policy has a constant competitive ratio. 
Recently, similar results were shown in \cite{saurav20223,saurav2022scheduling} for continuous-time systems where update inter-generation time and size distributions belong to a class of general distributions. 

In a major departure from prior work on AoI, in this paper, we consider the AoI minimization problem for the arbitrary arrival model, where updates arrive at a source at arbitrary times and have arbitrary sizes. 
The goal is to find a causal policy that minimizes the AoI, where the possible decisions at any time are 
i) whether to preempt an ongoing update transmission on generation of a new update, and ii) if no update is under transmission, then choose which update to transmit among the available updates. 
Similar to prior work \cite{kadota2018scheduling,kadota2019minimizing,saurav20223,saurav2022scheduling,saurav2021minimizing}, we characterize the performance of any causal policy using the metric of competitive ratio. 

For a single source-destination pair with arbitrary arrival model, the SRPT policy that is optimal for minimizing the flow-time, also intuitively appears to be a reasonable policy for minimizing the AoI. 
However, as we show in Example \ref{ex:SRPT-not-OPT} (Section \ref{sec:sysmodel}), for the considered problem, the competitive ratio of SRPT is arbitrarily large. 

Another natural policy for AoI minimization 
is a greedy policy that at each time $t$, chooses to transmit that update for which the ratio of the reduction in AoI upon its completion (if not preempted) and the remaining size at time $t$, is maximum.
In the arbitrary arrival model, one can construct update generation sequences such that this greedy policy repeatedly preempts the update under transmission without transmitting any update completely, for a long time. Recall that the AoI decreases only when an update is completely transmitted. Hence, establishing competitive ratio guarantees for the greedy policy is difficult.



\paragraph*{\textbf{Contributions}} 
As described above, both the SRPT and the greedy policy on their own may not have good competitive ratio performance. 
However, for certain classes of inputs (update generation sequences), individually, both have complementarily appealing features. With this motivation,
we propose a policy called \emph{SRPT$^+$}, that at any point in time, either uses SRPT, or the greedy policy depending on the state of the system. 
In particular, when an update $i$ is under transmission, SRPT$^+$ preempts the update $i$ if (and only if) a new update $j$ is generated with size at most the remaining size of the update under transmission, and begins to transmit update $j$ (similar to SRPT). On the other hand, when no update is being transmitted, following the greedy policy, SRPT$^+$ begins to transmit the update for which the ratio of the reduction in AoI (upon its complete transmission) and the remaining size, is maximum.

The basic idea behind SRPT$^+$ is that the update that provides maximum reduction in AoI per unit transmission time (until completion) should be transmitted with priority. At the same time, preemption should not delay the completion time of updates. 
We show that for a single source destination pair with arbitrary arrival model, the competitive ratio for SRPT$^+$ is at most $4$. Essentially, we show that for each update $i$ generated at the source, the delay between its generation time and the earliest time when SRPT$^+$ completely transmits update $i$, or an update generated after it, is at most $4$ times the corresponding quantity for an optimal offline policy $\pi^\star$. 
Compared to prior work on AoI minimization, this is the first result that provides a constant competitive ratio upper bound for the arbitrary arrival model.

Subsequently, we propose a simpler policy than SRPT$^+$, called SRPT$^L$ that 
at any time when no update is being transmitted, begins to transmit the latest generated update, irrespective of its size or AoI reduction. 
Moreover, SRPT$^L$ preempts the update under transmission only if a 
new update is generated with smaller size (compared to the remaining size of the update under transmission). We show that the competitive ratio of SRPT$^L$ is at most $29$.



\section{System Model} \label{sec:sysmodel}

We consider a source-destination pair, where updates are generated at the source at arbitrary time instants $g_1,g_2,\cdots$, with arbitrary sizes (transmission times required to completely transmit the updates) $s_1,s_2,\cdots$ (where $g_i,s_i\ge 0$, $\forall i$). 
At any time, the source may transmit at most one update to the destination, however, an update under transmission can be preempted at any time, to allow transmission of a different update. 

Let $\Pi$ denote the set of all causal (online) scheduling policies (in short, policies) that at any time $t$, using only the available causal information, decides which update the source transmits at time $t$. 
For any policy $\pi\in\Pi$, the \emph{age of information (AoI)} at time $t$ is $\Delta_\pi(t)=t-\lambda^\pi(t)$, where $\lambda^\pi(t)$ denotes the generation time of the latest update of the source that has been completely transmitted until time $t$, under policy $\pi$. Thus, the AoI decreases only when an update is completely transmitted under policy $\pi$. The (time) average AoI ($\av{AoI}$) 
is defined as $\aaoi_\pi=\lim_{t\to\infty}\int_{0}^{t}\Delta_\pi(\tau)d\tau/t$.

The objective is to find a causal policy $\pi\in\Pi$ that minimizes the $\av{AoI}$. 
However, instead of directly finding the optimal causal policy, in this paper, we consider a particular causal policy called SRPT$^+\in\Pi$, and show that its competitive ratio (Definition \ref{def:CR}) is at most $4$.
\begin{definition} \label{def:CR} 
	Let $\cI=\{(g_1,s_1),(g_2,s_2),\cdots\}$ denote a particular sequence of updates that are generated at the source. Moreover, let $\pi^\star$ denote an optimal offline policy (for minimizing $\av{AoI}$) 
	that knows $\cI$ ahead of time.
	For any causal policy $\pi\in\Pi$, its competitive ratio is 
	$\textsc{CR}_{\pi}=\max_{\cI} \{\aaoi_\pi(\cI)/\aaoi_{\pi^\star}(\cI)\}$.
\end{definition}

To illustrate the challenges in minimizing the $\av{AoI}$, 
we first show that the competitive ratio of the well-known SRPT policy is arbitrarily large. In the considered setting, with SRPT, at each time $t$, the update with the least remaining time (difference of $s_i$, and the total time for which update $i$ has been under transmission until time $t$) is transmitted. 

\begin{example} \label{ex:SRPT-not-OPT}
	Let $\cI_1=\{(\epsilon/(m-i),1/2)|\forall i\in\{0,\cdots,m-1\}\}$, where $m$ is a positive integer, and $\epsilon>0$. Also, let $\cI_2=\{(i,1)|\forall i\in\bbN\}$.  
	Consider the setting where the update generation sequence is $\cI=\cI_1\cup\cI_2$, and the initial AoI is $\Delta(0)=0$. 
	
	Note that for any update $i\in\cI_1$ and $j\in\cI_2$, the size $s_i<s_j$. 
	Hence, before transmitting any update belonging to $\cI_2$, SRPT completely transmits all the updates in $\cI_1$. Moreover, the sum of the sizes of all the updates in $\cI_1$ is $m/2$. Thus, any update belonging to $\cI_2$ can begin to transmit under SRPT only at time $t\ge m/2$. 
	Also, notice that the generation time of
	the last update in $\cI_1$ is $\epsilon$. 
	Combining the above two facts, we get that under SRPT, at any time $t\le m/2$, the AoI $\Delta_{SRPT}(t)\ge t-\epsilon$. Thus, choosing $m\to\infty$ and $\epsilon\to 0^+$, we get that under SRPT, the $\av{AoI}$ 
	is $\aaoi_{SRPT}(\cI)\to\infty$.
	
	In contrast, consider a policy $\pi$, that at any time $t$ when no update is being transmitted, begins to transmit the latest generated update (if any), until completion (without preemption). 
	Clearly, for the considered update generation sequence $\cI$, $\pi$ transmits each update in $\cI_2$, starting immediately after its generation. Thus, under policy $\pi$, with input $\cI$, the $\av{AoI}$ 
	is $\aaoi_\pi(\cI)=3/2$, which implies that the $\av{AoI}$ 
	for an optimal offline policy $\pi^\star$ is $\aaoi_{\pi^\star}(\cI)\le 3/2$ (i.e. finite).
	Hence, the competitive ratio of SRPT is $\textsc{CR}_{SRPT}=\max_\cI\{\aaoi_{SRPT}(\cI)/\aaoi_{\pi^\star}(\cI)\}\to\infty$.
\end{example} 

\subsection*{Notations}

We denote the generation time of the $i^{th}$ update (equivalently, update $i$) at the source by $g_i$ (thus, the relations $i<j$ and $g_{i}< g_{j}$ are equivalent). 
With respect to policy $\pi$, define $b_i^\pi\ge g_i$ as the earliest time instant when $\pi$ begins to transmit an update $j$ with generation time $g_j\ge g_i$ ($j$ may or may not be equal to $i$ because all updates need not be transmitted in order). Further, define $r_i^\pi\ge b_i^\pi$ as the earliest time instant when the transmission of an update $k$ with generation time $g_k\ge g_i$ completes under policy $\pi$ (update $k$ may or may not be same as update $i$, or the update $j$ whose transmission began at $b_i^\pi$, due to preemption). 
Note that 
the generation time of the latest completely transmitted update i.e. $\lambda^\pi(t)< g_i$, $\forall t<r_i^\pi$. 
As shown in Figure \ref{fig:age-plot-notations},  if policy $\pi$ completely transmits update $i+1$ without or before completely transmitting update $i$, then $r_i^\pi=r_{i+1}^\pi$. 

Let $\delta_i=g_i-g_{i-1}$ denote the inter-generation time of successive updates $i-1$ and $i$, which is independent of policy $\pi$. Define $\nu_i^\pi=r_i^\pi-g_i 
=(r_i^\pi-b_i^\pi)+(b_i^\pi-g_i)=d_i^\pi+w_i^\pi$, where $d_i^\pi=r_i^\pi-b_i^\pi$ and $w_i^\pi=b_i^\pi-g_i$. Note that $\nu_i^\pi$, $w_i^\pi$ and $d_i^\pi$ are update-based metrics that are defined for each generated update. Figure \ref{fig:age-plot-notations} shows a sample AoI plot for policy $\pi\in\Pi$, along with the quantities defined above.

Next, we derive a general expression for the $\av{AoI}$ of any causal or offline policy $\pi$, in terms of the quantities defined so far.

\section{General Expression for Average AoI}

\begin{lemma} \label{lemma:aaoi}
	For any policy $\pi$ (causal or offline), the $\av{AoI}$ is 
		\begin{align} \label{eq:aaoi-general}
			\aaoi_\pi&=\lim_{t\to\infty}\frac{1}{t}\sum_{i=1}^{R(t)}\left[\frac{\delta_{i}^2}{2}+\delta_i\nu_i^\pi\right]=\lim_{t\to\infty}\frac{1}{t}\left[\sum_{i=1}^{R(t)}\frac{\delta_{i}^2}{2}+\sum_{i=1}^{R(t)}\delta_i w_i^\pi+\sum_{i=1}^{R(t)}\delta_i d_i^\pi\right],
	\end{align} 
where $R(t)$ denotes the number of updates generated until time $t$.
\end{lemma}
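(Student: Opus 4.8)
The plan is to compute the area under the AoI curve $\Delta_\pi(\tau)$ directly from its geometry and then divide by $t$ and take the limit. The key observation is that the AoI curve is piecewise linear with unit slope, and it drops only at the instants $r_i^\pi$ when some update with generation time $\ge g_i$ completes. More precisely, partition the time axis by the distinct values among $\{r_i^\pi\}$. On the interval $[r_{i}^\pi, r_{i+1}^\pi)$ (when these are distinct, i.e. when update $i$ is the ``freshest'' completed update there), we have $\lambda^\pi(\tau) = g_i$, so $\Delta_\pi(\tau) = \tau - g_i$, which rises linearly from the value $r_i^\pi - g_i = \nu_i^\pi$ at the left endpoint. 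First I would make this bookkeeping precise: if $\pi$ completes update $i+1$ before or without completing update $i$, then by the definitions in the Notations paragraph $r_i^\pi = r_{i+1}^\pi$, so that interval is empty and contributes nothing — hence summing over all $i$ from $1$ to $R(t)$ is the correct (non-overlapping) decomposition, with each nonempty piece indexed by the generation time that is current on it.

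Next I would compute the area of a single trapezoidal piece. On the piece associated with update $i$, the AoI starts at height $\nu_i^\pi$ at time $r_i^\pi$ and increases with unit slope until the next completion instant $r_{i+1}^\pi$; but the natural way to split the contribution is not by the intervals $[r_i^\pi, r_{i+1}^\pi)$ but rather to attribute to update $i$ the ``slab'' of area under the curve lying between the horizontal level contributed by update $i-1$ and that contributed by update $i$. Concretely, the drop at $r_i^\pi$ takes the age from (height just before) down to $\nu_i^\pi$, and the portion of the area ``owed'' to the fact that update $i$ (rather than update $i-1$) is now the reference is a region of horizontal extent $\delta_i = g_i - g_{i-1}$ (the difference in reference generation times) running along the whole stretch where update $i$ or a later update is the reference. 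Integrating unit-slope segments, a strip of horizontal width $\delta_i$ whose lower edge starts at height $0$ and top edge starts at height $\delta_i$ and both move up at unit slope over a duration contributes a parallelogram of area $\delta_i \cdot(\text{duration})$ plus the initial triangle/offset $\delta_i^2/2 + \delta_i \nu_i^\pi$; telescoping these strips over $i$ reconstructs exactly $\int_0^t \Delta_\pi(\tau)\,d\tau$ up to boundary terms that vanish after dividing by $t$. This yields $\int_0^t \Delta_\pi \approx \sum_{i=1}^{R(t)} \left[\frac{\delta_i^2}{2} + \delta_i \nu_i^\pi\right]$, and then substituting $\nu_i^\pi = w_i^\pi + d_i^\pi$ gives the second form.

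The main obstacle, and the step I would treat most carefully, is the geometric decomposition itself: making rigorous the claim that the per-update ``slab'' has area exactly $\frac{\delta_i^2}{2} + \delta_i \nu_i^\pi$ regardless of how many updates are skipped (never transmitted) and regardless of the preemption pattern. The subtle points are (i) handling updates $i$ that $\pi$ never ``realizes'' — i.e. where no completion ever has reference exactly $g_i$ — which are absorbed because then $r_i^\pi = r_{i+1}^\pi$ and the corresponding $\nu_i^\pi$ multiplies a $\delta$ that telescopes correctly; (ii) confirming that consecutive strips share the right edges so that the sum of parallelograms plus triangles equals the whole area with no double counting; and (iii) the boundary effects at $t$ (a partial last strip) and at $0$ (the assumption $\Delta(0)=0$ or a fixed finite value), both $O(1)$ or $O(\text{poly in the last few }\delta,\nu)$ and hence killed by the $1/t$ factor in the limit. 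Once the single-strip area is established, the rest is a telescoping sum and a routine limit, so I would spend the bulk of the write-up on a clean picture-driven argument (referencing Figure \ref{fig:age-plot-notations}) for the area of one strip.
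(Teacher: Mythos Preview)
Your proposal is correct and takes essentially the same approach as the paper: decompose the area under $\Delta_\pi$ into per-update diagonal ``slabs'' (the region between the lines $y=\tau-g_{i-1}$ and $y=\tau-g_i$, cut off at $\tau=r_i^\pi$), each of area $\delta_i^2/2+\delta_i\nu_i^\pi$, exactly as depicted in Figure~\ref{fig:age-plot-notations}. The paper's own proof is in fact terser than yours --- it simply asserts the partition by pointing to the figure and then sums --- so your attention to skipped updates ($r_i^\pi=r_{i+1}^\pi$) and boundary terms is, if anything, more careful than the original; one small wobble is the phrase ``a parallelogram of area $\delta_i\cdot(\text{duration})$ plus\ldots'' in your second paragraph, which overcounts --- the slab area is exactly $\delta_i^2/2+\delta_i\nu_i^\pi$ and the slabs tile the region without any further telescoping.
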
 
\begin{proof}
	See Appendix \ref{appendix:proof-lemma-aaoi}.
\end{proof}
\begin{figure}
	\begin{center}
		\begin{tikzpicture}[thick,scale=0.8, every node/.style={scale=1}]
		\draw[->] (-0.25,0) to (6.8,0) node[below]{time ($t$)};
		\draw (0.2,0) node[below]{0};
		\draw[->] (0.35,-0.25) to (0.35,3.6) node[below left]{$\Delta_{\pi}(t)$};
		\draw (0.35,0) to (2.4,2.05) to (2.4,0.9) to (5,3.5) to (5,0.8) to (5.75,1.55) to (5.8,1.6) ;
		
		\fill[pattern=north west lines, pattern color=red] (0.35,0) to (2.4,2.05) to (2.4,0.9) to (1.5,0) to (0.35,0);
		\fill[pattern=north east lines, pattern color=green!50!gray] (1.5,0) to (5,3.5) to (5,1.8) to (3.2,0) to (1.5,0);
		 
		\fill[pattern=north west lines, pattern color=blue] (5,1.8) to (3.2,0) to (4.2,0) to (5,0.8) to (5,1.8); 
		
		
		\draw[loosely dotted] (6,1) to (6.7,1); 

		\draw (1.5,-0.1) node[below]{$g_{1}$} to (1.5,0.1);
		\draw (3.2,-0.1) node[below]{$g_{2}$} to (3.2,0.1);
		\draw (4.2,-0.1) node[below]{$g_{3}$} to (4.2,0.1); 
		
		\draw (2.4,-0.1) node[below]{$r_{1}^\pi$} to (2.4,0.1);
		\draw (5,-0.1) to (5,0.1);
		\draw (4.7,-0.1) node[below right]{$r_{2}^\pi,r_{3}^\pi$};
		
		\draw[dashed] (1.5,0.1) to (1.5,2.2);
		\draw[dashed] (3.2,0.1) to (3.2,3.6);
		\draw[dashed] (4.2,0.1) to (4.2,2.7);
		
		\draw[dashed] (1.5,0) to (2.4,0.9) to (2.4,0.1);
		\draw[dashed] (3.2,0) to (5,1.8); 
		\draw[dashed] (4.2,0) to (5,0.8) to (5.,0.1);
		
        \draw[|<->] (0.35,-0.9) -- (1.5,-0.9) node[rectangle,inner sep=-1pt,midway,fill=white]{$\delta_{1}$}; 
        \draw[|<->] (1.5,-0.9) -- (3.2,-0.9) node[rectangle,inner sep=-1pt,midway,fill=white]{$\delta_{ 2}$};
        \draw[|<->|] (3.2,-0.9) -- (4.2,-0.9) node[rectangle,inner sep=-1pt,midway,fill=white]{$\delta_{3}$};
		
        \draw[|<->|] (1.5,2.2) -- (2.4,2.2) node[rectangle,inner sep=-1pt,midway,fill=white]{$\nu_{1}^\pi$}; 
        \draw[|<->|] (3.2,3.6) -- (5,3.6) node[rectangle,inner sep=-1pt,midway,fill=white]{$\nu_{2}^\pi$};
        \draw[|<->|] (4.2,2) -- (5,2) node[rectangle,inner sep=-1pt,midway,fill=white]{$\nu_{3}^\pi$};
		
		
		\end{tikzpicture}
		\caption{Sample AoI plot under policy $\pi$. 
		\vspace{-4ex} 
		} 
		\label{fig:age-plot-notations} 
	\end{center}
\end{figure}
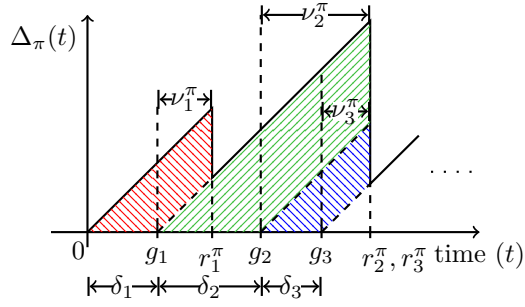

Notably, Lemma \ref{lemma:aaoi} expresses the $\av{AoI}$ for any policy $\pi$ (causal or offline) in terms of $\delta_i$'s that are independent of policy $\pi$, and update-based metrics $\nu_i^\pi$'s (i.e. $w_i^\pi$'s and $d_i^\pi$'s). Also, 
from \eqref{eq:aaoi-general}, we get that the $\av{AoI}$ for an optimal offline policy $\pi^\star$ (that knows the update generation sequence in advance) 
is 
	\begin{align} \label{eq:aaoi-opt}
		\aaoi_{\pi^\star}=\lim_{t\to\infty}\frac{1}{t}\sum_{i=1}^{R(t)}\left[\frac{\delta_{i}^2}{2}+\delta_i\nu_i^\star\right],
	\end{align}
   where $\nu^\star=r_i^\star-g_i$, and $r_i^\star$ is the earliest time instant when $\pi^\star$ completes the transmission of an update $j\ge i$. Lemma \ref{lemma:lb-nu-star} shows that $\nu_i^\star\ge\min_{j\ge i}\{g_j+s_j\}-g_i$, for each update $i$.
   
   \begin{lemma} \label{lemma:lb-nu-star}
   	Under any policy $\pi$ (causal or offline), for each update $i$, $\nu_i^\pi=r_i^\pi-g_i\ge \nu_i^{\min}$, where $\nu_i^{\min}=\min_{j\ge i}\{g_j+s_j\}-g_i\ge 0$.
   \end{lemma}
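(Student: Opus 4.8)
The plan is to unwind the definitions of $b_i^\pi$ and $r_i^\pi$ and argue that no update indexed $j \ge i$ can possibly be completely transmitted before the time $\min_{j\ge i}\{g_j+s_j\}$, simply because an update $j$ cannot even \emph{start} transmission before its own generation time $g_j$, and once started it needs at least $s_j$ units of (possibly non-contiguous) service to complete. First I would recall that $r_i^\pi$ is, by definition, the earliest instant at which $\pi$ completes transmission of \emph{some} update $k$ with $g_k \ge g_i$, i.e. with index $k \ge i$. Fix any such $k$ and let $c$ denote the time $\pi$ completes it; then $c \ge r_i^\pi$ is false in general but $r_i^\pi$ equals the infimum of all such completion times, so it suffices to show every such completion time $c$ satisfies $c \ge \min_{j \ge i}\{g_j + s_j\}$, and then take the infimum over $k$.

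The core step is the single-update lower bound: if $\pi$ completes update $k$ at time $c$, then $c \ge g_k + s_k$. This holds because $\pi$ may devote the channel to update $k$ only during $[g_k, c]$ (it cannot transmit $k$ before $k$ is generated), at most one update is transmitted at a time, and completing $k$ requires accumulating $s_k$ units of transmission time for update $k$; hence the length of the interval available, $c - g_k$, is at least the service requirement $s_k$. Therefore $c \ge g_k + s_k \ge \min_{j\ge i}\{g_j+s_j\}$. Taking the infimum over all updates $k \ge i$ that $\pi$ ever completes gives $r_i^\pi \ge \min_{j\ge i}\{g_j+s_j\}$, and subtracting $g_i$ yields $\nu_i^\pi = r_i^\pi - g_i \ge \nu_i^{\min}$. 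Nonnegativity of $\nu_i^{\min}$ is immediate: for $j = i$ we have $g_j + s_j - g_i = s_i \ge 0$ by assumption $s_i \ge 0$, and more generally $g_j \ge g_i$ for $j \ge i$ and $s_j \ge 0$, so every term in the minimum is $\ge 0$.

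I do not anticipate a serious obstacle here; the only point requiring a little care is the edge case where $\pi$ \emph{never} completes any update $j \ge i$, in which case $r_i^\pi = \infty$ (consistent with the remark in the excerpt that $\lambda^\pi(t) < g_i$ for all $t < r_i^\pi$) and the inequality holds trivially. A secondary subtlety is making precise that ``available transmission time for update $k$ within $[g_k,c]$ is at most $c-g_k$'' — this is just the observation that the channel serves one update at a time so the total service delivered to $k$ is bounded by the length of the window in which it could be served; I would state this as a one-line consequence of the system model rather than belabour it.
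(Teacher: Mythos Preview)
Your proposal is correct and follows essentially the same approach as the paper's proof: both argue that any update $k$ can be completed no earlier than $g_k+s_k$, hence $r_i^\pi\ge\min_{j\ge i}\{g_j+s_j\}$, and then subtract $g_i$. Your write-up is simply more explicit about the single-update lower bound, the nonnegativity of $\nu_i^{\min}$, and the edge case where no $j\ge i$ is ever completed---all of which the paper leaves implicit.
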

   \begin{proof}
   	Note that the earliest time instant by which an update $i$ generated at time $g_i$ can be completely transmitted is  $g_i+s_i$, where $s_i=s_i(g_i)$ is the size of update $i$ at generation. Therefore, the earliest time instant when an update $j\ge i$ can be completely transmitted is $\min_{j\ge i}\{g_j+s_j\}$. Hence, for any causal or offline policy $\pi$, 
   	we have $r_i^\pi\ge \min_{j\ge i}\{g_j+s_j\}$, which implies that $\nu_i^\pi=r_i^\pi-g_i\ge \min_{j\ge i}\{g_j+s_j\}-g_i=\nu_i^{\min}$. 
   \end{proof}

In the next section, we propose a novel policy called SRPT$^+$ that has features of both a greedy policy and SRPT, and show that the $\av{AoI}$ \eqref{eq:aaoi-general} for SRPT$^+$ is at most $4$ times the $\av{AoI}$ \eqref{eq:aaoi-opt} for an optimal offline policy $\pi^\star$.

\section{SRPT$^+$ Policy} 

At time $t$, let $s_i(t)$ denote the 
remaining size of update $i$, and $\lambda(t)$ be the generation time of the latest update completely transmitted until time $t$. If the remaining size $s_i(t)$ of update $i$ is transmitted from time $t$ onwards (and not preempted ever), then its transmission completes at time $t+s_i(t)$, causing an instantaneous reduction in AoI of $\max\{g_i-\lambda(t),0\}$ units at time instant $t+s_i(t)$. 
Thus, for each update $i$ available at time $t$, i.e., $s_i(t)>0$, we define an index
\begin{align} \label{eq:slope}
	\gamma_i(t)=\frac{\max\{g_i-\lambda(t),0\}}{s_i(t)},
\end{align}
which represents the ratio of the benefit obtained by transmitting update $i$, and the cost (time required to transmit it).
Let $\Gamma(t)=\{i|\gamma_i(t)>0,s_i(t)>0\}$ denote the set of all updates available at time $t$ such that $\gamma_i(t)>0$. 

Using the index $\gamma_i(t)$, we propose a policy called \emph{SRPT$^+$} (Algorithm \ref{algo:AoI-SRPT}) that at any time $t$, makes one of the following two decisions.  i) If no update is under transmission, then it begins to transmit the update $i^\star(t)=\arg\max_{j\in\Gamma(t)}\gamma_j(t)$ (and idles if the set $\Gamma(t)$ is empty), and ii) if an update $i$ is under transmission, then update $i$ is preempted by update $j$ if and only if a new update $j$ is generated with size $s_j=s_j(t)\le s_i(t)$. 

\begin{algorithm}
	\caption{SRPT$^+$ Policy.} 
	\label{algo:AoI-SRPT}
	\begin{algorithmic}[1]
		\STATE At any time $t$, 
		\IF{an update $i$ is under transmission}
		\IF{a new update $j$ is generated with size $s_j=s_j(t)\le s_i(t)$}
		\STATE preempt update $i$ and begin to transmit update $j$.
		\ELSE
		\STATE continue transmitting update $i$.
		\ENDIF
		\ELSIF{the set $\Gamma(t)$ is non-empty}
		\STATE begin to transmit update $j=\arg\max_{i\in\Gamma(t)}\gamma_i(t)$ (ties broken arbitrarily).
		\ELSE
		\STATE idle until a new update is generated.
		\ENDIF
	\end{algorithmic}
\end{algorithm}

\begin{remark} \label{remark:greedy+}
	Note that for any update $i$ and $j$ with $g_i<g_j$, if $s_i(t)\ge s_j(t)$, then 
	\begin{align*}
		\gamma_i(t)=\frac{\max\{g_i-\lambda(t),0\}}{s_i(t)}<\frac{\max\{g_j-\lambda(t),0\}}{s_j(t)}=\gamma_j(t).
	\end{align*}
	Thus, under SRPT$^+$, when an update $j$ preempts an update $i$ at time $t$, this implicitly means that 
	$\gamma_j(t)>\gamma_i(t)$ and $j\in\Gamma(t)$. This shows that SRPT$^+$ is essentially a greedy policy with respect to index \eqref{eq:slope}, except that to preempt an update $i$ at time $t$ (with remaining size $s_i(t)$), it additionally requires that the size $s_j(t)$ of update $j$ (that is to preempt update $i$) is no more than $s_i(t)$. Hence, at the time instant $t$ when the transmission of an update $i$ begins, its index $\gamma_i(t)$ is maximum among all the updates. Also, preemption does not increase the earliest time after time $t$, when an update from $\Gamma(t)$ gets completely transmitted. 
\end{remark}

%

\begin{remark}
	SRPT$^+$ transmits an update $i$ an time $t$, only if it belongs to $\Gamma(t)$, i.e. $\gamma_i(t)>0$. In fact, at time $t$, if update $i$ does not belongs to $\Gamma(t)$ ($\gamma_i(t)=0$), 
	then SRPT$^+$ never transmits it in the future as well. This is because $\lambda(t)$ is a non-decreasing function of $t$, and the generation time $g_i$ is fixed for update $i$. Thus, at any time $t$, if $\gamma_i(t)=0$ (i.e., $\max\{g_i-\lambda(t),0\}=0$), then $\gamma_i(t')=0$, $\forall t'\ge t$.
\end{remark}


We next illustrate SRPT$^+$ (Algorithm \ref{algo:AoI-SRPT}) using a pertinent example that brings out its critical properties. 
\begin{example} 
	Let the AoI at time $t=0$ be $\Delta(0)=0$, which implies $\lambda(0)=0$. 
	Consider an update generation sequence $\cI=\{(g_i,s_i)\}_i$, where the set of updates generated in time interval $[0,2]$ are $\{(0,1.45),(0.25,1.25),(0.75,1),(1,0.5),(1.25,0.3), (1.8,0.1)\}$. Let the updates be numbered (ordered) in increasing order of their generation time.
	
	Clearly, at time $t=0$, the only update available is update $1$, i.e. $(0,1)$, and no update is under transmission. 
	Moreover, the index \eqref{eq:slope} for the update available at time $t=0$ is $\gamma_1(0)=\max\{g_1-\lambda(0),0\}/s_1(0)=0$, which implies that the update does not lie in $\Gamma(0)$. Hence, SRPT$^+$ discards update $1$, and no transmission happens until time $t=0.25$, when update $2$ is generated. 
	
	Until time $t=0.25$, since no update is transmitted by SRPT$^+$, at time $t=0.25$,  $\lambda(0.25)=0$ (and AoI $\Delta(t=0.25)=t-\lambda(t)=0.25$). Thus, $\gamma_2(0.25)=\max\{g_2-\lambda(0.25),0\}/s_2(0.25)=(0.25-0)/1.25=0.2>0$. Thus, SRPT$^+$ begins to transmit update $2$ at time $t=0.25$, which remains under transmission at time $t=0.75$ with 
	remaining size $s_2(0.75)=1.25-(0.75-0.25)=0.75$, and index $\gamma_2(0.75)\approx 0.33$, when update $3$ is generated. Until time $t=0.75$, since no update has been completely transmitted by SRPT$^+$, $\lambda(0.75)=0$. For update $3$ the size at time $t=0.75$ is $s_3(0.75)=s_3=1$, and index $\gamma_3(0.75)=\max\{g_3-\lambda(0.75),0\}/s_3(0.75)=(0.75-0)/1=0.75$. Clearly, $\gamma_2(0.75)<\gamma_3(0.75)$, but since update $2$ is already under transmission and $s_2(0.75)=0.75<s_3(0.75)=1$, SRPT$^+$ continues to transmit update $2$. 
	
	At time $t=1$, update $2$ is still under transmission with remaining size $s_2(1)=1.25-(1-0.25)=0.5$, and index $\gamma_2(1)=\max\{g_2-\lambda(1),0\}/s_2(1)=(0.25-0)/0.5=0.5$. However, update $4$ is generated at time $t=1$, with size $s_4=0.5\le s_2(1)$. Hence, SRPT$^+$ preempts update $2$, and begins to transmit update $4$ (at time $t=1$). 
	
	Subsequently, at time $t=1.25$, update $4$ is under transmission with remaining size $s_4(1.25)=(0.5-(1.25-1))=0.25$ which is less than the size of update $5$ generated at time $t=1.25$. Hence, SRPT$^+$ continues to transmit update $4$ until time $t=1.5$, when its transmission completes. Thus, at time $t=1.5$, the latest completely transmitted update is update $4$, which implies $\lambda(1.5)=g_4=1$.
	
	Since, $\lambda(1.5)=g_4=1$, at time $t=1.5$, the index $\gamma_i(1.5)=\max\{g_i-1,0\}/s_i(1.5)=0$, for each update $i$ generated until time $t=1$. Thus, the only update with a positive index at time $t=1.5$ (i.e., the only update in $\Gamma(1.5)$) is update $5$ (i.e. $(1.25,0.3)$) with index $\gamma_5(1.5)=\max\{g_5-\lambda(1.5),0\}/s_5(1.5)=(1.25-1)/0.3\approx 0.83$. Hence, SRPT$^+$ begins to transmit update $5$ starting at time $t=1.5$, which completes at time $t=1.8$. Thus, at time $t=1.8$, $\lambda(1.8)=g_5=1.25$, and AoI is $t-\lambda(t)=1.8-1.25=0.55$. 
	
	Subsequently, at time $t=1.8$, update $6$ is generated, and has index $\gamma_6(1.8)=0.55>0$. Since no other update $i$ has a positive index $\gamma_i(1.8)$ (at time $t=1.8$), SRPT$^+$ begins to transmit update $6$ at time $t=1.8$, and completes it at time $t=1.9$. Thus, at time $t=1.9$, $\lambda(1.9)=g_6=1.8$, and AoI is $t-\lambda(t)=1.9-1.8=0.1$. Following time $t=1.9$, no update is available with positive index \eqref{eq:slope}. Hence, SRPT$^+$ idles thereafter.
	
	Note that for the given update generation sequence in interval $[0,2]$, under SRPT$^+$, only three updates are completely transmitted: i) update $4$ i.e. $(1,0.5)$, completely transmitted at time $t=1.5$, ii) update $5$ i.e. $(1.25,0.3)$, completely transmitted at time $t=1.8$, and iii) update $6$ i.e. $(1.8,0.1)$, completely transmitted at time $t=1.9$. Thus, under SRPT$^+$, AoI increases linearly from $0$ at time $t=0$ to $1.5$ at time $t=1.5$. Then, as soon as the transmission of update $4$ completes, AoI decreases to $1.5-\lambda(1.5)=1.5-g_4=1.5-1=0.5$. Then, AoI increases linearly to $0.8$ at time $t=1.8$, when AoI decreases to $1.8-\lambda(1.8)=1.8-g_5=0.55$. Subsequently, AoI increases linearly to $0.65$ at time $t=1.9$, then decreases to $0.1$ when update $6$ is completely transmitted, and increases linearly thereafter. On simple computation, we find that $\av{AoI}$ for SRPT$^+$ in interval $[0,2]$ is $1.395$ time units. 
	
	Using brute force technique, we find that for the given update generation sequence in interval $[0,2]$, with an optimal offline policy $\pi^\star$, only updates $5$ and $6$ are completely transmitted. In particular, $\pi^\star$ completely transmits update $5$ over the time interval $(1.25,1.55]$, and update $6$ over the time interval $(1.8,1.9]$, incurring $\av{AoI}$ equal to $1.3825$ time units. 
\end{example}





\textbf{The main result of this section is as follows.}
\begin{theorem} \label{thm:main-result}
	The competitive ratio of SRPT$^+$ (Algorithm \ref{algo:AoI-SRPT}) is $\textsc{CR}_{\text{SRPT}^+}\le 4$.
\end{theorem}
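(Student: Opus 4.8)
The plan is to reduce Theorem~\ref{thm:main-result} to a per-update inequality and then analyze the SRPT$^+$ schedule update by update. Write $A$ for SRPT$^+$. By Lemma~\ref{lemma:aaoi} the averages $\aaoi_A$ and $\aaoi_{\pi^\star}$ share the same $\delta_i^2/2$ terms and differ only through $\delta_i\nu_i^A$ versus $\delta_i\nu_i^\star$; since $\delta_i\ge 0$ and $\delta_i^2/2\le 4\cdot\delta_i^2/2$, it suffices to prove the term-by-term bound $\nu_i^A\le 4\,\nu_i^\star$ for every generated update $i$, because then $\aaoi_A=\lim_{t\to\infty}\tfrac1t\sum_{i\le R(t)}\big(\tfrac{\delta_i^2}{2}+\delta_i\nu_i^A\big)\le 4\,\aaoi_{\pi^\star}$. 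By Lemma~\ref{lemma:lb-nu-star}, $\nu_i^\star\ge\nu_i^{\min}\bydef\min_{j\ge i}\{g_j+s_j\}-g_i$, so I will in fact aim for $\nu_i^A\le 4\,\nu_i^{\min}$. Fix $i$; the case $s_i=0$ is immediate, so assume $s_i>0$, and let $j^\star\bydef\arg\min_{j\ge i}\{g_j+s_j\}$, so that $g_{j^\star}-g_i\le\nu_i^{\min}$ and $s_{j^\star}\le\nu_i^{\min}$.

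The first structural observation is that throughout $[g_i,r_i^A)$ the update $i$ is available and not stale (if $\lambda(t)\ge g_i$ then $r_i^A\le t$), so $\gamma_i(t)>0$, hence $\Gamma(t)\neq\emptyset$, hence $A$ never idles on this interval --- it transmits continuously at unit rate. Decompose $\nu_i^A=w_i^A+d_i^A$ with $w_i^A=b_i^A-g_i$ and $d_i^A=r_i^A-b_i^A$. On $[b_i^A,r_i^A)$ every newly generated update has generation time $>g_i$, the only way $A$ changes what it transmits is a preemption by an update of no larger residual size (Remark~\ref{remark:greedy+}), and no update $\ge i$ completes before $r_i^A$; a minimality argument then shows that $A$ transmits only updates $\ge i$ on all of $[b_i^A,r_i^A)$, with no ``choose-next'' event. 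Following the chain of preempting updates $j_0\to j_1\to\cdots\to j_m$ (each arriving with size at most the residual size of the update it preempts, and $j_m$ running to completion at $r_i^A$), the work done telescopes to $d_i^A\le s_{j_0}$, where $j_0\ge i$ is the first update of positive index that $A$ starts, namely at time $b_i^A$.

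Next I would bound $s_{j_0}$ and $w_i^A$. For $s_{j_0}$: $j_0$ is started either because $A$ just picked the maximum-index available update (so $\gamma_{j_0}(b_i^A)\ge\gamma_k(b_i^A)$ for every available $k$), or because $j_0$ just arrived and preempted an update $k<i$ (so $s_{j_0}\le s_k(b_i^A)$, and by Remark~\ref{remark:greedy+} $\gamma_{j_0}(b_i^A)>\gamma_k(b_i^A)$); comparing the index of $j_0$ with $\gamma_i$ and, once $j^\star$ has arrived, with $\gamma_{j^\star}$, bounds $s_{j_0}$ by a constant times $\nu_i^{\min}$ (with the elapsed time $w_i^A$ entering the estimate, to be absorbed below). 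For $w_i^A$, the work $A$ spends on updates $<i$ before it ever commits to an update $\ge i$, two facts combine: (i) whenever $A$ (re)starts an update $k<i$ at a decision epoch in $[g_i,b_i^A)$ it picks the maximum-index update, so $\gamma_k\ge\gamma_i$ there, which --- since $g_k<g_i$ --- forces the residual size of $k$ to be strictly below $s_i$; and (ii) each completion of an update $k<i$ raises $\lambda(\cdot)$ to $g_k$, so the updates $<i$ completed during $[g_i,b_i^A)$ have strictly increasing generation times, and moreover each such completion instantly renders stale (index $0$), hence permanently ignored, every not-yet-served update with a smaller generation time. Combining (i) and (ii), the cumulative work on updates $<i$ telescopes against the increase of $\lambda$ and is controlled by $s_{j^\star}$ together with $g_{j^\star}-g_i$, i.e.\ by $O(\nu_i^{\min})$. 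Adding the bounds on $w_i^A$ and $d_i^A=s_{j_0}$ and tracking constants yields $\nu_i^A=w_i^A+d_i^A\le 4\,\nu_i^{\min}\le 4\,\nu_i^\star$.

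I expect the control of $w_i^A$ to be the crux: a priori there can be arbitrarily many small, recently generated updates $<i$ whose index exceeds $\gamma_i$, and the proof must exploit the coupling in fact (ii) --- that serving such an update advances $\lambda$ and thereby kills the indices of the remaining stale competitors --- to keep the wasted work $O(\nu_i^{\min})$ rather than unbounded, which is exactly the failure mode of plain SRPT in Example~\ref{ex:SRPT-not-OPT}. Once the non-idling structure of $[g_i,r_i^A)$ and the preemption-telescoping bound $d_i^A\le s_{j_0}$ are in place, the index comparison for $s_{j_0}$ and the accounting for $w_i^A$ are bookkeeping, and it is precisely this accounting (two doubling-type losses, one from residual work and one from the index comparison) that pins the competitive ratio at $4$.
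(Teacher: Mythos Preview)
Your reduction to a per-update inequality $\nu_i^A\le 4\,\nu_i^{\min}$ is the gap: that inequality is false. Take $\lambda(0^-)=-1$ and three updates, $0=(g_0,s_0)=(0,100)$, $1=(\epsilon,150)$, $2=(99,N)$ with $N$ large and $\epsilon$ small (append enough later updates to make $\aaoi$ well defined without disturbing the picture). SRPT$^+$ runs update $0$ on $[0,100]$ (updates $1,2$ arrive with larger size than $s_0(\cdot)$, so no preemption); at $t=100$ it has $\lambda=0$ and compares $\gamma_1(100)=\epsilon/150$ with $\gamma_2(100)=99/N$, hence for $\epsilon<99\cdot 150/N$ it starts update $2$ and completes it at $100+N$. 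Thus $b_1^+=100$, $r_1^+=100+N$, so $\nu_1^+=100+N-\epsilon$, while $\nu_1^{\min}=\min\{\epsilon+150,\,99+N\}-\epsilon=150$. The ratio $\nu_1^+/\nu_1^{\min}$ is therefore unbounded in $N$. In your notation $j_0=2$ and $s_{j_0}=N$; the index comparison $\gamma_{j_0}(b_i^+)\ge\gamma_i(b_i^+)$ only yields $s_{j_0}\le s_i\cdot\frac{g_{j_0}-\lambda}{g_i-\lambda}$, which blows up precisely when $g_i$ is close to $\lambda$ --- exactly the situation here. So the step ``comparing the index of $j_0$ with $\gamma_i$ \ldots bounds $s_{j_0}$ by a constant times $\nu_i^{\min}$'' does not go through, and no amount of absorbing $w_i^A$ repairs it.

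The paper does not attempt a per-update bound on $d_i^+$. It proves $w_i^+\le 2\nu_i^\star$ pointwise (Lemma~\ref{lemma:wait-delay-ub}, part~1), but for the $d$-part it only establishes the $\delta$-weighted aggregate inequality $\sum_i\delta_i d_i^+\le 2\sum_i\delta_i\nu_i^\star$ (part~2). The mechanism is to group updates into blocks $\cA_n$ sharing a common $b_i^+$; within a block all $d_i^+$ equal $d_{m_n}^+$, and the very index inequality $\gamma_{m_n}\ge\gamma_j$ that made SRPT$^+$ select the large update $m_n$ forces $s_j\ge s_{m_n}\cdot\theta_j$ with $\theta_j=(\sum_{i\le j}\delta_i)/(\sum_{i\le m_n}\delta_i)$, hence $\nu_j^\star\ge d_{m_n}^+\theta_j$; multiplying by $\delta_j$ and summing over the block gives the factor $2$. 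In the example above $d_1^+\approx N$ is enormous but $\delta_1=\epsilon$, so $\delta_1 d_1^+$ is absorbed by $\delta_2\nu_2^\star\approx 99\cdot N$ in the block sum. That weighting is exactly what your per-update plan throws away, and it is essential.
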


The main technical Lemma to prove Theorem \ref{thm:main-result} are as follows.
\begin{lemma} \label{lemma:wait-delay-ub}
	Let superscripts $+$ and $\star$ respectively denote SRPT$^+$, and 
	an optimal offline policy $\pi^\star$. 
	\begin{enumerate}
		\item For any update $i$, $w_i^+\le 2 \nu_i^\star$, where $w_i^+=b_i^+-g_i$, while $\nu_i^\star=r_i^\star-g_i$. 
		\item $\sum_{i=1}^{R(t)}\delta_i d_i^+\le 2\sum_{i=1}^{R(t)}\delta_i \nu_i^\star$, where $\delta_i=g_i-g_{i-1}$, 	while $d_i^+=r_i^+-b_i^+$.
	\end{enumerate}
\end{lemma}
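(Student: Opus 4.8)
By Lemma~\ref{lemma:aaoi} and \eqref{eq:aaoi-opt}, the averages $\aaoi_{\text{SRPT}^+}$ and $\aaoi_{\pi^\star}$ share the common non-negative term $\lim_{t\to\infty}\frac1t\sum_{i=1}^{R(t)}\frac{\delta_i^2}{2}$ and otherwise involve only $\sum_{i=1}^{R(t)}\delta_i w_i^+ +\sum_{i=1}^{R(t)}\delta_i d_i^+$ and $\sum_{i=1}^{R(t)}\delta_i\nu_i^\star$ respectively. Hence, once Lemma~\ref{lemma:wait-delay-ub} is proved, multiplying part~1 by $\delta_i\ge 0$ and summing and then adding part~2 gives $\sum_{i=1}^{R(t)}\delta_i(w_i^++d_i^+)\le 4\sum_{i=1}^{R(t)}\delta_i\nu_i^\star$; adding $\sum_{i=1}^{R(t)}\frac{\delta_i^2}{2}$ to both sides and using $\sum_{i=1}^{R(t)}\frac{\delta_i^2}{2}\le 4\sum_{i=1}^{R(t)}\frac{\delta_i^2}{2}$ yields $\aaoi_{\text{SRPT}^+}\le 4\,\aaoi_{\pi^\star}$, i.e.\ Theorem~\ref{thm:main-result}. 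So it suffices to prove the two inequalities of Lemma~\ref{lemma:wait-delay-ub}, and I describe how I would attack each. (All sums below are inside $\lim_{t\to\infty}\frac1t(\cdot)$ and run to $R(t)$.)

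\textbf{Part~1: $w_i^+\le 2\nu_i^\star$.} Fix $i$; if $b_i^+=g_i$ there is nothing to prove, so assume $b_i^+>g_i$. First I would record structural facts about $[g_i,b_i^+)$, all immediate from the definition of $b_i^+$ and Algorithm~\ref{algo:AoI-SRPT}: (i)~SRPT$^+$ is never idle there, since $t<b_i^+\le r_i^+$ forces $\lambda^+(t)<g_i$ and update $i$ is untouched, so $\gamma_i(t)>0$ and $i\in\Gamma(t)$; (ii)~SRPT$^+$ never preempts there, since any update generated strictly after $g_i$ has index $>i$ and starting it would make $b_i^+$ no later than its generation time, while update $i$ itself cannot preempt at $g_i$ (else $b_i^+=g_i$); (iii)~hence every update transmitted on $[g_i,b_i^+)$ has index $<i$ and is transmitted to completion, except possibly the one running at $b_i^+$ (which may be preempted there by a freshly generated index-$\ge i$ update). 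Next set $j^\star\in\arg\min_{j\ge i}\{g_j+s_j\}$; by Lemma~\ref{lemma:lb-nu-star}, $\nu_i^\star\ge\nu_i^{\min}=(g_{j^\star}-g_i)+s_{j^\star}$, so $g_{j^\star}-g_i\le\nu_i^\star$ \emph{and} $s_{j^\star}\le\nu_i^\star$. If $b_i^+\le g_{j^\star}$ then $w_i^+\le g_{j^\star}-g_i\le\nu_i^\star$; otherwise I would bound the length of $[g_{j^\star},b_i^+)$ by $O(s_{j^\star})$, so that $w_i^+=(g_{j^\star}-g_i)+(b_i^+-g_{j^\star})$ comes out $\le 2\nu_i^\star$.

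\textbf{The crux of Part~1} is that bound on $b_i^+-g_{j^\star}$. The ingredients: $j^\star\in\Gamma(t)$ for all $t\in[g_{j^\star},b_i^+)$ (as $\lambda^+(t)<g_i\le g_{j^\star}$ and $j^\star\ge i$ is untouched) with constant remaining size $s_{j^\star}$; at $g_{j^\star}$ the update in transmission has remaining size $<s_{j^\star}$ (else $j^\star$ would preempt it, contradicting $b_i^+>g_{j^\star}$); and every update SRPT$^+$ starts afresh on $(g_{j^\star},b_i^+)$ has index $<i$ and, being chosen over $j^\star$ while having generation time $<g_i\le g_{j^\star}$, has size strictly less than $s_{j^\star}$ by \eqref{eq:slope}. \emph{The main obstacle I anticipate} is that a priori SRPT$^+$ could ``dawdle'' over many such small low-index updates; the way out is observation~(iii): each of them is run to completion before the next one starts, so $\lambda^+$ jumps past its generation time, and feeding this monotonicity of $\lambda^+$ back into the index comparison $\gamma_{j_{m+1}}\ge\gamma_{j_m}$ at each switching instant forces the generation times and sizes of these successive updates to grow fast enough that their total transmission time is controlled by $s_{j^\star}$ (hence by $\nu_i^\star$), up to the constant in the statement. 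A careful handling of $\lambda^+$, of arbitrary tie-breaking, and of the single update transmitted at $g_{j^\star}$ is what will make this estimate delicate.

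\textbf{Part~2: $\sum_i\delta_i d_i^+\le 2\sum_i\delta_i\nu_i^\star$.} I would first prove the per-update bound $d_i^+\le s_{\ell_i}$, where $\ell_i\ (\ge i)$ is the update SRPT$^+$ begins transmitting at $b_i^+$: after $b_i^+$, every update SRPT$^+$ works on has index $\ge i$ (updates generated after $b_i^+$ have index $>i$, and SRPT$^+$ only ever switches to an update of smaller remaining size), so by the SRPT preemption discipline — precisely the property in Remark~\ref{remark:greedy+} that preemption does not postpone the earliest completion — the first completion after $b_i^+$ occurs within $s_{\ell_i}$ time, i.e.\ $r_i^+\le b_i^++s_{\ell_i}$. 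Since $b_i^+$ (hence $\ell_i$) is non-decreasing in $i$, the indices split into consecutive blocks on which $\ell_i$ is constant, so $\sum_i\delta_i d_i^+\le\sum_{\text{blocks}}s_{\ell}\cdot\big(g_{\text{(last idx)}}-g_{\text{(first idx)}-1}\big)$. Finally I would charge each block against $2$ times the corresponding partial sum of $\delta_i\nu_i^\star$, again via $j^{\min}_i\in\arg\min_{j\ge i}\{g_j+s_j\}$, Lemma~\ref{lemma:lb-nu-star}, and the greedy choice at $b_i^+$: if $j^{\min}_i$ is available at $b_i^+$ with $g_{\ell_i}\le g_{j^{\min}_i}$, comparing $\gamma_{\ell_i}(b_i^+)\ge\gamma_{j^{\min}_i}(b_i^+)$ gives $s_{\ell_i}\le s_{j^{\min}_i}\le\nu_i^\star$; if $j^{\min}_i$ is generated after $b_i^+$, then $g_{j^{\min}_i}+s_{j^{\min}_i}\le g_i+\nu_i^\star$ with the SRPT discipline forces $r_i^+\le g_i+\nu_i^\star$, hence $d_i^+\le\nu_i^\star$; and in the remaining sub-case — $j^{\min}_i$ available at $b_i^+$ but SRPT$^+$ preferring a later-arriving $\ell_i$ — one invokes Part~1's bound $b_i^+\le g_i+2\nu_i^\star$ and the $\delta_i$-weighting to recover the constant $2$. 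This last sub-case is the analog of the obstacle in Part~1 and is where most of the work will go.
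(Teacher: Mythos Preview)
Your overall plan and the structural observations (i)–(iii) in Part~1 are correct, and your block partition in Part~2 matches the paper's. The gaps are in the two places you yourself flag as obstacles, and in both the paper uses a different idea than the one you propose.

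\textbf{Part~1.} Your attack on the crux --- tracking the sequence of old updates and using a comparison ``$\gamma_{j_{m+1}}\ge\gamma_{j_m}$ at each switching instant'' to force fast growth --- does not close. Each old update $k$ started while $j^\star$ is present does satisfy $\gamma_k(t_k)\ge\gamma_{j^\star}(t_k)$ and hence $s_k<s_{j^\star}$, but nothing caps how many such $k$'s there are; feeding the monotonicity of $\lambda^+$ back into the index comparison only gives a multiplicative decay of $g_i-\lambda^+$ per step, which never forces $\lambda^+$ to reach $g_i$ in bounded total time. The paper's argument is instead a \emph{total-reduction contradiction} on a fixed window: once $f$ completes (no later than $g_i+\nu_i^\star$), look at the interval $(g_i+\nu_i^\star,\,g_i+2\nu_i^\star]$ of length exactly $\nu_i^\star$. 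If SRPT$^+$ spent this entire interval on updates $k<i$, then each such $k$ is chosen over $j^\star$ and hence has index at least $\gamma_{j^\star}$; since the total size processed is $\ge\nu_i^\star\ge s_{j^\star}$, the aggregate AoI reduction would have to be at least $g_i-\lambda^+(g_i+\nu_i^\star)$, which is impossible because every completed update has generation time $<g_i$. The key difference from your plan is that the paper fixes the window length first and argues on the \emph{sum} of reductions, rather than chasing a per-update recursion.

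\textbf{Part~2.} The block partition and the bound $d_i^+=d_{m_n}^+\le s_{m_n}$ are exactly what the paper uses. But your three-way case split on the position of $j_i^{\min}$ does not yield the factor~$2$: in your third sub-case ($j_i^{\min}$ present at $b_i^+$ but earlier than $\ell_i=m_n$) the index comparison goes the wrong way --- it only gives $s_{m_n}\le s_{j_i^{\min}}\cdot(g_{m_n}-\lambda)/(g_{j_i^{\min}}-\lambda)$ with the ratio $>1$ --- and invoking Part~1 with ``$\delta_i$-weighting'' does not repair this. The paper's route is different: for \emph{each} $j$ in the block $\cA_n$ (not for $j_i^{\min}$), the greedy choice at $b_{m_n}^+$ gives $s_j\ge s_{m_n}\cdot\theta_j$ where $\theta_j:=\big(\sum_{i=1_n}^{j}\delta_i\big)\big/\big(\sum_{i=1_n}^{m_n}\delta_i\big)$, and combining this with $d_{m_n}^+\le\min\{s_{m_n},\,\min_{i>m_n}(g_i-b_{m_n}^+)+s_i\}$ yields the per-update bound $\nu_j^\star\ge d_{m_n}^+\cdot\theta_j$ for every $j\in\cA_n$. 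The factor~$2$ then drops out of the elementary ``triangular'' inequality
\[
\sum_{j\in\cA_n}\delta_j\,\theta_j \;=\; \frac{\sum_{j}\delta_j\sum_{i\le j}\delta_i}{\sum_{i}\delta_i}\;\ge\;\frac{1}{2}\sum_{i\in\cA_n}\delta_i.
\]
This per-$j$ fractional bound inside a block is the idea your proposal is missing.
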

\begin{proof}
	See Appendix \ref{app:proof-lemma-wait-delay-ub}.
\end{proof}

\begin{proof}[Proof of Theorem \ref{thm:main-result}]
	From \eqref{eq:aaoi-general} and Lemma \ref{lemma:wait-delay-ub}, we get that the $\av{AoI}$ for SRPT$^+$ is
	\begin{align} \label{eq:for-remark}
		\aaoi_{\text{SRPT}^+} &\le \lim_{t\to\infty}\frac{1}{t}\left[\sum_{i=1}^{R(t)}\frac{\delta_{i}^2}{2}+\sum_{i=1}^{R(t)}\delta_i (2\nu_i^\star)+2\sum_{i=1}^{R(t)}\delta_i \nu_i^\star\right], \nonumber \\
		&= \lim_{t\to\infty}\frac{1}{t}\left[\sum_{i=1}^{R(t)}\frac{\delta_{i}^2}{2}+4\sum_{i=1}^{R(t)}\delta_i \nu_i^\star\right], \nonumber \\
		&\stackrel{(a)}{\le} \lim_{t\to\infty}\frac{4}{t}\left[\sum_{i=1}^{R(t)}\frac{\delta_{i}^2}{2}+\sum_{i=1}^{R(t)}\delta_i \nu_i^\star\right], \\
		&\stackrel{(b)}{=}4\cdot\aaoi_{\pi^\star}, \nonumber 
	\end{align}
	where in $(a)$ since $\delta_i^2\ge 0$, we upper bound it by $4\delta_i^2$, 
	and $(b)$ follows from \eqref{eq:aaoi-opt}. 
	\begin{remark} \label{remark:reason}
		Note that $\delta_i$'s are inter-generation time of updates that only depends on when the updates are generated, while $\nu_i^\star$ additionally depends on the sizes of updates (Lemma \ref{lemma:lb-nu-star}). Since the generation time of updates and their sizes can be arbitrary (independent of each other), in $(a)$ of \eqref{eq:for-remark}, $\delta_i\nu_i^\star$'s can be arbitrary large compared to $\delta_i^2$'s. \qedhere 
	\end{remark}
\end{proof}

Next, we analyze a simpler policy than SRPT$^+$, called SRPT$^L$, and show that its competitive ratio is at most $29$.

\section{SRPT$^L$ Policy}

Following any time instant $t$ (when the generation time of the latest completely transmitted update is $\lambda(t)$), if an update $i$ that is generated at  time $g_i>\lambda(t)$ is completely transmitted, the reduction in AoI is $g_i-\lambda(t)$. Clearly if $g_i$ is large (i.e. update $i$ is the latest generated update), the reduction in AoI is also large. Motivated by this fact, we consider a causal policy called \emph{SRPT$^L$} (Algorithm \ref{algo:AoI-LCFS}), defined as follows: 
At any time $t$, if no update is under transmission, SRPT$^L$ begins to transmit the latest  generated update $i^\star=\arg\max_i g_i$ (if $i^\star$ is yet to be transmitted completely, else wait until next update is generated). Else, if an update $i$ is under transmission at time $t$, then SRPT$^L$ preempts update $i$ and begins to transmit update $j$ (at time $t$), 
only if update $j$ is generated at time $t$ with size $s_j\le s_i(t)$.


\begin{algorithm}
	\caption{SRPT$^L$ Policy.} 
\label{algo:AoI-LCFS}
\begin{algorithmic} [1]
	\STATE At any time $t$, 
	\IF{an update $i$ is under transmission}
	\IF{a new update $j$ is generated with size $s_j(t)\le s_i(t)$}
	\STATE preempt update $i$ and begin to transmit update $j$.
	\ELSE
	\STATE continue transmitting update $i$.
	\ENDIF
	\ELSIF{latest generated update is yet to be transmitted completely} 
	\STATE begin to transmit the latest generated update. 
	\ELSE
	\STATE idle until a new update is generated.
	\ENDIF
\end{algorithmic}
\end{algorithm} 

Regarding SRPT$^L$, the fact that only a smaller update $j$ (compared to the remaining size of the update $i$ under transmission) can preempt update $i$, ensures that when SRPT$^L$ begins to transmit an update $i$, the earliest time when an update $j\ge i$ is completely transmitted (i.e. $r_i^L$), is never delayed due to preemption. 
In Lemma \ref{lemma:LCFS+properties}, we show some critical properties of SRPT$^L$.

\begin{lemma} \label{lemma:LCFS+properties}
	Let superscripts $L$ and $\star$ respectively denote SRPT$^L$ and $\pi^\star$. 
	\begin{enumerate}
		\item For any update $i$, $w_i^L\le \nu_i^{\min}\le \nu_i^\star$, where $w_i^L=b_i^L-g_i$, while $\nu_i^{\min}$ and $\nu_i^\star$ are as defined in Lemma \ref{lemma:lb-nu-star}.
		\item At the time instant $t$ when SRPT$^L$ begins to transmit an update $i$, its generation time $g_i$ is the latest among all updates generated until time $t$. That is, SRPT$^L$ either never begins to transmit an update $i$, or begins to transmit it at some time $t\in[g_i,g_{i+1})$.  
		\item If SRPT$^L$ begins to transmit an update $i$ (at time $b_i^L$), then $r_i^L=\min\{b_i^L+s_i,\min_{j\ge i+1}\{g_j+s_j\}\}$. Additionally, for such an update $i$,  $d_i^L=r_i^L-b_i^L\le \nu_i^{\min}\le \nu_i^\star$.
	\end{enumerate} 
\end{lemma}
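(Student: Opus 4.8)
I would prove the three claims in order, since each builds on the previous. For part~1, I would argue directly from the definition of SRPT$^L$ together with the structure of the preemption rule: when no update is under transmission and there exists an update $j$ with $g_j + s_j \le$ (the relevant deadline), the server cannot be idle. More carefully, I would show that SRPT$^L$ is never idle on the interval $[g_{i^{\min}}, \min_{j\ge i}\{g_j+s_j\})$ where $i^{\min}$ achieves the minimum in $\nu_i^{\min}$, because after $g_{i^{\min}}$ there is always an uncompleted update available (namely one with generation time $\ge g_i$ that has not yet been transmitted), so by the algorithm's ``begin to transmit the latest generated update'' clause the server works continuously; hence an update $j\ge i$ must complete by time $\min_{j\ge i}\{g_j+s_j\}$ at the latest if the server had started early enough — but the cleanest route is: $b_i^L$ is the first time the server starts some update $j\ge i$, and the server is forced to do so no later than $g_{i}$ itself if nothing $\ge i$ has completed (since at time $g_i$ the latest generated update is $\ge i$ and uncompleted). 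Thus $b_i^L \le g_i$?? — that is false in general; rather I would show $w_i^L = b_i^L - g_i$ is bounded by the amount of ``useful'' work forced before $b_i^L$, which telescopes to at most $\nu_i^{\min}$. The honest statement is that once update $i$ is generated, SRPT$^L$ immediately starts transmitting \emph{some} update with generation time $\ge g_i$ (the latest one), unless the server is already busy with such an update; in either case $b_i^L$ occurs no later than the moment the current update finishes, and a short exchange argument bounds this by $\nu_i^{\min}$.

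For part~2, I would observe that the only way SRPT$^L$ starts an update $i$ is via one of two clauses: the ``begin to transmit the latest generated update'' clause (which makes $g_i$ literally the maximum generation time so far, giving $t \in [g_i, g_{i+1})$), or the preemption clause (which requires update $i$ itself to be \emph{generated at time $t$}, so again $t = g_i \in [g_i, g_{i+1})$). Hence in both cases $t\in[g_i,g_{i+1})$, which is exactly the claim; I would also note this forces $w_i^L = b_i^L - g_i < \delta_{i+1}$, a fact likely reused later. For part~3, I would use the key robustness property stated in the paragraph before the lemma: after SRPT$^L$ starts update $i$ at $b_i^L$, it is only ever preempted by a strictly smaller (remaining-size) update, so the server never wastes time — the total remaining work on the ``current'' update is non-increasing across preemptions (SRPT-style), so either update $i$ itself completes at $b_i^L + s_i$ having never been preempted, or at some point an update $j \ge i+1$ arrives and eventually completes; in the latter case the first completion of an update $\ge i$ occurs at $\min_{j\ge i+1}\{g_j+s_j\}$ by the same ``no idling, non-increasing remaining work'' argument restricted to the busy period. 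This gives $r_i^L = \min\{b_i^L+s_i, \min_{j\ge i+1}\{g_j+s_j\}\}$. Then $d_i^L = r_i^L - b_i^L \le \min\{s_i, \min_{j\ge i+1}\{g_j+s_j\} - b_i^L\} \le \min\{s_i,\ \min_{j\ge i+1}\{g_j+s_j\} - g_i\}$ (using $b_i^L \ge g_i$), and this last quantity is at most $\min_{j\ge i}\{g_j+s_j\} - g_i = \nu_i^{\min} \le \nu_i^\star$ by Lemma~\ref{lemma:lb-nu-star}.

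The main obstacle, I expect, is part~1 — pinning down the bound $w_i^L \le \nu_i^{\min}$ rigorously. The subtlety is that before $b_i^L$ the server may be busy transmitting (and even completing) updates with generation time $< g_i$, or may have been idle; one must rule out that it idled too long or churned too long on old updates. The clean argument is: let $j^\star \ge i$ attain $\nu_i^{\min}$, i.e. $g_{j^\star} + s_{j^\star} = \min_{j\ge i}\{g_j+s_j\}$. Update $j^\star$ is generated at $g_{j^\star} \ge g_i$. At every time $t \in [g_{j^\star}, r_i^L)$ there is an uncompleted update with generation time $\ge g_i$ available (e.g. $j^\star$, until it or a later update completes), so SRPT$^L$ is never idle on this interval and, by the preemption rule, is always transmitting \emph{some} update with generation time $\ge g_i$ once it starts one. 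Hence $b_i^L \le g_{j^\star}$ actually may fail if the server is stuck on an old update at $g_{j^\star}$ — but then that old update has remaining size at most $s_{j^\star}$ would be needed for $j^\star$ to \emph{not} preempt... so I would instead bound: the old update in service at $g_{j^\star}$ has remaining size $\le s_{j^\star}$ (else $j^\star$ preempts it and $b_i^L = g_{j^\star}$), so it completes by $g_{j^\star} + s_{j^\star}$, whence $b_i^L \le g_{j^\star} + s_{j^\star}$, i.e. $w_i^L = b_i^L - g_i \le g_{j^\star} + s_{j^\star} - g_i = \nu_i^{\min}$. That case analysis is the crux, and I would present it as the core of the Appendix proof.
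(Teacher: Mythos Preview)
Your arguments for parts~2 and~3 match the paper's proof essentially line for line: part~2 is the two-clause case split you describe, and part~3 is exactly the computation $d_i^L = \min\{s_i,\min_{j\ge i+1}\{g_j+s_j\}-b_i^L\} \le \min\{s_i,\min_{j\ge i+1}\{g_j+s_j\}-g_i\} = \nu_i^{\min}$ via $b_i^L \ge g_i$.

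For part~1 your final argument (last paragraph) is correct, but the paper anchors the case analysis at $g_i$ rather than at $g_{j^\star}$, which is cleaner. At time $g_i$ there are only two possibilities: the server is idle (then $w_i^L=0$), or it is busy with some update $f$ with $g_f<g_i$. In the latter case the paper observes directly that $r_f^L=\min\{g_i+s_f(g_i),\ \min_{j\ge i}\{g_j+s_j\}\}\le g_i+\nu_i^{\min}$ (the SRPT-style ``remaining work non-increasing under preemption'' you invoke in part~3), and that $b_i^L\le r_f^L$ since once $f$ either completes or is preempted the next transmitted update has generation time $\ge g_i$. Your anchoring at $g_{j^\star}$ forces a third case (server already working on an update with generation time in $[g_i,g_{j^\star})$), which you do not state explicitly though it is trivial ($b_i^L\le g_{j^\star}$); and in your sub-case ``old update has remaining size $<s_{j^\star}$'' you should also note that if that old update is preempted before completing, the preempting update has generation time $>g_{j^\star}\ge g_i$, so again $b_i^L$ is bounded by the old update's would-be completion time $< g_{j^\star}+s_{j^\star}$. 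With those two small additions your route goes through; the paper's route simply avoids them.
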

\begin{proof}
	See Appendix \ref{app:proof-lemma-LCFS+properties}.
\end{proof}

\begin{remark}
	At any time $t$, while choosing the latest generated update $i^\star$ for transmission, SRPT$^L$ neither considers the update size $s_{i^\star}(t)$, nor the AoI at time $t$. 
\end{remark}

In Appendix \ref{app:example-LCFS}, we illustate the properties of SRPT$^L$ using the update generation sequence $\cI$ considered in Example \ref{ex:SRPT-not-OPT}.
Next, we show the main result of this section. 
\begin{theorem} \label{thm:LCFS}
	The competitive ratio of SRPT$^L$ (Algorithm \ref{algo:AoI-LCFS}) is $\textsc{CR}_{\text{SRPT}^L}\le 29$.
\end{theorem}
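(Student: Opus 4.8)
The plan is to bound $\aaoi_{\text{SRPT}^L}$ term-by-term against $\aaoi_{\pi^\star}$ using the decomposition \eqref{eq:aaoi-general}, exactly as in the proof of Theorem~\ref{thm:main-result}. From \eqref{eq:aaoi-general} we have
\begin{align*}
	\aaoi_{\text{SRPT}^L}=\lim_{t\to\infty}\frac{1}{t}\left[\sum_{i=1}^{R(t)}\frac{\delta_i^2}{2}+\sum_{i=1}^{R(t)}\delta_i w_i^L+\sum_{i=1}^{R(t)}\delta_i d_i^L\right].
\end{align*}
Parts~1 and~3 of Lemma~\ref{lemma:LCFS+properties} immediately give $w_i^L\le\nu_i^\star$, but only for those updates $i$ that SRPT$^L$ actually begins to transmit; the delay term $d_i^L\le\nu_i^\star$ likewise holds only for such updates. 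So the first step is to handle the updates that SRPT$^L$ never begins to transmit: for such an update $i$ we have $b_i^L=b_{i+1}^L$ and $r_i^L=r_{i+1}^L$, so $w_i^L$ and $d_i^L$ are inherited from a later update, and I would show (using Part~2, which forces SRPT$^L$ to start any transmitted update $j$ within $[g_j,g_{j+1})$) that these inherited quantities can still be charged to some $\nu_k^\star$ with $k\ge i$, losing at most a bounded additive amount in the $\delta_i$-weighting. This telescoping/charging argument over maximal blocks of "skipped" updates is where the constant grows beyond the $4$ of SRPT$^+$.

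The core estimate to establish is an analogue of Lemma~\ref{lemma:wait-delay-ub}: something like $\sum_{i=1}^{R(t)}\delta_i w_i^L\le c_1\sum_{i=1}^{R(t)}\delta_i\nu_i^\star$ and $\sum_{i=1}^{R(t)}\delta_i d_i^L\le c_2\sum_{i=1}^{R(t)}\delta_i\nu_i^\star$ for explicit constants $c_1,c_2$. For $w_i^L$: when SRPT$^L$ idles, it is because the latest generated update has already been completely transmitted, so $\lambda(t)\ge g_i$ for the relevant $i$; when it is busy, it is transmitting a later update whose completion time is controlled by Part~3, $r_j^L=\min\{b_j^L+s_j,\min_{k\ge j+1}\{g_k+s_k\}\}$. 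I would combine these to bound $b_i^L-g_i$ by a constant multiple of $\nu_i^{\min}\le\nu_i^\star$ plus, for skipped updates, the size of the interfering later update, which is itself $\le\nu_k^\star$ for some $k$. For $d_i^L$: Part~3 gives $d_i^L\le s_i$ when $i$ is transmitted, and $s_i\le\nu_i^{\min}+ (\text{slack})$; the subtlety is that SRPT$^L$, unlike SRPT$^+$, may start a large latest update and thereby block smaller useful ones, so the $d$-term needs the more careful block-charging, and this is the main source of the jump from $4$ to $29$. Assembling, $\aaoi_{\text{SRPT}^L}\le\lim\frac1t[\sum\frac{\delta_i^2}{2}+c_1\sum\delta_i\nu_i^\star+c_2\sum\delta_i\nu_i^\star]\le (1+c_1+c_2)\aaoi_{\pi^\star}$ by the same $\delta_i^2\le 4\delta_i^2$ trick as in \eqref{eq:for-remark} (taking the larger of the constant on $\sum\delta_i^2/2$ and $1+c_1+c_2$), and one checks $1+c_1+c_2\le 29$ with the constants obtained.

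The hard part will be the charging argument for the delay term $\sum_i\delta_i d_i^L$ over blocks of consecutive updates that SRPT$^L$ skips: because SRPT$^L$ commits to the latest-arrived update regardless of its size, a single large update can force $d^L$ to be large relative to the $\nu^{\min}$ of the specific update $i$ being charged, and one must show that the weighted sum still redistributes correctly onto $\{\delta_k\nu_k^\star\}$ without the accumulated constant blowing up. I expect this to require a careful case split on whether, during a block, SRPT$^L$ is idling (so $\lambda(t)$ has caught up and the contribution is zero) versus transmitting (so Part~3 pins down $r^L$), mirroring but complicating the structure of Appendix~\ref{app:proof-lemma-wait-delay-ub}. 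Everything else — the term-by-term comparison, the final $\limsup$ manipulation, and invoking Lemma~\ref{lemma:lb-nu-star} — is routine.
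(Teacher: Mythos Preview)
Your high-level skeleton is right---use the decomposition \eqref{eq:aaoi-general}, then apply Lemma~\ref{lemma:LCFS+properties} for the easy terms---but two things need correcting.

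First, a misreading: Part~1 of Lemma~\ref{lemma:LCFS+properties} gives $w_i^L\le\nu_i^{\min}\le\nu_i^\star$ for \emph{every} update $i$, not only those SRPT$^L$ begins to transmit (the proof merely cases on whether some update is under transmission at $g_i$). So $\sum_i\delta_i w_i^L\le\sum_i\delta_i\nu_i^\star$ holds outright with $c_1=1$; there is no charging to do for $w$. The paper then partitions all updates into $G_=^L=\{i:b_i^L=g_i\}$ and $G_>^L=\{i:b_i^L>g_i\}$. For $i\in G_=^L$ SRPT$^L$ necessarily begins $i$ at $g_i$, so Part~3 applies and $d_i^L\le\nu_i^\star$; the only hard piece is $\sum_{i\in G_>^L}\delta_i d_i^L$.

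Second, and this is the real gap: your plan is to bound that hard piece by $c_2\sum_i\delta_i\nu_i^\star$ via a block-charging argument, but you give no mechanism for producing a finite $c_2$, and the paper does \emph{not} proceed this way. Instead it bounds $\sum_{i\in G_>^L}\delta_i d_i^L$ directly by $28\,\aaoi_{\pi^\star}$, i.e.\ by a constant times the integral $\int_0^t\Delta_{\pi^\star}(\tau)\,d\tau$ rather than by $\sum_i\delta_i\nu_i^\star$. Concretely: partition $G_>^L$ into blocks $\cA_n$ with common $b^L$, and for each block identify a time interval over which even $\pi^\star$ cannot have completed any update past $0_n$ (so $\Delta_{\pi^\star}(t)\ge t-g_{0_n}$ there), giving a \emph{quadratic} lower bound on the integral of $\Delta_{\pi^\star}$ over that interval. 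A case split on whether $g_{m_n}-g_{0_n}\ge\nu_{k_n}^{\min}$, together with a disjointness argument for the associated intervals (losing a factor~$2$ in one case because only alternate intervals are disjoint), yields the constants $8+20=28$. This argument essentially exploits the $\delta_i^2/2$ contribution sitting inside $\aaoi_{\pi^\star}$, which a pure $\sum_i\delta_i\nu_i^\star$ charging cannot access. Your sketch lacks this ingredient; in particular, when SRPT$^L$ commits to a large latest update and $d_i^L$ is large for all the skipped $i\in G_>^L$, the $\nu_i^\star$ for those $i$ can be small, and you have not explained how the charge is absorbed.
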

\begin{proof}
	See Appendix \ref{app:proof-thm-LCFS}.
\end{proof}

\section{Conclusions}
In this paper, we have considered a  scheduling problem for a single source-destination pair. The goal is to find a causal scheduling policy that minimizes the average age of information (AoI). Unlike most prior works that assume update generation sequence to be either deterministic or stochastic, we have considered an arbitrary arrival model, where updates of arbitrary sizes arrive at the source at arbitrary time instants, and at any time, the possible decisions are whether to preempt the update under transmission on arrival of a new update, and if no update is under transmission, then which update (among the available updates) to transmit. We have proposed two causal scheduling policies called SRPT$^+$ and SRPT$^L$, and upper bounded their competitive ratio by comparing against an optimal offline policy. In particular, we have shown that SRPT$^+$ and SRPT$^L$ have constant competitive ratios, at most $4$ and $29$, respectively. The remaining open question is: whether similar (constant) competitive ratio guarantees can be shown for causal policies in systems with multiple sources.

\bibliography{reflist,refs} 

\begin{thebibliography}{10}

\bibitem{sincronia}
Saksham Agarwal, Shijin Rajakrishnan, Akshay Narayan, Rachit Agarwal, David
  Shmoys, and Amin Vahdat.
\newblock Sincronia: Near-optimal network design for coflows.
\newblock In {\em Proceedings of the 2018 Conference of the ACM Special
  Interest Group on Data Communication}, SIGCOMM '18, pages 16--29, New York,
  NY, USA, 2018. ACM.
\newblock URL: \url{http://doi.acm.org/10.1145/3230543.3230569}, \href
  {https://doi.org/10.1145/3230543.3230569}
  {\path{doi:10.1145/3230543.3230569}}.

\bibitem{bansal2009weighted}
Nikhil Bansal and Ho-Leung Chan.
\newblock Weighted flow time does not admit o (1)-competitive algorithms.
\newblock In {\em Proceedings of the twentieth annual ACM-SIAM symposium on
  Discrete algorithms}, pages 1238--1244. SIAM, 2009.

\bibitem{bhimaraju2020non}
Akhil Bhimaraju, Debanuj Nayak, and Rahul Vaze.
\newblock Non-clairvoyant scheduling of coflows.
\newblock In {\em 2020 18th International Symposium on Modeling and
  Optimization in Mobile, Ad Hoc, and Wireless Networks (WiOPT)}, pages 1--8.
  IFIP, 2020.

\bibitem{champati2018statistical}
Jaya~Prakash Champati, Hussein Al-Zubaidy, and James Gross.
\newblock Statistical guarantee optimization for age of information for the
  d/g/1 queue.
\newblock In {\em IEEE INFOCOM 2018-IEEE Conference on Computer Communications
  Workshops (INFOCOM WKSHPS)}, pages 130--135. IEEE, 2018.

\bibitem{twoapprox}
Mosharaf Chowdhury, Samir Khuller, Manish Purohit, Sheng Yang, and Jie You.
\newblock Near optimal coflow scheduling in networks.
\newblock In {\em The 31st ACM Symposium on Parallelism in Algorithms and
  Architectures}, SPAA '19, pages 123--134, New York, NY, USA, 2019. ACM.
\newblock URL: \url{http://doi.acm.org/10.1145/3323165.3323179}, \href
  {https://doi.org/10.1145/3323165.3323179}
  {\path{doi:10.1145/3323165.3323179}}.

\bibitem{goel1999stochastic}
Ashish Goel and Piotr Indyk.
\newblock Stochastic load balancing and related problems.
\newblock In {\em 40th Annual Symposium on Foundations of Computer Science
  (Cat. No. 99CB37039)}, pages 579--586. IEEE, 1999.

\bibitem{graham1966bounds}
Ronald~L Graham.
\newblock Bounds for certain multiprocessing anomalies.
\newblock {\em Bell system technical journal}, 45(9):1563--1581, 1966.

\bibitem{hall1997scheduling}
Leslie~A Hall, Andreas~S Schulz, David~B Shmoys, and Joel Wein.
\newblock Scheduling to minimize average completion time: Off-line and on-line
  approximation algorithms.
\newblock {\em Mathematics of operations research}, 22(3):513--544, 1997.

\bibitem{inoue2019general}
Yoshiaki Inoue, Hiroyuki Masuyama, Tetsuya Takine, and Toshiyuki Tanaka.
\newblock A general formula for the stationary distribution of the age of
  information and its application to single-server queues.
\newblock {\em IEEE Transactions on Information Theory}, 65(12):8305--8324,
  2019.

\bibitem{kadota2019minimizing}
Igor Kadota and Eytan Modiano.
\newblock Minimizing the age of information in wireless networks with
  stochastic arrivals.
\newblock {\em IEEE Transactions on Mobile Computing}, 20(3):1173--1185, 2019.

\bibitem{kadota2018scheduling}
Igor Kadota, Abhishek Sinha, Elif Uysal-Biyikoglu, Rahul Singh, and Eytan
  Modiano.
\newblock Scheduling policies for minimizing age of information in broadcast
  wireless networks.
\newblock {\em IEEE/ACM Transactions on Networking}, 26(6):2637--2650, 2018.

\bibitem{kaul2012real}
Sanjit Kaul, Roy Yates, and Marco Gruteser.
\newblock Real-time status: How often should one update?
\newblock In {\em 2012 Proceedings IEEE INFOCOM}, pages 2731--2735. IEEE, 2012.

\bibitem{kaul2012status}
Sanjit~K Kaul, Roy~D Yates, and Marco Gruteser.
\newblock Status updates through queues.
\newblock In {\em 2012 46th Annual Conference on Information Sciences and
  Systems (CISS)}, pages 1--6. IEEE, 2012.

\bibitem{kavitha2021controlling}
Veeraruna Kavitha and Eitan Altman.
\newblock Controlling packet drops to improve freshness of information.
\newblock In {\em International Conference on Network Games, Control and
  Optimization}, pages 60--77. Springer, 2021.

\bibitem{khuller2019select}
Samir Khuller, Jingling Li, Pascal Sturmfels, Kevin Sun, and Prayaag Venkat.
\newblock Select and permute: An improved online framework for scheduling to
  minimize weighted completion time.
\newblock {\em Theoretical Computer Science}, 795:420--431, 2019.

\bibitem{leonardi2007approximating}
Stefano Leonardi and Danny Raz.
\newblock Approximating total flow time on parallel machines.
\newblock {\em Journal of Computer and System Sciences}, 73(6):875--891, 2007.

\bibitem{multisource}
Mohammad Moltafet, Markus Leinonen, and Marian Codreanu.
\newblock Average age of information for a multi-source m/m/1 queueing model
  with packet management.
\newblock In {\em 2020 IEEE International Symposium on Information Theory
  (ISIT)}, pages 1765--1769, 2020.
\newblock \href {https://doi.org/10.1109/ISIT44484.2020.9174099}
  {\path{doi:10.1109/ISIT44484.2020.9174099}}.

\bibitem{saurav20223}
Kumar Saurav.
\newblock 3-competitive policy for minimizing age of information in
  multi-source m/g/1 queuing model.
\newblock {\em arXiv preprint arXiv:2201.03502}, 2022.

\bibitem{saurav2021minimizing}
Kumar Saurav and Rahul Vaze.
\newblock Minimizing the sum of age of information and transmission cost under
  stochastic arrival model.
\newblock In {\em IEEE INFOCOM 2021 - IEEE Conference on Computer
  Communications}, pages 1--10, 2021.
\newblock \href {https://doi.org/10.1109/INFOCOM42981.2021.9488746}
  {\path{doi:10.1109/INFOCOM42981.2021.9488746}}.

\bibitem{saurav2022scheduling}
Kumar Saurav and Rahul Vaze.
\newblock Scheduling to minimize age of information with multiple sources.
\newblock {\em arXiv preprint arXiv:2204.02953}, 2022.

\bibitem{schrage1968proof}
Linus Schrage.
\newblock A proof of the optimality of the shortest remaining processing time
  discipline.
\newblock {\em Operations Research}, 16(3):687--690, 1968.

\bibitem{scully2020simple}
Ziv Scully, Mor Harchol-Balter, and Alan Scheller-Wolf.
\newblock Simple near-optimal scheduling for the m/g/1.
\newblock {\em Proceedings of the ACM on Measurement and Analysis of Computing
  Systems}, 4(1):1--29, 2020.

\bibitem{shafiee}
M.~{Shafiee} and J.~{Ghaderi}.
\newblock An improved bound for minimizing the total weighted completion time
  of coflows in datacenters.
\newblock {\em IEEE/ACM Transactions on Networking}, 26(4):1674--1687, Aug
  2018.
\newblock \href {https://doi.org/10.1109/TNET.2018.2845852}
  {\path{doi:10.1109/TNET.2018.2845852}}.

\bibitem{sun2017update}
Yin Sun, Elif Uysal-Biyikoglu, Roy~D Yates, C~Emre Koksal, and Ness~B Shroff.
\newblock Update or wait: How to keep your data fresh.
\newblock {\em IEEE Transactions on Information Theory}, 63(11):7492--7508,
  2017.

\bibitem{multisourceyates}
Roy~D Yates and Sanjit~K Kaul.
\newblock The age of information: Real-time status updating by multiple
  sources.
\newblock {\em IEEE Transactions on Information Theory}, 65(3):1807--1827,
  2018.

\end{thebibliography}

\appendix
\section{Proof of Lemma \ref{lemma:aaoi}} \label{appendix:proof-lemma-aaoi} 
\begin{proof}
By definition, $\aaoi_\pi=\lim_{t\to\infty}\int_{0}^t\Delta_\pi(\tau)d\tau/t$. The main idea for showing \eqref{eq:aaoi-general} 
is to 
express the 
integral $\lim_{t\to\infty}\int_{0}^{t}\Delta_\pi(\tau)d\tau$ 
for a policy $\pi$ (causal or offline) in terms of $\delta_i$'s and $\nu_i^\pi$'s. 
As shown in Figure \ref{fig:age-plot-notations}, for a fixed policy $\pi$, the area under its AoI plot (which is equal to the integral $\lim_{t\to\infty}\int_{0}^{t}\Delta_\pi(\tau)d\tau$) can be partitioned into segments of area $\delta_i^2/2+\delta_i \nu_i^\pi$, corresponding to each update $i$. 
Thus, summing the area of the segments across all updates, we get $\lim_{t\to\infty}\int_{0}^{t}\Delta_\pi(\tau)d\tau=\sum_{i=1}^{R(t)}(\delta_i^2/2+\delta_i \nu_i^\pi)$, where $R(t)$ denotes the number of updates generated at the source until time $t$. 
Thus,
\begin{align*} 
	\aaoi_\pi=\lim_{t\to\infty}\frac{1}{t}\int_{0}^t\Delta_\pi(\tau)d\tau&=\lim_{t\to\infty}\frac{1}{t}\sum_{i=1}^{R(t)}\left[\frac{\delta_{i}^2}{2}+\delta_i\nu_i^\pi\right],  \\
	&=\lim_{t\to\infty}\frac{1}{t}\sum_{i=1}^{R(t)}\left[\frac{\delta_{i}^2}{2}+\delta_i(w_i^\pi+d_i^\pi)\right], \nonumber \\ 
	&=\lim_{t\to\infty}\frac{1}{t}\left[\sum_{i=1}^{R(t)}\frac{\delta_{i}^2}{2}+\sum_{i=1}^{R(t)}\delta_i w_i^\pi+\sum_{i=1}^{R(t)}\delta_i d_i^\pi\right]. \qedhere 
\end{align*}
\end{proof}

\section{Proof of Lemma \ref{lemma:wait-delay-ub}}
\label{app:proof-lemma-wait-delay-ub}
Before discussing the detailed proof of Lemma \ref{lemma:wait-delay-ub} in Subsections \ref{app:proof-lemma-SRPT+-1} and \ref{app:proof-lemma-SRPT+-2}, we provide its proof sketch.
\begin{proof}[Proof Sketch]
	(1) $\nu_i^\star=r_i^\star-g_i$ implies that $r_i^\star=g_i+\nu_i^\star$. Thus, there exists an update $j^\star\ge i$ that the optimal offline policy $\pi^\star$ completes transmitting at time $r_i^\star=g_i+\nu_i^\star$. 
	To prove the first result, we consider the following cases.
	i) \emph{No update is under transmission at time $g_i$.} Then the transmission of update $i$ starts immediately at generation, which implies $w_i^+=0\le \nu_i^\star$.
	ii) \emph{An update $f$ is under transmission at time $g_i$.} We consider the following two sub-cases.
	
	a) \emph{Under SRPT$^+$, update $f$ is preempted by some update $j\ge i$.} We show that if update $f$ is preempted by some update $j\ge i$, then it must happen until time $g_{j^\star}$ (generation time of update $j^\star$ that $\pi^\star$ completely transmits until time $r_{j^\star}^\star=g_i+\nu_i^\star=r_i^\star$). 
	Since $g_{j^\star}\le r_{j^\star}^\star=g_i+\nu_i^\star$ (by definition), we get that $b_i^+\le g_i+\nu_i^\star$. Thus $w_i^+=b_i^+-g_i\le\nu_i^\star$.
	
	b) \emph{Under SRPT$^+$, update $f$ is not preempted, i.e., its transmission completes (at time $r_f^+$).}  Since update $f$ is not preempted, its size $s_f(g_j)=s_f(g_i)-(g_j-g_i)<s_j$, $\forall j\ge i$. Thus, $s_f(g_i)<\min_{j\ge i}\{g_j+s_j\}-g_i=\nu_i^{\min}\le\nu_i^\star$ (Lemma \ref{lemma:lb-nu-star}), which implies that the completion time of transmission of update $f$ (under SRPT$^+$) is $r_f^+\le g_i+\nu_i^\star$. 
	Following this, to show that $w_i^+\le 2\nu_i^\star$ (i.e. $b_i^+\le g_i+2\nu_i^\star$), we show that in interval $(r_f^+,g_i+2\nu_i^\star]$ (which is of length greater than $\nu_i^\star$), SRPT$^+$ cannot be transmitting updates generated prior to update $i$ 
	all the time (and must begin to transmit an update $j\ge i$ at time $b_i^+\le 2\nu_i^\star$). 
	In particular, we show that 
	the sum of the size of updates $k<i$ 
	with index $\gamma_k(t)\ge \gamma_{j^\star}(t)$ (at any time $t\in(r_f^+,g_i+2\nu_i^\star]$) cannot be greater than $\nu_i^\star$. This is because the size of $j^\star$ is at most $\nu_i^\star$, and the reduction in AoI on directly transmitting $j^\star\ge i$ (after $r_f^+$) is greater than that possible if all updates $k<i$ 
	are completely transmitted.
	
	
	(2) For the second result, we partition the set of updates generated at the source into subsets $\cA_1, \cA_2, \cdots$, such that $b_i^+=b_{i'}^+$, (i.e., under SRPT$^+$, the earliest time instant when the transmission of an update $j\ge i$ begins is equal to the earliest time instant when the transmission of an update $j'\ge i'$ begins), if and only if $i,j\in\cA_n$ for some $n$. Then, we show that for each $n$, $\sum_{i\in\cA_n} \delta_i d_i^+\le 2\sum_{i\in\cA_n}\delta_i\nu_i^\star$. The intuition for this result is captured in Figure \ref{fig:rect}.
	%
\end{proof}
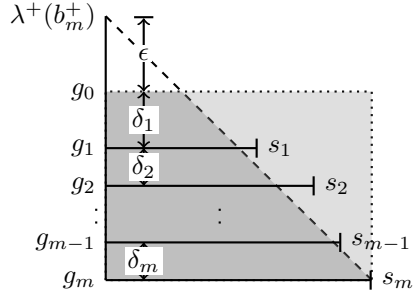
\begin{figure}
	\begin{center}
		\begin{tikzpicture}[thick,scale=0.5, every node/.style={scale=1}]
        \draw[dotted] (0,0) rectangle (7,5);
        \draw[dashed] (0,7) node[left]{$\lambda^+(b_m^+)$} to (7,0);
        
        \fill[gray,opacity=0.5] (0,0) to (0,5) to (2,5) to (7,0) to (0,0);
        \fill[gray,opacity=0.25] (7,0) to (7,5) to (2,5) to (7,0);
        
        \draw (0,0) node[left]{$g_m$} to (0,7);
        \draw[-|] (0,0) to (7,0) node[right]{$s_m$};
        \draw[-|] (0,1) node[left]{$g_{m-1}$} to (6.2,1) node[right]{$s_{m-1}$};
        \draw[-|] (0,2.5) node[left]{$g_2$} to (5.5,2.5) node[right]{$s_2$};
        \draw[-|] (0,3.5) node[left]{$g_1$} to (4,3.5) node[right]{$s_1$};
        \draw (0,5) node[left]{$g_0$}; 
        
        \draw[<->] (1,0) -- (1,1) node[rectangle,inner sep=1pt,midway,fill=white]{$\delta_{m}$};
        \draw[<->] (1,2.5) -- (1,3.5) node[rectangle,inner sep=1pt,midway,fill=white]{$\delta_{2}$};
        \draw[<->] (1,3.5) -- (1,5) node[rectangle,inner sep=1pt,midway,fill=white]{$\delta_{1}$};
        \draw[<->|] (1,5) -- (1,7) node[rectangle,inner sep=1pt,midway,fill=white]{$\epsilon$};
		
		\draw[loosely dotted] (-0.25,1.5) to (-0.25,2.1);
		\draw[loosely dotted] (3,1.5) to (3,2.1);
		
		\end{tikzpicture}
		\caption{Slope of the dashed line represents the index \eqref{eq:slope} for update $m$ (latest generated update in $\cA_n$) at time $b_m^+$, which is largest among all updates in $\cA_n$. Note that $\sum_{i\in\cA_n}\delta_i d_i^+$ is at most equal to the area of the rectangular region (total shaded region), which is at most twice the area of the darker shaded region, a lower bound on $\sum_{i\in\cA_n}\delta_i \nu_i^\star$. 
		\vspace{-4ex} 
		} 
		\label{fig:rect} 
	\end{center}
\end{figure}

\subsection{For each update $i$, $w_i^+\le 2 \nu_i^\star$, where $w_i^+=b_i^+-g_i$, while $\nu_i^\star=r_i^\star-g_i$.} \label{app:proof-lemma-SRPT+-1}
\begin{proof}
	If no update is under transmission at time $g_i$ (the generation time of update $i$), then SRPT$^+$ begins to transmit update $i$ immediately after generation. Thus, $b_i^+=g_i$, which implies $w_i^+=b_i^+-g_i=0\le 2\nu_i^\star$ (since $\nu_i^\star\ge 0$). 
	Therefore, to show that $w_i^+\le 2 \nu_i^\star$, it is sufficient to consider the case where SRPT$^+$ is transmitting some update $f$ at time $g_i$, 
	when update $i$ gets generated. 

Recall that SRPT$^+$ preempts the update $f$ under transmission 
whenever an update $j$ is generated with size $s_j\le s_f(g_j)$. 
Therefore, under SRPT$^+$, either update $f$ is completely transmitted at time $g_i+s_f(g_i)$, or it gets preempted, and some update $j\ge i$ begins transmission at time $g_j$. 
Thus, the earliest time instant when SRPT$^+$ completes the transmission of update $f$ (which is under transmission at time $g_i$), or an update $j$ (generated at $g_j\ge g_i$) is 
$r_f^+=\min\{g_i+s_f(g_i),\min_{j\ge i}\{g_j+s_j\}\}\le \min_{j\ge i}\{g_j+s_j\}\le r_i^\star$, where the last inequality follows from Lemma \ref{lemma:lb-nu-star}. Since $\nu_i^\star=r_i^\star-g_i$, we get that $r_f^+\le r_i^\star=g_i+\nu_i^\star$. 
Now, consider the following two complementary cases.
\subsubsection{Update $f$ is preempted, and an update $j\ge i$ is completely transmitted at time $r_f^+\le r_i^\star=g_i+\nu_i^\star$.} 

By definition, $w_i^+=b_i^+-g_i\le r_i^+-g_i$. Also, at $r_f^+$, an update $j\ge i$ is completely transmitted. Therefore, $r_i^+=r_f^+$. 
Combining these two facts, we get $w_i^+\le r_f^+-g_i\le r_i^\star-g_i=\nu_i^\star\le 2\nu_i^\star$.

\subsubsection{Update $f$ is not preempted, i.e. the transmission of update $f$ itself completes at time $r_f^+\le r_i^\star=g_i+\nu_i^\star$.}
Let $w_i^+>2\nu_i^\star$, i.e., until time $g_i+2\nu_i^\star$, under SRPT$^+$, only the updates generated strictly before $g_i$ are under transmission. 
With SRPT$^+$, an update $j$ with generation time $g_j<g_f$ can never be under transmission at time $t\ge r_f^+$ (because for such an update $j$, the index $\gamma_j(t)=0$). Thus, the assumption that $w_i^+>2\nu_i^\star$ implies that in interval $(r_f^+,g_i+2\nu_i^\star]$, SRPT$^+$ begins to transmit only those updates
that got generated in interval $(g_f,g_i)$ are under transmission. 

Note that in interval $(r_f^+,g_i+2\nu_i^\star]$, update $i$ is available at the source, and by definition, SRPT$^+$ never idles when a new update (compared to the completely transmitted updates) is available. 
Thus, the fact that SRPT$^+$ does not begin transmitting update $j\ge i$ in interval $(r_f^+,g_i+2\nu_i^\star]$, implies that at each time instant $t\in(r_f^+,g_i+2\nu_i^\star]$, SRPT$^+$ is transmitting some update generated in interval $(g_f,g_i)$.
However, as we show next in Proposition \ref{prop:not-busy}, this cannot be true, contradicting the assumption that $w_i^+>2\nu_i^\star$. Thus, we conclude that $w_i^+\le 2\nu_i^\star$ in this case as well.
\end{proof}
\begin{proposition} \label{prop:not-busy} 
	Under SRPT$^+$, there exists some time instant $t\in(r_f^+,g_i+2\nu_i^\star]$, when none of the updates generated in interval $(g_f,g_i)$ is under transmission.
\end{proposition}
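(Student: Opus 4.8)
The plan is to argue by contradiction: assume that throughout the \emph{entire} interval $(r_f^+,g_i+2\nu_i^\star]$ the policy SRPT$^+$ is always transmitting some update generated in $(g_f,g_i)$, and derive a contradiction from the length of this interval. The first step is to pin that length down. In the case at hand update $f$ is never preempted, so every update $j\ge i$ generated while $f$ is under transmission has $s_j> s_f(g_j)=s_f(g_i)-(g_j-g_i)$, i.e.\ $g_j+s_j>g_i+s_f(g_i)$, and for $j\ge i$ generated after $f$ completes one trivially has $g_j+s_j> r_f^+=g_i+s_f(g_i)$; hence $s_f(g_i)<\nu_i^{\min}\le\nu_i^\star$ (Lemma \ref{lemma:lb-nu-star}) and $r_f^+=g_i+s_f(g_i)<g_i+\nu_i^{\min}$. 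Therefore $(r_f^+,g_i+2\nu_i^\star]$ has length $2\nu_i^\star-s_f(g_i)>2\nu_i^\star-\nu_i^{\min}\ge\nu_i^\star$, \emph{strictly} more than $\nu_i^\star$.

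Next I would use the standing hypothesis of Lemma \ref{lemma:wait-delay-ub} in this sub-case, namely $w_i^+>2\nu_i^\star$, i.e.\ $b_i^+>g_i+2\nu_i^\star$: no update $j\ge i$ even \emph{begins} transmission before $g_i+2\nu_i^\star$, so $\lambda(t)<g_i$ throughout $(r_f^+,g_i+2\nu_i^\star]$, which gives $\gamma_i(t)>0$, keeps update $i$ available, and forbids SRPT$^+$ from idling there. Consequently, if Proposition \ref{prop:not-busy} failed, SRPT$^+$ would spend the whole interval — of length $>\nu_i^\star$ — transmitting updates generated in $(g_f,g_i)$. I would then fix the update $j^\star\ge i$ that $\pi^\star$ completes at $r_i^\star=g_i+\nu_i^\star$, so $g_{j^\star}+s_{j^\star}\le r_i^\star$, whence $s_{j^\star}\le\nu_i^\star$ and $g_{j^\star}<g_i+\nu_i^\star$; under the failure assumption $j^\star$ is never completed by SRPT$^+$ in this interval, so once generated it stays available with $\gamma_{j^\star}(t)=(g_{j^\star}-\lambda(t))/s_{j^\star}>0$.

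The crux is then to bound by $\nu_i^\star$ the total time SRPT$^+$ spends on updates of $(g_f,g_i)$ over $(r_f^+,g_i+2\nu_i^\star]$, which contradicts the previous paragraph. Two structural facts drive this. (i) Whenever SRPT$^+$ completes an update $k$ with $g_k\in(g_f,g_i)$, the value $\lambda$ jumps to $g_k$, so every update generated no later than $g_k$ is permanently discarded; hence the updates of $(g_f,g_i)$ that get completed in the interval have strictly increasing generation times $g_f<g_{k_1}<g_{k_2}<\cdots<g_i$. (ii) At the start of each maximal sub-interval between two consecutive such completions the update SRPT$^+$ picks is the maximum-index available one (Remark \ref{remark:greedy+}), so its index is at least $\gamma_{j^\star}$ and at least $\gamma_i$; since $g_k<g_i\le g_{j^\star}$ this forces its remaining size to be $<s_{j^\star}\le\nu_i^\star$, and, because transmitting $j^\star$ directly would yield an AoI reduction $g_{j^\star}-\lambda\ge g_i-\lambda$ that strictly exceeds anything attainable by completing updates generated before $g_i$, a telescoping argument over the successive sub-intervals (using Remark \ref{remark:greedy+} to ensure that preemptions never delay the relevant completions) collapses the per-sub-interval bounds into the single bound $\nu_i^\star$ on the cumulative transmission time. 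This contradiction establishes Proposition \ref{prop:not-busy}.

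I expect the main obstacle to be making fact (ii) fully rigorous. The delicate points are: controlling the SRPT-style preemptions by \emph{newly} generated updates inside a sub-interval, so that the ``earliest completion of the picked update'' guarantee of Remark \ref{remark:greedy+} applies cleanly; handling the case in which $j^\star$ is generated only partway through $(r_f^+,g_i+2\nu_i^\star]$, so that on the initial portion the greedy comparison must be made against update $i$ (or whichever update $\ge i$ is already available) and then stitched together with the $j^\star$-comparison on the remaining portion; and, above all, the bookkeeping that turns the sequence of per-sub-interval greedy inequalities into the \emph{single} clean bound $\nu_i^\star$ rather than a sum that could a priori grow with the number of sub-intervals — this is exactly where the strict inequality $g_k<g_i\le g_{j^\star}$ and the estimate $s_{j^\star}\le\nu_i^\star$ must be combined.
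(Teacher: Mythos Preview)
Your overall strategy --- contradiction, identifying $j^\star$ with $s_{j^\star}\le\nu_i^\star$, greedy index comparison, telescoping AoI reductions --- is the paper's strategy. But the paper makes one simplification you missed, and the inequality you target in your ``crux'' step is the wrong one.

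The paper does not work on the full interval $(r_f^+,\,g_i+2\nu_i^\star]$. It restricts attention to the sub-interval $(g_i+\nu_i^\star,\,g_i+2\nu_i^\star]$, which has length exactly $\nu_i^\star$ and throughout which $j^\star$ is already generated (since $g_{j^\star}\le g_i+\nu_i^\star$). This single move eliminates all three difficulties you flagged at the end: there is no ``stitching'' because $j^\star$ is present from the start of the sub-interval; preemptions by newly generated updates $j\ge i$ are already ruled out by the standing hypothesis $b_i^+>g_i+2\nu_i^\star$; and the only comparison needed is against $j^\star$, never against $i$.

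More importantly, the paper's contradiction is \emph{not} ``the total transmission time on $(g_f,g_i)$-updates is at most $\nu_i^\star$.'' That is the bound you aim for, but it does not drop out of the telescoping you sketch. From $\gamma_{k_\ell}\ge\gamma_{j^\star}$ at each pick time one gets $s_{k_\ell}\le s_{j^\star}\,(g_{k_\ell}-\lambda_{\ell-1})/(g_{j^\star}-\lambda_{\ell-1})$, and the sum of these fractions does \emph{not} collapse to $1$; it behaves like $\log\bigl((g_{j^\star}-\lambda_0)/(g_{j^\star}-\lambda_m)\bigr)$ and can be made arbitrarily large when $g_{j^\star}=g_i$ and the $g_{k_\ell}$ crowd up toward $g_i$. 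The paper instead runs the implication the other way: on the sub-interval the sum of the sizes of the updates in $\cK$ is at least $\nu_i^\star$, and each such update (when picked) has index at least the fixed threshold $(g_i-\lambda^+(g_i+\nu_i^\star))/\nu_i^\star$ coming from \eqref{eq:index-j}; hence completing them all would force a cumulative AoI reduction of at least $g_i-\lambda^+(g_i+\nu_i^\star)$, which is impossible because every update in $\cK$ has generation time strictly below $g_i$. So the contradiction lives on the AoI-reduction side, not on the transmission-time side; the telescoping should be used to \emph{upper}-bound the total reduction by $g_{k_m}-\lambda_0<g_i-\lambda_0$, and then clash with the \emph{lower} bound coming from the index inequality and the interval length.
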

\begin{proof}
	We prove the result using contradiction. With SRPT$^+$, let at each time instant $t\in(r_f^+,g_i+2\nu_i^\star]$, some update $k$ that is generated in interval $(g_f,g_i)$ is under transmission.
	Consider the time instant $r_i^\star=g_i+\nu_i^\star$. Since $r_f^+\le g_i+\nu_i^\star$, we get that $r_i^\star\in(r_f^+,g_i+2\nu_i^\star]$. By definition of $r_i^\star$, there exists some update $j^\star\ge i$ that $\pi^\star$ transmits completely by time $r_i^\star=g_i+\nu_i^\star$. Thus, for update $j^\star$, $g_{j^\star}+s_{j^\star}\le g_i+\nu_i^\star$. 
Since $g_{j^\star}\ge g_i$, this implies that $s_{j^\star}\le\nu_i^\star$. Also, because $s_{j^\star}\ge 0$, we have $g_{j^\star}\le g_i+\nu_i^\star$. Thus, at time $g_i+\nu_i^\star$, the size $s_{j^\star}(g_i+\nu_i^\star)\stackrel{(a)}{\le}s_{j^\star}(g_{j^\star})=s_{j^\star}\le \nu_i^\star$, where $(a)$ follows because the remaining size of an update can never increase. Using this fact, as well as the fact that ${j^\star}\ge i$ (i.e. $g_{j^\star}\ge g_i$), we get the following lower bound on the index \eqref{eq:slope} for update ${j^\star}$ at time $g_i+\nu_i^\star$:
\begin{align} \label{eq:index-j}
	\gamma_{j^\star}(g_i+\nu_i^\star)\stackrel{(a)}{=}\frac{g_{j^\star}-\lambda^+(g_i+\nu_i^\star)}{s_{j^\star}(g_i+\nu_i^\star)}\ge 
	\frac{g_i-\lambda^+(g_i+\nu_i^\star)}{\nu_i^\star},
\end{align}
where in $(a)$, $\lambda^+(g_i+\nu_i^\star)$ denotes the generation time of the latest update that is completely transmitted by SRPT$^+$ until time $g_i+\nu_i^\star$. 

Let $\cK$ denote the set of all updates generated in interval $(g_f,g_i)$) that SRPT$^+$ transmits completely/partially in interval $(g_i+\nu_i^\star,g_i+2\nu_i^\star]$ 
(because $r_i^\star=g_i+\nu_i^\star\in (r_f^+,g_i+2\nu_i^\star]$, the interval $(g_i+\nu_i^\star,g_i+2\nu_i^\star]\subseteq (r_f^+,g_i+2\nu_i^\star]$).
At each time instant $t\in(g_i+\nu_i^\star,g_i+2\nu_i^\star]$, since SRPT$^+$ is transmitting updates from subset $\cK$ despite having update ${j^\star}$ with greater generation time (compared to the updates in $\cK$), 
it follows that the index \eqref{eq:slope} of the updates in $\cK$ that SRPT$^+$ transmits, must be greater than \eqref{eq:index-j}. 
Also, the sum of the sizes of updates in $\cK$ must be at least $\nu_i^\star$ (as the length of interval $(g_i+\nu_i^\star,g_i+2\nu_i^\star]$ is $\nu_i^\star$). Therefore, if all the updates in $\cK$ are completely transmitted, then it must lead to reduction in AoI at least as much as $g_i-\lambda^+(g_i+\nu_i^\star)$. 
However, this cannot be true because the generation time of the updates in $\cK$ is strictly less than $g_i$. Thus, SRPT$^+$ cannot be transmitting updates from subset $\cK$ at each time instant $t\in(g_i+\nu_i^\star,g_i+2\nu_i^\star]\subseteq (r_f^+,g_i+2\nu_i^\star]$. That is, there exists some time instant $t\in(r_f^+,g_i+2\nu_i^\star]$ when none of the updates generated in interval $(g_f,g_i)$ is under transmission (with SRPT$^+$).
\end{proof} 

\subsection{$\sum_{i=1}^{R(t)}\delta_i d_i^+\le 2\sum_{i=1}^{R(t)}\delta_i \nu_i^\star$, where $\delta_i=g_i-g_{i-1}$, 	while $d_i^+=r_i^+-b_i^+$.} \label{app:proof-lemma-SRPT+-2}
\begin{proof}
Recall that $b_i^+$ is the earliest time instant when the transmission of an update $j\ge i$ begins under SRPT$^+$. By definition, $b_i^+$'s may be equal (common) for successively generated updates.
Let the set of updates generated 
at the source until time $t$ (where $t\to\infty$) be partitioned into the largest subsets 
$\cA_1,\cA_2,\cA_3,\cdots$, such that for all updates $i$ within a subset $\cA_n$, $b_i^+$'s are equal.
To prove $\sum_{i=1}^{R(t)}\delta_i d_i^+\le 2\sum_{i=1}^{R(t)}\delta_i \nu_i^\star$, we show that for any subset $\cA_n$, 
\begin{align} \label{eq:temp-A2-1}
	\sum_{i\in\cA_n}\delta_i d_i^+\le 2\sum_{i\in\cA_n}\delta_i \nu_i^\star.
\end{align} 

Consider a subset $\cA_n$ (for some $n$). 
Let the updates in $\cA_n$ be numbered as $1_n,2_n,\cdots,m_n$, in increasing order of their generation time. By definition of $\cA_n$,  
	$b_{1_n}^+=b_{2_n}^+=\cdots=b_{m_n}^+$.
Also, the update whose transmission begins at time $b_{m_n}^+$ is update $m_n$ (latest generated update in the subset $\cA_n$), as shown next. 
\begin{proposition}
	At time $b_{m_n}^+$, SRPT$^+$ begins to transmit update ${m_n}$.
\end{proposition}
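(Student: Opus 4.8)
The plan is to prove the proposition by contradiction: suppose instead that at time $b_{m_n}^+$, SRPT$^+$ begins to transmit some update $k$ with $g_k < g_{m_n}$. The key observation is that updates in $\cA_n$ all share the same value $b^+$, which means the transmission of an update $j \ge i$ does not begin before $b_{m_n}^+$ for any $i \in \cA_n$; in particular, $b_{m_n}^+$ is the first time an update $j \ge m_n$ begins transmission, so at that moment update $m_n$ itself has not yet begun transmitting. First I would record why update $m_n$ must be available at the source at time $b_{m_n}^+$: since $b_{m_n}^+ \ge g_{m_n}$ (by the definition $b_i^+ \ge g_i$), update $m_n$ has already been generated. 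Next I would argue that update $m_n$ belongs to $\Gamma(b_{m_n}^+)$, i.e.\ $\gamma_{m_n}(b_{m_n}^+) > 0$: because $b_{m_n}^+$ is the first time any update $j \ge m_n$ starts, no such update has ever completed before $b_{m_n}^+$, so $\lambda^+(b_{m_n}^+) < g_{m_n}$, giving $\gamma_{m_n}(b_{m_n}^+) = (g_{m_n} - \lambda^+(b_{m_n}^+))/s_{m_n}(b_{m_n}^+) > 0$.

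Then I would split on why transmission begins at $b_{m_n}^+$ in Algorithm~\ref{algo:AoI-SRPT}. If no update was under transmission just before $b_{m_n}^+$ and SRPT$^+$ picks $\arg\max_{j \in \Gamma(b_{m_n}^+)} \gamma_j(b_{m_n}^+)$, then I use Remark~\ref{remark:greedy+}: for any update $k$ with $g_k < g_{m_n}$, if $s_k(b_{m_n}^+) \ge s_{m_n}(b_{m_n}^+)$ then $\gamma_k(b_{m_n}^+) < \gamma_{m_n}(b_{m_n}^+)$, so $k$ cannot be the $\arg\max$ unless $s_k(b_{m_n}^+) < s_{m_n}(b_{m_n}^+)$. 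But any update $k < m_n$ with remaining size strictly smaller than $s_{m_n}$ would have, at its own generation time $g_k \le g_{m_n}$... here I must be careful — the genuinely delicate case is the \emph{preemption} branch: update $k$ could start at $b_{m_n}^+$ only by preempting an update under transmission, and Algorithm~\ref{algo:AoI-SRPT} only preempts when a \emph{new} update is generated, so $b_{m_n}^+ = g_k$, contradicting $g_k < g_{m_n} \le b_{m_n}^+$ unless $b_{m_n}^+ = g_{m_n}$ too. So the preemption branch forces the starting update to be the freshly generated one, which cannot have index smaller than $m_n$'s since $m_n$ is already available; combining with the non-preemption branch handled above yields that the update starting at $b_{m_n}^+$ has generation time $\ge g_{m_n}$, hence equals $m_n$ (it is the latest generated, and cannot be generated after $g_{m_n}$ by time $b_{m_n}^+$ without $m_n$ not being the latest in $\cA_n$).

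The cleaner route, which I would actually write up, is: by definition of $\cA_n$ as a \emph{maximal} block with common $b^+$, the update whose transmission starts at $b_{m_n}^+$ is some update in $\cA_n$ (any update $j$ started at that instant has $b_j^+ = b_{m_n}^+$, and the set of such $j$ is exactly $\cA_n$ together with possibly later-generated updates, but $m_n$ is the latest in $\cA_n$). Then I would show the started update cannot be any $i_n$ with $i_n < m_n$: if update $i_n$ started at $b_{i_n}^+ = b_{m_n}^+$, then since update $m_n$ is already generated and available (as $g_{m_n} \le b_{m_n}^+$) with $\gamma_{m_n}(b_{m_n}^+) > 0$, the greedy/SRPT$^+$ rule would have preferred $m_n$ whenever $s_{m_n}(b_{m_n}^+) \le s_{i_n}(b_{m_n}^+)$ by Remark~\ref{remark:greedy+}; and when $s_{m_n}(b_{m_n}^+) > s_{i_n}(b_{m_n}^+)$ I invoke that $m_n$ was generated strictly after $i_n$ yet never started before — tracing back, at time $g_{m_n}$ update $i_n$ (or its successor $<m_n$) was under transmission with remaining size $\ge s_{m_n}(g_{m_n}) = s_{m_n}$... this is exactly where the book-keeping gets subtle.

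The main obstacle, then, is the case analysis around the preemption rule: ruling out that an \emph{older} update $i_n < m_n$ is the one being (re)started at $b_{m_n}^+$ requires carefully using both that SRPT$^+$'s preemptions are triggered only by new arrivals (so a restart of an old update at $b_{m_n}^+$ can only happen in the "no update under transmission" branch, where the greedy index governs) and that $m_n$'s index dominates by Remark~\ref{remark:greedy+} whenever the size comparison goes the right way, while the other size regime is impossible because an older, smaller update would itself have been selected earlier, contradicting $b_{i_n}^+ = b_{m_n}^+$ being the \emph{first} start for $\cA_n$. I expect the write-up to be a short contradiction argument once these two facts are marshalled, but assembling them without circularity is the delicate point.
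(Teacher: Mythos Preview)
You have misidentified which case needs to be ruled out. By the very definition of $b_{m_n}^+$ (the \emph{earliest} time SRPT$^+$ begins to transmit some update $j\ge m_n$), the update that starts at $b_{m_n}^+$ is automatically some $j\ge m_n$; there is nothing to prove about updates $k<m_n$. All of your case analysis involving the index $\gamma$, the preemption rule, and Remark~\ref{remark:greedy+} --- along with the ``delicate'' size-comparison subcase you flag as unresolved --- is aimed at an empty target.

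The actual content of the proposition is to rule out $j>m_n$, and this requires only the maximality of $\cA_n$: if the update $j$ starting at $b_{m_n}^+$ satisfied $j>m_n$, then (since every update $\ge j$ is also $\ge m_n$) one would have $b_j^+=b_{m_n}^+$, forcing $j\in\cA_n$ and contradicting that $m_n$ is the latest-generated update in $\cA_n$. You do touch on this in a parenthetical (``without $m_n$ not being the latest in $\cA_n$''), but it appears as an afterthought rather than being recognized as the entire proof. The paper's argument is exactly these two observations; no algorithm-specific reasoning about sizes or indices is needed at all.
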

\begin{proof}
By definition, $b_i^+$ is the earliest time instant when the transmission of an update $j\ge i$ begins. Since $b_{1_n}^+=b_{2_n}^+=\cdots=b_{m_n}^+$, at time $b_{m_n}^+$, either the transmission of update ${m_n}$, or an update generated after $m_n$ (say, $k_n>m_n$), begins. But if the transmission of $k_n$ begins before update $m_n$, then by definition, $b_{k_n}^+=b_{m_n}^+$, i.e. $k_n\in\cA_n$, which is not true (by definition of $\cA_n$). This implies that at time $b_{m_n}^+$, SRPT$^+$ begins to transmit update $m_n$.
\end{proof}

By definition, once SRPT$^+$ begins to transmit update $m_n$ at time $b_{m_n}^+$, it never transmits any update $j$ generated before $m_n$, at time $t>b_{m_n}^+$ (i.e. after SRPT$^+$ begins to transmit update $m_n$). Therefore, $\forall i\in\cA_n$, the earliest time instant when the transmission of an update $j\ge i$ completes is $r_i^+=r_{m_n}^+$ (the update whose transmission completes at time $r_{m_n}^+$ need not be update $m_n$). Therefore, $d_i^+=r_i^+-b_i^+=r_{m_n}^+-b_{m_n}^+=d_{m_n}^+$, $\forall i\in\cA_n$. Hence, 
\begin{align} \label{eq:size-rel-1}
	\sum_{i\in\cA_n}\delta_i d_i^+= d_{m_n}^+\cdot\sum_{i\in\cA_n}\delta_i.
\end{align}

As discussed in Remark \ref{remark:greedy+}, regardless of whether SRPT$^+$ begins to transmit an update due to larger index \eqref{eq:slope} (when no update is being transmitted) or due to preemption by smaller sized update compared to the 
remaining size of the update under transmission, the time instant when SRPT$^+$ begins to transmit an update, the update's index \eqref{eq:slope} is the largest. Therefore, at time $b_{m_n}^+$ when SRPT$^+$ begins to transmit update $m_n$, the index $\gamma_{m_n}(b_{m_n}^+)$ must be the maximum among all the available updates. Thus, 
$\forall j\in\cA_n=\{1_n,\cdots,m_n\}$, 
\begin{align} \label{eq:rel-slope}
	\frac{g_{m_n}-\lambda^+(b_{m_n}^+)}{s_{m_n}}\ge \frac{g_j-\lambda^+(b_{m_n}^+)}{s_j},
\end{align} 
where $s_j=s_j(b_{m_n}^+)=s_j(g_j)$ is the remaining size of update $j\in\cA_n$ at time $b_{m_n}^+$ under SRPT$^+$, while the superscript $+$ in $\lambda^+(\cdot)$ denotes 
SRPT$^+$.

Without loss of generality, let update $0_n$ be the latest update generated before $g_{1_n}$ (generation time of update $1_n\in\cA_n$). Then $\delta_{1_n}=g_{1_n}-g_{0_n}$, where $g_{0_n}$ denotes the generation time of update $0_n$. 
Recall that $\lambda^+(b_{m_n}^+)$ is equal to the generation time of latest update that has been completely transmitted by SRPT$^+$ until time $b_{m_n}^+$. 
Since update $m_n$ is the only update in $\cA_n$ whose transmission begins until time $b_{m_n}^+$, it follows that $\lambda^+(b_{m_n}^+)\le g_{0_n}$. Therefore, $g_{0_n}-\lambda^+(b_{m_n}^+)=\epsilon\ge 0$.  

Note that $g_j-\lambda^+(b_{m_n}^+)=\epsilon+\sum_{i=1}^j\delta_i$, $\forall j\in\{1_n,\cdots,m_n\}$.
Thus, \eqref{eq:rel-slope} can be written as
\begin{align}
	\frac{\epsilon+\sum_{i=1_n}^{m_n}\delta_i}{s_{m_n}}\ge \frac{\epsilon+\sum_{i=1_n}^j\delta_i }{s_j}, 
\end{align}
which implies that $\forall j\in\cA_n$, 
\begin{align} \label{eq:size-updates}
	s_j&\ge s_{m_n}\cdot\left(\frac{\epsilon+\sum_{i=1_n}^j\delta_i}{\epsilon+\sum_{i=1_n}^{m_n}\delta_i}\right), \nonumber \\
	&\stackrel{(a)}{\ge} s_{m_n}\cdot\left(\frac{\sum_{i=1_n}^j\delta_i}{\sum_{i=1_n}^{m_n}\delta_i}\right),
\end{align}
where we get $(a)$ because $\sum_{i=1_n}^j\delta_i\le \sum_{i=1_n}^{m_n}\delta_i$, and $\epsilon\ge 0$.


Now, we connect the decisions of $\pi^\star$ (an optimal offline policy) with those of SRPT$^+$, over each subset $\cA_n$.
\begin{proposition} \label{prop:size-d-updates}
	For all updates $j\in\cA_n=\{1_n,\cdots,m_n\}$, 
	\begin{align} \label{eq:size-d-updates}
		\nu_j^\star \ge  d_{m_n}^+\cdot\left(\frac{\sum_{i=1_n}^j\delta_i}{\sum_{i=1_n}^m\delta_i}\right).
	\end{align}
\end{proposition}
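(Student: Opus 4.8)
The plan is to lower bound $\nu_j^\star$ via Lemma~\ref{lemma:lb-nu-star}, i.e.\ $\nu_j^\star\ge\nu_j^{\min}=(g_{k^\star}+s_{k^\star})-g_j$ where $k^\star:=\arg\min_{k\ge j}\{g_k+s_k\}$, and then to compare $(g_{k^\star}+s_{k^\star})-g_j$ with the right-hand side of \eqref{eq:size-d-updates}. I will use that $\sum_{i=1_n}^j\delta_i=g_j-g_{0_n}$ and $\sum_{i=1_n}^{m_n}\delta_i=g_{m_n}-g_{0_n}$, so the right-hand side of \eqref{eq:size-d-updates} equals $d_{m_n}^+\cdot(g_j-g_{0_n})/(g_{m_n}-g_{0_n})$, which is at most $d_{m_n}^+$ because $g_{0_n}\le g_j\le g_{m_n}$. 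A preliminary step used in both cases below is the bound $d_{m_n}^+\le s_{m_n}$: once SRPT$^+$ starts transmitting $m_n$ at $b_{m_n}^+$, every subsequent preemption replaces the update in service by one of no larger remaining size, so the resulting chain of updates finishes by $b_{m_n}^++s_{m_n}$ (this is the last sentence of Remark~\ref{remark:greedy+}), giving $r_{m_n}^+\le b_{m_n}^++s_{m_n}$.

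I would then split on where $k^\star$ lies. \emph{Case~I: $g_{k^\star}\le g_{m_n}$.} Since $\cA_n$ consists of consecutively generated updates and $g_j\le g_{k^\star}\le g_{m_n}$, we get $k^\star\in\{j_n,\dots,m_n\}\subseteq\cA_n$, so \eqref{eq:size-updates} applies to $k^\star$. Chaining it with $s_{m_n}\ge d_{m_n}^+$ and with $g_{k^\star}\ge g_j$ yields $s_{k^\star}\ge d_{m_n}^+\cdot(g_j-g_{0_n})/(g_{m_n}-g_{0_n})$; since $g_{k^\star}\ge g_j$ also gives $\nu_j^{\min}\ge s_{k^\star}$, the claim follows immediately.

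\emph{Case~II: $g_{k^\star}>g_{m_n}$} (so $k^\star\notin\cA_n$). Here I would prove the sharper statement $r_{m_n}^+\le g_{k^\star}+s_{k^\star}$, and then conclude, using $b_{m_n}^+\ge g_{m_n}\ge g_j$, that $d_{m_n}^+=r_{m_n}^+-b_{m_n}^+\le(g_{k^\star}+s_{k^\star})-g_j=\nu_j^{\min}\le\nu_j^\star$; since the right-hand side of \eqref{eq:size-d-updates} is at most $d_{m_n}^+$, this finishes the case. To prove $r_{m_n}^+\le g_{k^\star}+s_{k^\star}$: first note $g_{k^\star}\ge b_{m_n}^+$, for otherwise $k^\star$ is already present at $b_{m_n}^+$, and because $g_{k^\star}+s_{k^\star}\le g_{m_n}+s_{m_n}$ together with $g_{k^\star}>g_{m_n}$ forces $s_{k^\star}<s_{m_n}$, the computation of Remark~\ref{remark:greedy+} gives $\gamma_{k^\star}(b_{m_n}^+)>\gamma_{m_n}(b_{m_n}^+)$, contradicting that SRPT$^+$ starts $m_n$ at $b_{m_n}^+$. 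If $g_{k^\star}\ge r_{m_n}^+$ the inequality is trivial; otherwise, at time $g_{k^\star}$ the $m_n$-chain is still in service, transmitting some update $u$ with $g_u\ge g_{m_n}\ge g_j$, hence $g_u+s_u\ge g_{k^\star}+s_{k^\star}$ by the definition of $k^\star$; this makes $s_{k^\star}$ no larger than the remaining size of $u$ at $g_{k^\star}$, so $k^\star$ preempts $u$, joins the chain, and — since replacing the update in service by a smaller one never postpones completion — the chain finishes by $g_{k^\star}+s_{k^\star}$, which is exactly the claim.

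I expect the crux to be Case~II, and within it the inequality $r_{m_n}^+\le g_{k^\star}+s_{k^\star}$: establishing it requires tracking the remaining size of the update in service along the preemption chain and ruling out the degenerate possibility $g_{k^\star}<b_{m_n}^+$, both of which lean on the exact preemption rule of SRPT$^+$ and on the property (Remark~\ref{remark:greedy+}) that preempting with a smaller update cannot delay the completion of an update from $\Gamma(\cdot)$. Case~I, by contrast, is a short algebraic consequence of \eqref{eq:size-updates} and $d_{m_n}^+\le s_{m_n}$.
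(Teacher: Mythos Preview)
Your proof is correct and follows the same underlying logic as the paper's: both lower-bound $\nu_j^\star$ by $\nu_j^{\min}=\min_{k\ge j}\{g_k+s_k\}-g_j$ and split according to whether the minimiser lies in $\{j,\dots,m_n\}$ or beyond $m_n$. The paper packages the two cases into a single min-formula, asserting $r_{m_n}^+=\min\{b_{m_n}^++s_{m_n},\min_{i>m_n}\{g_i+s_i\}\}$ and hence $d_{m_n}^+\theta_j\le\min\{s_{m_n}\theta_j,\min_{i>m_n}\{\phi_{ij}+s_i\}\}$, which it then compares termwise with the matching lower bound on $\nu_j^\star$. Your Case~I is exactly the paper's first branch ($\nu_j^\star\ge s_{m_n}\theta_j\ge d_{m_n}^+\theta_j$). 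Your Case~II reproves, via the explicit preemption-chain argument and the elimination of $g_{k^\star}<b_{m_n}^+$, the inequality $r_{m_n}^+\le g_{k^\star}+s_{k^\star}$ that the paper obtains from its closed-form expression for $r_{m_n}^+$; this is a bit more work but is also more careful, since the paper's stated \emph{equality} for $r_{m_n}^+$ tacitly assumes every $i>m_n$ arrives at or after $b_{m_n}^+$, whereas only the inequality is actually needed (and is what you prove).
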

\begin{proof}
Let $\theta_j=\sum_{i=1_n}^j\delta_i/\sum_{i=1_n}^{m_n}\delta_i$, $\forall j\in\cA_n=\{1_n,\cdots,m_n\}$ (by definition, $\theta_j\le 1$). From Lemma \ref{lemma:lb-nu-star}, we know that for any update $j$, $r_j^\star\ge \min_{i\ge j}\{g_i+s_i\}$, which implies 
\begin{align} \label{eq:r-star}
	r_j^\star&\ge\min\{\min_{i\in\{j,\cdots,m_n\}}\{g_i+s_i\},\min_{i>m_n}\{g_i+s_i\}\}, \nonumber \\
	&\stackrel{(a)}{\ge} \min\{\min_{i\in\{j,\cdots,{m_n}\}}\{g_i+s_{m_n}\cdot \theta_i\},\min_{i>m_n}\{g_i+s_i\}\},
\end{align}
where 
we get $(a)$ because for any update $i\in\cA_n$, $s_i\ge s_{m_n}\cdot \theta_i$ (from \eqref{eq:size-updates}).

Since $\nu_j^\star=r_j^\star-g_j\ge r_j^\star-b_j^\star=d_j^\star$, on subtracting $g_j$ from both sides of \eqref{eq:r-star} and denoting $\phi_{ij}=g_i-g_j$ $\forall i\ge j$, we get 
\begin{align} \label{eq:nu-j-star}
	\nu_j^\star&\ge \min\{\min_{i\in\{j,\cdots,m_n\}}\{\phi_{ij}+s_{m_n}\cdot \theta_i\},\min_{i>m_n}\{\phi_{ij}+s_i\}\}, \nonumber \\
	&\stackrel{(a)}{\ge} \min\{s_{m_n}\cdot\theta_j,\ \ \min_{i>{m_n}}\{\phi_{ij}+s_i\}\},
\end{align}
where we get $(a)$ because $\phi_{ij}=g_i-g_j\ge 0$, $\forall i\ge j$.

Note that under SRPT$^+$, if transmission of update $m_n$ (which starts at time $b_{m_n}^+$) completes, then $r_{m_n}^+=b_{m_n}^++s_{m_n}$. Else, if update $m_n$ gets preempted, then $r_{m_n}^+=\min_{i>m_n}\{g_i+s_i\}$. Combining these two facts, we get
$r_{m_n}^+=\min\{b_{m_n}^++s_{m_n},\min_{i>m_n}\{g_i+s_i\}\}$. Thus, $d_{m_n}^+=r_{m_n}^+-b_{m_n}^+=\min\{s_{m_n},\min_{i>m_n}\{(g_i-b_{m_n}^+)+s_i\}\}$. Since $g_j\le b_j^+=b_{m_n}^+$ $\forall j\in\cA_n$, 
we have $g_i-b_{m_n}^+\le g_i-g_j=\phi_{ij}$. Therefore,
	$d_{m_n}^+\le \min\{s_{m_n}, \min_{i>m_n}\{\phi_{ij}+s_i\}\}$.
Multiplying both sides by $\theta_j$, we get $d_{m_n}^+\cdot\theta_j \le\min\{s_{m_n}\cdot \theta_j,\ \  \min_{i>m_n}\{\phi_{ij}+s_i\}\cdot\theta_j\}$. Since $\theta_j\le 1$ (by definition), we get
\begin{align} \label{eq:d-m-G}
	d_{m_n}^+\cdot\theta_j 
	&\le \min\{s_{m_n}\cdot \theta_j,\ \ \min_{i>m_n}\{\phi_{ij}+s_i\}\}.
\end{align} 

Comparing \eqref{eq:nu-j-star} and \eqref{eq:d-m-G}, we get \eqref{eq:size-d-updates}.
\end{proof}

Multiplying both sides of \eqref{eq:size-d-updates} by $\delta_j$ and summing over $j\in\cA_n=\{1_n,\cdots,m_n\}$, we get
\begin{align} \label{eq:size-rel-2}
	\sum_{j\in\cA_n}\delta_j \nu_j^\star=\sum_{j=1_n}^{m_n}\delta_j \nu_j^\star &\ge \frac{d_{m_n}^+}{\sum_{i=1_n}^{m_n}\delta_i} \cdot\sum_{j=1_n}^{m_n}\left(\delta_j\sum_{i=1_n}^j\delta_i\right), \nonumber \\
	&\ge \frac{d_{m_n}^+}{\sum_{i=1_n}^{m_n}\delta_i} \cdot\frac{\left(\sum_{i=1_n}^{m_n}\delta_i\right)^2}{2}, \nonumber \\
	&=\frac{d_{m_n}^+\cdot\sum_{i=1_n}^{m_n}\delta_i}{2}, \nonumber \\
	&=\frac{1}{2}\left(d_{m_n}^+\cdot\sum_{i\in\cA_n}\delta_i\right).
\end{align}

From \eqref{eq:size-rel-1} and \eqref{eq:size-rel-2}, we get \eqref{eq:temp-A2-1}, and summing both sides of \eqref{eq:temp-A2-1} over all $n$ (i.e. subsets $\cA_n$), we get $\sum_{i=1}^{R(t)}\delta_i d_i^+\le 2\sum_{i=1}^{R(t)}\delta_i \nu_i^\star$.
\end{proof}

\section{Proof of Lemma \ref{lemma:LCFS+properties}} \label{app:proof-lemma-LCFS+properties}
\subsection{For any update $i$, $w_i^L\le \nu_i^{\min}\le \nu_i^\star$.} 

\begin{proof}
	Consider the following two complementary cases.
	\begin{enumerate}
		\item At the generation time $g_i$ of (any) update $i$, no update is under transmission: Under SRPT$^L$, the transmission of update $i$ begins immediately. Therefore, $w_i^L=0\stackrel{(a)}{\le} \nu_i^{\min}\stackrel{(b)}{\le} \nu_i^\star$, where $(a)$ and $(b)$ follows from Lemma \ref{lemma:lb-nu-star}.
		\item An update $f$ is already under transmission at time $g_i$: 
		From Lemma \ref{lemma:lb-nu-star}, we have 
		\begin{align} \label{eq:temp-2}
			r_i^\star\ge\min_{j\ge i}\{g_j+s_j\}=g_i+\nu_i^{\min}\ge\min\{g_i+s_f(g_i),\min_{j\ge i}\{g_j+s_j\}\}\stackrel{(a)}{=}r_f^L,
		\end{align}
		where $r_f^L$ is the earliest time instant when SRPT$^L$ completely transmits an update $j\ge f$, and we get $(a)$ because SRPT$^L$ preempts the update under transmission whenever a new update with smaller remaining size is generated. Subtracting $g_i$ from both sides of the relations in \eqref{eq:temp-2}, we get 
		\begin{align} \label{eq:temp-3}
			\nu_i^\star=r_i^\star-g_i\ge \nu_i^{\min} \ge r_f^L-g_i.
		\end{align}
		Note that if update $f$ is preempted (at time $t\ge g_i$), then the transmission of an update $j\ge i$ begins before $r_f^L$, else the transmission of an update $j\ge i$ begins at most at $r_f^L$ (when transmission of update $f$ completes, as SRPT$^L$ begins to transmit the latest generated update). Therefore, $b_i^L\le r_f^L$, which implies $w_i^L=b_i^L-g_i\le r_f^L-g_i\stackrel{(a)}{\le} \nu_i^{\min}\stackrel{(b)}{\le} \nu_i^\star$, where $(a)$ and $(b)$ follows from \eqref{eq:temp-3}.
	\end{enumerate}
	
	From the above two cases we get that for any update $i$, $w_i^L\le \nu_i^{\min}\le \nu_i^\star$.
\end{proof}

\subsection{At the time instant $t$ when SRPT$^L$ begins to transmit an update $i$, its generation time $g_i$ is the latest among all updates generated until time $t$. That is, SRPT$^L$ either never begins to transmit an update $i$, or begins to transmit it at some time $t\in[g_i,g_{i+1})$.} 

\begin{proof}
	Recall that SRPT$^L$ begins to transmit an update $i$ either a) by preempting another update at time $g_i$, or b) at the earliest time instant $t$ when update $i$ is the latest generated update and no update is under transmission (which implies that $t\in[g_i,g_{i+1})$). Hence, if SRPT$^L$ begins to transmit an update $i$ at time $t$, then $t\in[g_i,g_{i+1})$.
\end{proof}

\subsection{If SRPT$^L$ begins to transmit an update $i$ (at time $b_i^L$), then $r_i^L=\min\{b_i^L+s_i,\min_{j\ge i+1}\{g_j+s_j\}\}$. Additionally, for such an update $i$,  $d_i^L=r_i^L-b_i^L\le \nu_i^{\min}\le \nu_i^\star$.}

\begin{proof}
	Note that SRPT$^L$ preempts the update under transmission whenever a new update is generated with size at most the remaining size of the update under transmission.
	Therefore, if SRPT$^L$ begins to transmit an update $i$ at time $b_i^L\in[g_i,g_{i+1})$, then the earliest time instant when the transmission of an update $j\ge i$ completes is $r_i^L=\min\{b_i^L+s_i,\min_{j\ge i+1}\{g_j+s_j\}\}$.
	Further, by definition, $d_i^L=r_i^L-b_i^L=\min\{s_i,\min_{j\ge i+1}\{(g_j-b_i^L)+s_j\}\}\stackrel{(a)}{\le} \min\{s_i,\min_{j\ge i+1}\{(g_j-g_i)+s_j\}\}=\min_{j\ge i}\{g_j+s_j\}-g_i\stackrel{(b)}{=}\nu_i^{\min}\le \nu_i^\star$, where $(a)$ follows because $b_i^L\ge g_i$, and $(b)$ follows from the definition of $\nu_i^{\min}$.
\end{proof}

\section{An example to illustrate the properties of SRPT$^L$} \label{app:example-LCFS}

\begin{example} \label{ex:SRPT-L}
	At time $t=0$, let the AoI $\Delta(0)=0$. 
	In time interval $[0,2]$, the update generation sequence $\cI=\{(0,1.45),(0.25,1.25),(0.75,1),(1,0.5),(1.25,0.3), (1.8,0.1)\}$, and the updates $i$ (i.e. $(g_i,s_i)$) in $\cI$ are numbered (ordered) in increasing order of their generation time $g_i$. 
	
	At time $t=0$, since no update is under transmission, SRPT$^L$ begins to transmit the latest update, i.e. update $1$. Subsequently, at time $t=0.25$, update $2$ is generated with size $s_2=1.25>(1.45-0.25)=s_1(0.25)$. Therefore, update $1$ remains under transmission. At time $t=0.75$, update $3$ is generated with size $s_3=0.75>(1.45-0.75)=s_1(0.75)$, and hence update $3$ does not preempt update $1$ (i.e. update $1$ remains under transmission). Similarly, at time $t=1$ and $t=1.25$ respectively, updates $4$ and $5$ are generated, and none of them preempt update $1$ because at time $t=1$ and $t=1.25$, the remaining size of update $1$ is smaller than the sizes of updates $4$ and $5$ respectively. Thus, the transmission of update $1$ completes at time $t=1.45$.
	
	At time $t=1.45$, no update is under transmission, and the latest update i.e. update $5$ is available for transmission. Therefore, SRPT$^L$ begins to transmit update $5$ at time $t=1.45$, and transmits it completely by time $t=1.45+s_5=1.45+0.3=1.75$. At time $t=1.75$, since latest update is completely transmitted, SRPT$^L$ idles until time $t=1.8$ when update $6$ is generated. In time interval $(1.8,1.9]$, SRPT$^L$  transmits update $6$ completely. Thereafter, in absence of a new update, SRPT$^L$ idles until time $t=2$. 
	
	In summary, in interval $[0,2]$, SRPT$^L$ completely transmits update $1$ (i.e. $(0,1.45)$) over interval $(0,1.45]$, update $5$ (i.e. $(1.25,0.3)$) over interval $(1.45,1.75]$ and update $6$ (i.e. $(1.8,0.1)$) over interval $(1.8,1.9]$. Thus, the AoI increases linearly from $0$ at time $t=0$ to $1.45$ at time $t=1.45$. At time $t=1.45$, the generation time of latest completely transmitted update is $\lambda(1.45)=g_1=0$, which implies that the AoI $\Delta(1.45)=1.45-\lambda(1.45)=1.45$. Thus, there is no reduction in AoI because of complete transmission of update $1$. Consequently, the AoI increases linearly from $1.45$ at time $t=1.45$ to $1.75$ at time $t=1.75$, and then instantaneously decreases to $1.75-g_5=1.75-1.25=0.5$ after update $5$ is completely transmitted. Then, AoI increases linearly from $0.5$ at time $t=1.75$ to $0.65$ at time $t=1.9$, and then decreases instantaneously to $1.9-g_6=1.9-1.8=0.1$ after update $6$ is completely transmitted. Thereafter, AoI again starts to increase linearly. 
	
	Thus, we get that the average AoI (i.e. $\av{AoI}$) for SRPT$^L$ in interval $[0,2]$ is $1.6325$ time units. Recall from Example \ref{ex:SRPT-not-OPT} that for the same update generation sequence $\cI$, the $\av{AoI}$ for SRPT$^+$ is $1.395$ time units, whereas the $\av{AoI}$ for an optimal offline policy $\pi^\star$ is $1.3825$ time units.
\end{example}

\section{Proof of Theorem \ref{thm:LCFS}} \label{app:proof-thm-LCFS}

\begin{proof}
	By definition, 
	$b_i^L\ge g_i$ is the earliest time instant when SRPT$^L$ begins to transmit an update $j\ge i$. 
	We partition the set of all updates generated at the source into sets $G_=^L$ and $G_>^L$, based on $b_i^L$'s. In particular, we define $G_=^L$ as the subset of all updates $i$ such that $b_i^L=g_i$ (i.e., under SRPT$^L$, the transmission of update $i$ begins immediately at its generation time), and $G_>^L$ is the subset of all updates $i$ such that $b_i^L>g_i$. Then, the $\av{AoI}$ \eqref{eq:aaoi-general} for SRPT$^L$ can be written as 
	\begin{align} \label{eq:aaoi-LCFS-general}
		\aaoi_{\text{SRPT}^L}&=\lim_{t\to\infty}\frac{1}{t}\left[\sum_{i=1}^{R(t)}\frac{\delta_{i}^2}{2}+\sum_{i=1}^{R(t)}\delta_i w_i^L+\sum_{i=1}^{R(t)}\delta_i d_i^L\right], \nonumber \\
		&=\lim_{t\to\infty}\frac{1}{t}\left[\sum_{i=1}^{R(t)}\frac{\delta_{i}^2}{2}+\left(\sum_{i\in G_=^L}\delta_i w_i^L+\sum_{i\in G_>^L}\delta_i w_i^L\right)+\left(\sum_{i\in G_=^L}\delta_i d_i^L+\sum_{i\in G_>^L}\delta_i d_i^L\right)\right], \nonumber \\
		&\stackrel{(a)}{=}\lim_{t\to\infty}\frac{1}{t}\left[\sum_{i=1}^{R(t)}\frac{\delta_{i}^2}{2}+\sum_{i\in G_>^L}\delta_i w_i^L+\sum_{i\in G_=^L}\delta_i d_i^L+\sum_{i\in G_>^L}\delta_i d_i^L\right],
	\end{align} 
	where we get $(a)$ because by definition of $G_=^L$, for any $i\in G_=^L$, $b_i^L=g_i$, which implies $w_i^L=b_i^L-g_i=0$.
	
	From property $1$ in Lemma \ref{lemma:LCFS+properties}, we get that for each $i\in G_>^L$, $w_i^L\le \nu_i^\star$, which implies 
	\begin{align} \label{eq:prop1-lemma}
		\sum_{i\in G_>^L}\delta_i w_i^L\le \sum_{i\in G_>^L}\delta_i \nu_i^\star.
	\end{align}
	Further, by definition, of $G_=^L$, SRPT$^L$ begins to transmit each update in $G_=^L$ at generation time. Therefore, from property $3$
	of Lemma \ref{lemma:LCFS+properties}, we get that for each update $i\in G_=^L$, $d_i^L\le \nu_i^\star$, which implies 
	\begin{align} \label{eq:prop2-lemma}
		\sum_{i\in G_=^L}\delta_i d_i^L\le \sum_{i\in G_=^L}\delta_i \nu_i^\star.
	\end{align}
	
	Substituting \eqref{eq:prop1-lemma} and \eqref{eq:prop2-lemma} in \eqref{eq:aaoi-LCFS-general}, we get 
	\begin{align}
		\aaoi_{\text{SRPT}^L}&\le \lim_{t\to\infty}\frac{1}{t}\left[\sum_{i=1}^{R(t)}\frac{\delta_{i}^2}{2}+\sum_{i\in G_>^L}\delta_i \nu_i^\star+\sum_{i\in G_=^L}\delta_i \nu_i^\star\right]+\lim_{t\to\infty}\frac{1}{t}\sum_{i\in G_>^L}\delta_i d_i^L, \nonumber \\
		&=\lim_{t\to\infty}\frac{1}{t}\left[\sum_{i=1}^{R(t)}\frac{\delta_{i}^2}{2}+\sum_{i=1}^{R(t)}\delta_i \nu_i^\star\right]+\lim_{t\to\infty}\frac{1}{t}\sum_{i\in G_>^L}\delta_i d_i^L, \nonumber \\
		&\stackrel{(a)}{\le}\lim_{t\to\infty}\frac{1}{t}\left[\sum_{i=1}^{R(t)}\frac{\delta_{i}^2}{2}+\sum_{i=1}^{R(t)}\delta_i \nu_i^\star\right]+28\aaoi_{\pi^\star}, \nonumber \\
		&\stackrel{(b)}{=}29\aaoi_{\pi^\star},
	\end{align}
	where $(a)$ follows from Lemma \ref{lemma:LCFS-technical-lemma} (discussed below), and $(b)$ follows from \eqref{eq:aaoi-opt}.
\end{proof}

\begin{lemma} \label{lemma:LCFS-technical-lemma} 
	$\underset{t\to\infty}{\lim}\frac{1}{t}\sum_{i\in G_>^L}\delta_i d_i^L\le 28\aaoi_{\pi^\star}=\underset{t\to\infty}{\lim}\frac{28}{t}\int_{0}^t\Delta_{\pi^\star}(\tau)d\tau$.
\end{lemma}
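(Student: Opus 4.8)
The plan is to bound $\lim_{t\to\infty}\frac1t\sum_{i\in G_>^L}\delta_i d_i^L$ by mimicking the block decomposition used for SRPT$^+$ in Appendix~\ref{app:proof-lemma-SRPT+-2}. Partition the updates generated up to time $t$ into maximal blocks $\cB_1,\cB_2,\dots$ on which $b_i^L$ is constant; since $b_i^L$ is non-decreasing in $i$, each block is an interval of consecutive indices $\cB_n=\{\ell_n,\dots,m_n\}$, and (exactly as for SRPT$^+$) the update whose transmission begins at the common time $b_{m_n}^L$ is $m_n$, after which SRPT$^L$ never returns to an update generated before $m_n$; hence $r_i^L=r_{m_n}^L$ and $d_i^L=d_{m_n}^L$ for all $i\in\cB_n$. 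Therefore
\[
 \sum_{i\in G_>^L}\delta_i d_i^L\ \le\ \sum_n d_{m_n}^L\sum_{i\in\cB_n}\delta_i\ =\ \sum_n d_{m_n}^L\,\big(g_{m_n}-g_{\ell_n-1}\big),
\]
so it suffices to bound each block term. Since the block containing $\ell_n-1$ is the previous one, we write $m_{n-1}=\ell_n-1$, and treat the first block (which has no ``$m_0$'') separately — it contributes a single bounded term that vanishes after dividing by $t$.

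Next I would use Lemma~\ref{lemma:LCFS+properties}. Since SRPT$^L$ actually begins transmitting the ``selected'' update $m_n$ of each block, Part~3 gives $d_{m_n}^L\le\nu_{m_n}^{\min}\le\nu_{m_n}^\star$, and Part~2 gives $b_{m_n}^L\in[g_{m_n},g_{m_n+1})$. The key structural step is to control the block's time span by the previous block's selected update $m_{n-1}$: SRPT$^L$ starts $m_{n-1}$ at $b_{m_{n-1}}^L<g_{\ell_n}$ (Part~2) and transmits it continuously until it either completes or is preempted by $m_n$, and in both cases it does not start any update $\ge\ell_n$ before that, so $b_{m_n}^L\le r_{m_{n-1}}^L$. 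Combining with $g_{m_n}\le b_{m_n}^L$, $g_{\ell_n-1}=g_{m_{n-1}}$, and Parts~1 and~3 applied to $m_{n-1}$,
\[
 g_{m_n}-g_{\ell_n-1}\ \le\ b_{m_n}^L-g_{m_{n-1}}\ \le\ r_{m_{n-1}}^L-g_{m_{n-1}}\ =\ w_{m_{n-1}}^L+d_{m_{n-1}}^L\ \le\ 2\,\nu_{m_{n-1}}^\star,
\]
so each block contributes at most $2\,\nu_{m_n}^\star\,\nu_{m_{n-1}}^\star$, a product of the optimal-offline delays of two consecutive selected updates. Verifying the sub-case analysis behind $b_{m_n}^L\le r_{m_{n-1}}^L$ for \emph{every} block (including those entered by preemption) is routine but must be done carefully.

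Summing over blocks and applying AM--GM, $\sum_n\nu_{m_n}^\star\nu_{m_{n-1}}^\star\le\sum_m(\nu_m^\star)^2$, where $m$ ranges over the selected updates (one per block). For a single selected update $m$, since $\pi^\star$ completes no update $\ge m$ before $r_m^\star=g_m+\nu_m^\star$, we have $\Delta_{\pi^\star}(\tau)\ge\tau-g_{m-1}$ on $[g_m,r_m^\star)$, whence $(\nu_m^\star)^2\le 2\int_{g_m}^{r_m^\star}\Delta_{\pi^\star}(\tau)\,d\tau$. It then remains to sum these integrals over all selected $m$ and compare with $\int_0^t\Delta_{\pi^\star}$. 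Putting the pieces together (the factor $2$ in the block bound, the AM--GM reduction, and the $(\nu_m^\star)^2$ estimate, multiplied by the cost of the last summation) is what is tuned to produce the claimed factor $28$.

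The easy parts are the block decomposition and the invocations of Lemma~\ref{lemma:LCFS+properties}, which transfer almost verbatim from the SRPT$^+$ analysis. The delicate part — where I expect most of the work — is the final summation: the intervals $[g_m,r_m^\star)$ over the selected updates need not be disjoint, so naively charging $(\nu_m^\star)^2$ to the area of $\pi^\star$'s AoI curve over $[g_m,r_m^\star)$ double-counts. Resolving this requires a more refined charging that exploits the block structure together with the monotonicity of $\lambda^\star$ (the fact that $\pi^\star$ has completed only updates $<m$ throughout that interval, so its AoI there grows at unit rate from $\ge\delta_m$), and it is here that the absolute constant making the total $28$ is pinned down.
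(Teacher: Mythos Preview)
Your block decomposition and the structural claims are sound: when the blocks $\cB_n$ partition \emph{all} updates, the update started at the common $b$-value is indeed $m_n$, the preemption (if any) of $m_{n-1}$ is by $m_n$, and for every block whose first element lies in $G_>^L$ the chain $g_{m_n}-g_{m_{n-1}}\le b_{m_n}^L-g_{m_{n-1}}\le r_{m_{n-1}}^L-g_{m_{n-1}}\le 2\nu_{m_{n-1}}^{\min}$ goes through (a block whose first element is in $G_=^L$ is a singleton and contributes nothing). So you correctly arrive at a sum of products $2\,\nu_{m_n}^{\min}\nu_{m_{n-1}}^{\min}$, and your AM--GM step reduces the problem to controlling $\sum_p(\nu_p^{\min})^2$ over the selected updates.

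The genuine gap is exactly where you flag it, but your proposed resolution does not close it. Monotonicity of $\lambda^\star$ only tells you that $\Delta_{\pi^\star}(\tau)\ge\tau-g_{m-1}$ on $[g_m,r_m^\star)$; it gives no handle on how many of the intervals $[g_{m_n},g_{m_n}+\nu_{m_n}^{\min})$ can pile up over the same point, and after AM--GM you have lost the link between $\nu_{m_n}^{\min}$ and the inter-block span $g_{m_n}-g_{m_{n-1}}$ that would let you recover disjointness. The paper does \emph{not} proceed by AM--GM at all. Instead it splits the blocks into two regimes: when the span $g_{m_n}-g_{0_n}$ dominates $\nu_{k_n}^{\min}$, it charges the product to the \emph{span} interval $(g_{0_n},g_{m_n}]$, and these intervals are disjoint because $0_n$ and $k_n$ are consecutive updates that SRPT$^L$ actually starts (this yields the~$8$). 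When $\nu_{k_n}^{\min}$ dominates, the paper performs a \emph{second} grouping of blocks according to a common value of $r^L$, collapses each group to a single $\nu$-interval, and proves that these $\nu$-intervals are disjoint for \emph{alternate} groups via an argument specific to SRPT$^L$'s ``always pick the latest'' rule (this yields the~$20$). The constant $28$ is $8+20$, not a multiplicity bound coming from your AM--GM reduction; nothing in your outline produces either the span/size dichotomy or the secondary grouping with alternating disjointness, and without one of these devices the overlap of the $[g_{m_n},g_{m_n}+\nu_{m_n}^{\min})$ intervals is not controlled.
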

\begin{proof}[Proof Sketch (for detailed proof, see Appendix \ref{app:proof-lemma-LCFS-technical-lemma})]
	The basic idea is to partition the set $G_>^L$ into subsets $\cA_n$'s such that corresponding to each subset $\cA_n$, there exists a time-interval $\chi_n$, such that 
	\begin{align} \label{eq:sketch-LCFS}
		\sum_{i\in\cA_n}\delta_i d_i^L\le c_n\cdot\int_{\chi_n}\Delta_{\text{SRPT}^L}(\tau)d\tau,
	\end{align}
	for some constant $c_n$. We consider $\cA_n$'s, such that the generation time $b_i^L$ for all updates $i\in\cA_n$ (for fixed $n$) are equal. Correspondingly, we define $\chi_n$ as the interval between the generation time of the latest update generated before the updates in $\cA_n$, and the generation time of the earliest update whose transmission begins under LCFS$^L$ after all the updates in $\cA_n$ are generated.
	
	Since $\cA_n$'s partition $G_>^L$, summing both sides of \eqref{eq:sketch-LCFS} over all $n$, we get 
	\begin{align} \label{eq:sketch-LCFS-2}
		\underset{t\to\infty}{\lim}\frac{1}{t}\sum_{i\in G_>^L}\delta_i d_i^L\le \underset{t\to\infty}{\lim}\frac{1}{t}\sum_{\forall n}\int_{\chi_n}c_n\Delta_{\pi^\star}(\tau)d\tau.
	\end{align} 
	
	Note that the L.H.S. of \eqref{eq:sketch-LCFS-2} is equal to the L.H.S. of Lemma \ref{lemma:LCFS-technical-lemma}. Thus, to prove Lemma \ref{lemma:LCFS-technical-lemma}, we show that the R.H.S. of Lemma \ref{lemma:LCFS-technical-lemma} is an upper bound on the R.H.S. of \eqref{eq:sketch-LCFS-2}.
	To show this, we classify the intervals $\chi_n$'s into disjoint groups such that $c_n$'s are equal for each interval $\chi_n$ within a group. For the choice of $\cA_n$'s in our proof, we identify two groups i.e. $G_>^L(1)$ and $G_>^L(2)$, such that $c_n=8$ for each $n\in G_>^L(1)$, and $c_n=10$ for each $n\in G_>^L(2)$. Thus, \eqref{eq:sketch-LCFS-2} simplifies to 
	\begin{align} \label{eq:sketch-LCFS-3}
		\underset{t\to\infty}{\lim}\frac{1}{t}\sum_{i\in G_>^L}\delta_i d_i^L\le \underset{t\to\infty}{\lim}\frac{8}{t}\sum_{n\in G_>^L(1)}\int_{\chi_n}\Delta_{\pi^\star}(\tau)d\tau+\underset{t\to\infty}{\lim}\frac{10}{t}\sum_{n\in G_>^L(2)}\int_{\chi_n}\Delta_{\pi^\star}(\tau)d\tau.
	\end{align} 
	
	Further, we show that $\forall n\in G_>^L(1)$, $\chi_n$'s are disjoint, and $\cup_{n\in G_>^L(1)}\chi_n\subseteq [0,t]$. Therefore, 
	\begin{align} \label{eq:sketch-LCFS-4}
		\lim_{t\to\infty}\frac{8}{t}\sum_{n\in G_>^L(1)}\int_{\chi_n}\Delta_{\pi^\star}(\tau)d\tau\le \lim_{t\to\infty}\frac{8}{t}\int_{0}^t\Delta_{\pi^\star}(\tau)d\tau= 8\aaoi_{\pi^\star}
	\end{align}
	However, $\chi_n$'s for $n\in G_>^L(1)$ may not be disjoint, but we show that alternate intervals are disjoint, i.e. intervals $\chi_1, \chi_3,\chi_5,\cdots$ are mutually disjoint, and intervals $\chi_2, \chi_4,\chi_6,\cdots$ are mutually disjoint. Hence, $\underset{t\to\infty}{\lim}\frac{10}{t}\sum_{n\in G_>^L(2)}\int_{\chi_n}\Delta_{\pi^\star}(\tau)d\tau$
	\begin{align} \label{eq:sketch-LCFS-5}
		=&\underset{t\to\infty}{\lim}\frac{10}{t}\sum_{n\in G_>^L(2),n\text{ odd}}\int_{\chi_n}\Delta_{\pi^\star}(\tau)d\tau +\underset{t\to\infty}{\lim}\frac{10}{t}\sum_{n\in G_>^L(2),n\text{ even}}\int_{\chi_n}\Delta_{\pi^\star}(\tau)d\tau, \nonumber \\
		\le& \underset{t\to\infty}{\lim}\frac{10}{t}\int_{0}^t\Delta_{\pi^\star}(\tau)d\tau +\underset{t\to\infty}{\lim}\frac{10}{t}\int_{0}^t\Delta_{\pi^\star}(\tau)d\tau, \nonumber \\
		=&10\aaoi_{\pi^\star}+10\aaoi_{\pi^\star}.
	\end{align}
	Substituting \eqref{eq:sketch-LCFS-4} and \eqref{eq:sketch-LCFS-5} into \eqref{eq:sketch-LCFS-3}, we get Lemma \ref{lemma:LCFS-technical-lemma}. 
\end{proof}

\section{Proof of Lemma \ref{lemma:LCFS-technical-lemma}} \label{app:proof-lemma-LCFS-technical-lemma}

\begin{proof}
Partition the set $G_>^L$ into largest subsets $\cA_1,\cA_2,\cdots$, such that a) for any subset $\cA_n$, 
$b_i^L$ 
is same for all updates $i\in\cA_n$, and b) for update $i\in\cA_n$ and $j\in\cA_{n'}$, $b_i^L< b_j^L$ whenever $n< n'$. Further, let $\cB$ denote the set of all subsets $\cA_n$'s that partition $G_>^L$ ($\cA_n$'s are sets of updates, while $\cB$ is a set of subsets $\cA_n$'s). 
\begin{remark}
	The second condition in the definition of $\cA_n$'s implies that the updates in $\cA_n$ are generated before the updates in $\cA_{n+1}$, $\forall n$.
\end{remark}
\begin{remark}
	Note that the defintion of $\cA_n$'s that we use in this section (and onwards) is different from that in Appendix \ref{app:proof-lemma-SRPT+-2} in one aspect. In Appendix \ref{app:proof-lemma-SRPT+-2}, $A_n$'s partition the set of all updates generated at the source, whereas in this section $\cA_n$'s partition the subset of updates $G_>^L$.
\end{remark}

Since the subsets $\cA_n\in\cB$ partition $\cG_>^L$, we get
\begin{align} \label{eq:split-1}
	\lim_{t\to\infty}\frac{1}{t}\sum_{i\in G_{>}^{L}}\delta_i d_{i}^{L}=\lim_{t\to\infty}\frac{1}{t}\sum_{\cA_n\in\cB}\sum_{i\in \cA_n}\delta_i d_{i}^{L}.
\end{align}

Consider a subset $\cA_n\in\cB$ (for some $n$). Without loss of generality, let the 
updates in $\cA_n$ be numbered (ordered) as $1_n,2_n,\cdots,m_n$, in increasing order of their generation times.
Also, let the latest update generated before update $1_n$ be $0_n$. 
Note that 
$0_n\not\in\cA_n$. 

By definition of $\cA_n$, $b_{1_n}^L=\cdots=b_{m_n}^L$. Additionally, since SRPT$^L$ never transmits an older update after it begins to transmit a new update, we have 
\begin{align} \label{eq:r-i-n-equal}
	r_{1_n}^L=\cdots=r_{m_n}^L.
\end{align}
Combining these two facts, we get $r_{1_n}^L-b_{1_n}^L=\cdots=r_{m_n}^L-b_{m_n}^L$, i.e. $d_{1_n}^L=\cdots=d_{m_n}^L$. Thus, 
\begin{align} \label{eq:lhs-1}
	\sum_{i\in\cA_n}\delta_i d_i^L=\left(\sum_{i\in\cA_n}\delta_i\right)d_{m_n}^L=(g_{m_n}-g_{0_n})d_{m_n}^L.
\end{align}


Let $k_n\ge m_n$ denote the update that SRPT$^L$ begins to transmit at time $b_{m_n}^L$. Then $d_{m_n}^L=d_{k_n}^L$. Using Lemma \ref{lemma:LCFS+properties}, we get $d_{m_n}^L=d_{k_n}^L\le \nu_{k_n}^{\min}$. Thus, upper-bounding $d_{m_n}^L$ in \eqref{eq:lhs-1} by $\nu_{k_n}^{\min}$, we get
\begin{align} \label{eq:lhs-3}
	\sum_{i\in\cA_n}\delta_i d_i^L\le \left(\sum_{i\in\cA_n}\delta_i\right)\nu_{k_n}^{\min}= (g_{m_n}-g_{0_n})\nu_{k_n}^{\min}.
\end{align}

\begin{remark} \label{remark:k_n}
	Note that $b_{k_n}^L=b_{m_n}^L$ 
	(by definition of $k_n$). Therefore, if $k_n\in G_>^L$, then it must be that $k_n=m_n$ (follows from the definition of $\cA_n$'s, and the fact that $k_n\ge m_n$). Otherwise (i.e. if $k_n\in G_=^L$), $k_n$ must be the earliest generated update after $m_n$, 
	else, the updates 
	$m_n+1,\cdots,k_n-1$ 
	would also lie in $\cA_n$, which is not possible (by definition of $m_n$).
\end{remark}
	
	Next, we partition $\cB$ (the set of subsets $\cA_n$'s) as follows:  
	\begin{align} \label{eq:G1-G2}
		G_>^L(1)=\{\cA_n\in \cB|g_{m_n}-g_{0_n}\ge \nu_{k_n}^{\min}\},\text{ and } G_>^L(2)=\{\cA_n\in \cB|g_{m_n}-g_{0_n}<\nu_{k_n}^{\min}\}.
	\end{align}
	Then, we can write 
	$\lim_{t\to\infty}\frac{1}{t}\sum_{\cA_n\in\cB}\sum_{i\in\cA_n}\delta_i d_i^L$
	\begin{align} \label{eq:split-2}
		&=\lim_{t\to\infty}\frac{1}{t}\sum_{\cA_n\in G_>^L(1)}\sum_{i\in\cA_n}\delta_i d_i^L+\lim_{t\to\infty}\frac{1}{t}\sum_{\cA_n\in G_>^L(2)}\sum_{i\in\cA_n}\delta_i d_i^L, \nonumber \\
		&\stackrel{(a)}{\le} \lim_{t\to\infty}\frac{1}{t}\sum_{\cA_n\in G_>^L(1)}(g_{m_n}-g_{0_n}) \nu_{k_n}^{\min}+\lim_{t\to\infty}\frac{1}{t}\sum_{\cA_n\in G_>^L(2)}(g_{m_n}-g_{0_n}) \nu_{k_n}^{\min},
	\end{align}
where $(a)$ follows from \eqref{eq:lhs-3}.
		
	\begin{proposition} \label{prop:split}
i) $\underset{t\to\infty}{\lim}\frac{1}{t}\sum_{\cA_n\in G_>^L(1)}(g_{m_n}-g_{0_n}) \nu_{k_n}^{\min}\le 8\aaoi_{\pi^\star}$, 
ii) $\underset{t\to\infty}{\lim}\frac{1}{t}\sum_{\cA_n\in G_>^L(2)}(g_{m_n}-g_{0_n}) \nu_{k_n}^{\min}\le 20\aaoi_{\pi^\star}$.
	\end{proposition}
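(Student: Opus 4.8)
The plan is to prove both inequalities by charging $(g_{m_n}-g_{0_n})\nu_{k_n}^{\min}$ to the area under $\pi^\star$'s age curve, using $\int_0^t\Delta_{\pi^\star}(\tau)\,d\tau = t\,\aaoi_{\pi^\star}+o(t)$. To each block $\cA_n=\{1_n,\dots,m_n\}$ I attach a time interval $\chi_n$ and a constant $c_n$, establish the pointwise bound
\[
(g_{m_n}-g_{0_n})\,\nu_{k_n}^{\min}\;\le\;c_n\int_{\chi_n}\Delta_{\pi^\star}(\tau)\,d\tau ,
\]
and then control how many of the $\chi_n$'s overlap so that the sum of the right-hand sides is a fixed multiple of $\int_0^t\Delta_{\pi^\star}$; dividing by $t$ and sending $t\to\infty$ finishes each part. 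The workhorse is the following consequence of Lemma \ref{lemma:lb-nu-star}: for any update $j$, no update $j'\ge j$ can be completely transmitted before $g_j+\nu_j^{\min}$, so $\lambda^\star(\tau)<g_j$ and hence $\Delta_{\pi^\star}(\tau)>\tau-g_j$ for all $\tau\in[g_j,g_j+\nu_j^{\min}]$, which gives $\int_{g_j}^{g_j+\nu_j^{\min}}\Delta_{\pi^\star}(\tau)\,d\tau\ge\frac{1}{2}(\nu_j^{\min})^2$. I will also use the structural facts already available for SRPT$^L$: $\cA_n$ is a contiguous block of updates, throughout $[g_{1_n},b_{m_n}^L)$ SRPT$^L$ keeps transmitting a single older update, $k_n$ is (as in Remark \ref{remark:k_n}) either $m_n$ or the first update generated after $m_n$ with $g_{m_n}\le g_{k_n}\le b_{m_n}^L$, and one checks that every $j\ge 1_n$ satisfies $g_j+s_j\ge b_{m_n}^L$ (an update that fails to preempt the older one must exceed its remaining size), whence $\nu_{1_n}^{\min}\ge b_{m_n}^L-g_{1_n}$.

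For part (i), $\cA_n\in G_>^L(1)$ means $\nu_{k_n}^{\min}\le g_{m_n}-g_{0_n}$, so the charge is at most $(g_{m_n}-g_{0_n})^2$. I take $\chi_n$ to be the interval (as in the sketch of Lemma \ref{lemma:LCFS-technical-lemma}) running from $g_{0_n}$ to the start time of the first update SRPT$^L$ begins to transmit after $\cA_n$ is fully generated; on a sub-interval of $\chi_n$ the workhorse bound forces $\Delta_{\pi^\star}$ to rise linearly to order $g_{m_n}-g_{0_n}$ over a duration of the same order, yielding $(g_{m_n}-g_{0_n})^2\le 8\int_{\chi_n}\Delta_{\pi^\star}$. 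Because the blocks $\cA_n$ are ordered by generation time and the next block's transmissions only begin after $b_{m_n}^L$, these $\chi_n$ (for $\cA_n\in G_>^L(1)$) can be taken pairwise disjoint and contained in $[0,t]$, so summing gives $\sum_{\cA_n\in G_>^L(1)}8\int_{\chi_n}\Delta_{\pi^\star}\le 8\int_0^t\Delta_{\pi^\star}$.

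For part (ii), $\cA_n\in G_>^L(2)$ means $g_{m_n}-g_{0_n}<\nu_{k_n}^{\min}$, so the charge is at most $(\nu_{k_n}^{\min})^2\le 2\int_{g_{k_n}}^{g_{k_n}+\nu_{k_n}^{\min}}\Delta_{\pi^\star}$ by the workhorse bound. The trouble is that these natural intervals have length $\nu_{k_n}^{\min}$, which can be arbitrarily large, so many of them may pile up at the same time; the plan is to re-anchor them onto the same intervals $\chi_n$ used above at the cost of a bounded factor, which raises the constant to $c_n=10$, and to prove that $\chi_n$ and $\chi_{n+2}$ are always disjoint even though $\chi_n$ and $\chi_{n+1}$ need not be. Splitting the sum over $n\in G_>^L(2)$ into its odd-indexed and even-indexed parts, each part is a disjoint union of intervals inside $[0,t]$ and hence contributes at most $10\int_0^t\Delta_{\pi^\star}$, for a total of $20\int_0^t\Delta_{\pi^\star}$, as claimed.

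The main obstacle is precisely the overlap control in part (ii). In part (i) the contiguity of the blocks disjointifies the charging intervals for free, but in part (ii) a single large $\nu_{k_n}^{\min}$ can make the "natural" intervals stack up, so the delicate steps are (a) showing that re-anchoring the charge from $[g_{k_n},g_{k_n}+\nu_{k_n}^{\min}]$ onto $\chi_n$ loses only a constant factor that is absorbed into $c_n=10$, and (b) proving the skip-one disjointness $\chi_n\cap\chi_{n+2}=\varnothing$; tracking these carefully — and confirming that the resulting constants are exactly $8$ and $20$ — is where essentially all of the remaining work lies.
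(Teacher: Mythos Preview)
Your plan for part (i) is essentially the paper's argument. The charging intervals are $(g_{0_n},g_{m_n}]$, which are pairwise disjoint across all $\cA_n$, and the key estimate is that SRPT$^L$ actually begins to transmit update $0_n$, so by Lemma~\ref{lemma:LCFS+properties} one has $g_{m_n}-g_{0_n}\le r_{0_n}^L-g_{0_n}=\nu_{0_n}^L\le 2\nu_{0_n}^{\min}$; combined with the workhorse bound this gives $\int_{g_{0_n}}^{g_{m_n}}\Delta_{\pi^\star}\ge\tfrac12\min\{(g_{m_n}-g_{0_n})^2,(\nu_{0_n}^{\min})^2\}\ge\tfrac18(g_{m_n}-g_{0_n})^2$. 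Note that the ingredient you state, $\nu_{1_n}^{\min}\ge b_{m_n}^L-g_{1_n}$, is anchored at $1_n$ rather than $0_n$ and does not by itself control the full span $g_{m_n}-g_{0_n}$; you need the version anchored at $0_n$.

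Part (ii) has a real gap: neither reading of your $\chi_n$ works. If $\chi_n$ is literally ``the same interval used above'' (contained in $(g_{0_n},b_{m_n}^L]$), those are already pairwise disjoint, but then the re-anchoring bound $(g_{m_n}-g_{0_n})\nu_{k_n}^{\min}\le 10\int_{\chi_n}\Delta_{\pi^\star}$ is simply false, because for $\cA_n\in G_>^L(2)$ the factor $\nu_{k_n}^{\min}$ can be arbitrarily large relative to the length of $\chi_n$. If instead you keep $\chi_n=[g_{k_n},g_{k_n}+\nu_{k_n}^{\min}]$, the pointwise bound holds (with $c_n=2$), but the skip-one disjointness you assert fails: arbitrarily many consecutive blocks $\cA_n\in G_>^L(2)$ can share the \emph{same} completion time $r_{\cA_n}^L$ (each $k_n$ preempting the previous one in a long chain), and then all of their intervals $[g_{k_n},g_{k_n}+\nu_{k_n}^{\min}]$ lie inside a single window ending at that common $r^L$ and overlap heavily, so splitting into odd and even indices does not disjointify them.

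The paper does not charge block by block. It further partitions $G_>^L(2)$ into groups $\cB_\ell$ according to the common value of $r_{\cA_n}^L$, shows that within one group all $\nu_{k_n}^{\min}\le 2\nu_{k_1}^{\min,\ell}$ and that the spans $g_{k_n}-g_{0_n}$ telescope inside a window of length $\le 2\nu_{k_1}^{\min,\ell}$, and thereby bounds the \emph{total} charge of the group by $5(\nu_{k_1}^{\min,\ell})^2\le 10\int_{g_{k_1}^\ell}^{g_{k_1}^\ell+\nu_{k_1}^{\min,\ell}}\Delta_{\pi^\star}$. The skip-one disjointness is then proved at the level of the index $\ell$ (one interval per $\cB_\ell$), not per block $\cA_n$. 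This grouping by common $r^L$ is the missing idea in your plan.
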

\begin{proof}
	See Appendix \ref{app:proof-prop-split}.
\end{proof} 

Combining \eqref{eq:split-1}, \eqref{eq:split-2} and Proposition \ref{prop:split}, we get $\lim_{t\to\infty}\frac{1}{t}\sum_{i\in G_{>}^{L}}\delta_i d_{i}^{L}\le 28\aaoi_{\pi^\star}$.
\end{proof}

\section{Proof of Proposition \ref{prop:split}} \label{app:proof-prop-split}
\subsection{$\lim_{t\to\infty}\frac{1}{t}\sum_{\cA_n\in G_>^L(1)}(g_{m_n}-g_{0_n}) \nu_{k_n}^{\min}\le 8\aaoi_{\pi^\star}$.}

\begin{proof}
Consider the following general result.
			\begin{proposition} \label{prop:nu-0-star-ub}
				For each subset $\cA_n$, 
				i) SRPT$^L$ begins to transmit update $0_n$, and ii) $g_{m_n}-g_{0_n}\le 2\nu_{0_n}^{\min}$.
			\end{proposition}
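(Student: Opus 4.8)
The plan is to establish part i) first and feed it into part ii). Throughout, recall that by definition $0_n$ is the update generated immediately before $1_n$, so $0_n+1=1_n$ and no update is generated in the open interval $(g_{0_n},g_{1_n})$.

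\emph{Part i).} I would distinguish the two cases $0_n\in G_=^L$ and $0_n\in G_>^L$. If $0_n\in G_=^L$, then $b_{0_n}^L=g_{0_n}$, and since at time $g_{0_n}$ the only update with index $\ge 0_n$ available at the source is $0_n$ itself, SRPT$^L$ begins transmitting $0_n$ at $g_{0_n}$. If $0_n\in G_>^L$, then $0_n$ belongs to some subset $\cA_{n'}$; since the subsets $\cA_n$ are ordered by generation time (the remark following their definition) and $0_n$ is generated immediately before $1_n\in\cA_n$, a short argument forces $0_n$ to be the last update $m_{n-1}$ of $\cA_{n-1}$ (any update of $\cA_{n'}$ generated after $0_n$ would be generated at time $\ge g_{1_n}$, hence would lie in a later subset). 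Now apply the remark preceding Proposition~\ref{prop:split} to the update $k_{n-1}\ge m_{n-1}$ that SRPT$^L$ starts transmitting at $b_{m_{n-1}}^L$: if $k_{n-1}\in G_>^L$ then $k_{n-1}=m_{n-1}=0_n$; if $k_{n-1}\in G_=^L$ then $k_{n-1}$ is the update generated immediately after $m_{n-1}=0_n$, namely $1_n$, contradicting $1_n\in G_>^L$. Hence $k_{n-1}=0_n$ and SRPT$^L$ begins transmitting $0_n$. (When $n=1$ there is no $\cA_0$, so $0_1\in G_=^L$ necessarily and the first case applies.)

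\emph{Part ii).} Since SRPT$^L$ begins transmitting $0_n$, Lemma~\ref{lemma:LCFS+properties} applies with $i=0_n$: property~1 gives $w_{0_n}^L\le\nu_{0_n}^{\min}$, so $b_{0_n}^L\le g_{0_n}+\nu_{0_n}^{\min}$, and property~3 gives $d_{0_n}^L\le\nu_{0_n}^{\min}$; hence $r_{0_n}^L=b_{0_n}^L+d_{0_n}^L\le g_{0_n}+2\nu_{0_n}^{\min}$. Because all updates of $\cA_n$ share the common start time $b_{m_n}^L=b_{1_n}^L\ge g_{m_n}$, it now suffices to prove $b_{1_n}^L\le r_{0_n}^L$. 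For this, consider what SRPT$^L$ does after $b_{0_n}^L$, where it transmits $0_n$. \emph{If $0_n$ is preempted before completing}, say first by update $j_1$ at time $g_{j_1}$, then $j_1>0_n$ gives $j_1\ge 1_n$, so $b_{1_n}^L\le g_{j_1}$; and since $0_n$, the only update $\ge 0_n$ transmitted so far, is still unfinished at $g_{j_1}$, we have $g_{j_1}\le r_{0_n}^L$, hence $b_{1_n}^L\le r_{0_n}^L$. \emph{If $0_n$ is never preempted}, it completes at $r_{0_n}^L=b_{0_n}^L+s_{0_n}$; I first rule out $r_{0_n}^L<g_{1_n}$, since otherwise (as no update is generated in $(g_{0_n},g_{1_n})$) SRPT$^L$ would idle after completing $0_n$ and then begin transmitting $1_n$ at $g_{1_n}$, giving $b_{1_n}^L=g_{1_n}$ and contradicting $1_n\in G_>^L$. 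Hence $r_{0_n}^L\ge g_{1_n}$, so at time $r_{0_n}^L$ the latest generated update $u$ satisfies $u\ge 1_n$; since $u$ was generated after $b_{0_n}^L$ and SRPT$^L$ transmitted only $0_n$ on $[b_{0_n}^L,r_{0_n}^L]$, $u$ is not yet completely transmitted, so SRPT$^L$ begins transmitting $u\ge 1_n$ at $r_{0_n}^L$, giving $b_{1_n}^L\le r_{0_n}^L$. Combining, $g_{m_n}-g_{0_n}\le r_{0_n}^L-g_{0_n}\le 2\nu_{0_n}^{\min}$.

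The main obstacle, I expect, is the last step, $b_{1_n}^L\le r_{0_n}^L$, and within it the no-preemption case where one must exclude the possibility that $0_n$ finishes transmission before the batch $\cA_n$ has even been generated; this is precisely where one must exploit the membership $1_n\in G_>^L$ together with the fact that $0_n$ and $1_n$ are consecutive in generation order, and the absence of arrivals in $(g_{0_n},g_{1_n})$. Everything else reduces to combining Lemma~\ref{lemma:LCFS+properties} with the structural facts about the $\cA_n$'s already recorded before Proposition~\ref{prop:split}.
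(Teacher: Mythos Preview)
Your proof is correct and follows essentially the same route as the paper: both first establish that SRPT$^L$ actually starts transmitting $0_n$, then combine $w_{0_n}^L+d_{0_n}^L\le 2\nu_{0_n}^{\min}$ from Lemma~\ref{lemma:LCFS+properties} with the observation that $r_{0_n}^L\ge g_{m_n}$ (equivalently, your $b_{1_n}^L\le r_{0_n}^L$ together with $g_{m_n}\le b_{1_n}^L$).

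The only notable difference is in part~i). The paper argues more directly: from $0_n\notin\cA_n$ one gets $b_{0_n}^L<b_{1_n}^L$ in either case ($0_n\in G_=^L$ or $0_n\in\cA_{n'}$ with $n'<n$), and since $0_n$ and $1_n$ are consecutive, the update that starts at time $b_{0_n}^L$ must be $0_n$ itself. Your detour through identifying $0_n=m_{n-1}$ and invoking Remark~\ref{remark:k_n} is valid but longer; the paper's argument avoids it entirely. Conversely, in part~ii) you are more explicit than the paper about the subcase $r_{0_n}^L<g_{1_n}$ (where completing $0_n$ before any $\cA_n$ update is generated would force $1_n\in G_=^L$), which the paper leaves implicit.
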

\begin{proof}
	Since the subsets $\cA_1,\cA_2,\cdots$ partition $G_>^L$, an update may not lie in $\cA_n$ either if it does not lie in $G_>^L$ (i.e. the update lies in $G_=^L$), or it lies in subset $\cA_{n'}$ for some $n'\ne n$. 
	Because update $0_n\not\in\cA_n$ and it is the latest generated update before the updates in $\cA_n$ (by definition), either $0_n\in\cA_{n-1}$, which implies $b_{0_n}^L<b_{1_n}^L$, or ${0_n}\in G_=^L$, which implies $b_{0_n}^L=g_{0_n}<g_{1_n}<b_{1_n}^L$. Thus, in both cases $b_{0_n}^L<b_{1_n}^L$, which implies $b_{0_n}^L<b_{1_n}^L=\cdots=b_{m_n}^L$.
	
	Recall that $b_{0_n}^L$ and $b_{1_n}^L$ are respectively the earliest time instants when under SRPT$^L$, the transmission of an update $j\ge 0_n$ and $j'\ge 1_n$ begins. Since updates $0_n$ and $1_n$ are consecutively generated, and $b_{0_n}^L<b_{1_n}^L$, we get that at $b_{0_n}^L$, SRPT$^L$ begins to transmit update $0_n$.  
	Moreover, since SRPT$^L$ begins to transmit update $0_n$, we get 
	\begin{align} \label{eq:ub-nu-0-L}
		\nu_{0_n}^L=r_{0_n}^L-g_{0_n}=w_{0_n}^L+d_{0_n}^L \stackrel{(a)}{\le} 2\nu_{0_n}^{\min}\stackrel{(b)}{\le} 2\nu_{0_n}^\star,
	\end{align}
where $(a)$ and $(b)$ follows  Lemma \ref{lemma:LCFS+properties}. 
	
	Note that whenever transmission of an update completes and there is a new update to transmit, SRPT$^L$ begins to transmit it immediately. Therefore, if the earliest time when the transmission of an update $j\ge 0_n$ completes is $r_{0_n}^L<g_{m_n}$, then the transmission of some update $j'\in\{1_n,\cdots,m_n-1\}$ will begin before time $g_{m_n}$. But this cannot be true because $b_{1_n}^L=\cdots=b_{m_n}^L$. Hence, $r_{0_n}^L\ge g_{m_n}$, which implies that $r_{0_n}^L-g_{0_n}\ge g_{m_n}-g_{0_n}$. Comparing this result with \eqref{eq:ub-nu-0-L}, we get $g_{m_n}-g_{0_n}\le r_{0_n}^L-g_{0_n}\le 2\nu_{0_n}^{\min}$. 
\end{proof}

From Proposition \ref{prop:nu-0-star-ub} and Remark \ref{remark:k_n}, it follows that $0_n$ and $k_n\ge m_n$ are two successive updates that SRPT$^L$ begins to transmit. Therefore, the intervals $(g_{0_n},g_{m_n}]$, $\forall \cA_n\in\cB$, are subsets of the time intervals between the generation time of two successive updates that SRPT$^L$ begins to transmit. Hence, the intervals $(g_{0_n},g_{m_n}]$, $\forall \cA_n\in\cB$  are disjoint, which implies 
\begin{align} \label{eq:break-integral}
	\aaoi_{\pi^\star}=\lim_{t\to\infty}\frac{1}{t}\int_{0}^{t}\Delta_{\pi^\star}(\tau)d\tau &\ge \lim_{t\to\infty}\frac{1}{t}\sum_{\cA_n\in \cB}\int_{g_{0_n}}^{g_{m_n}}\Delta_{\pi^\star}(\tau)d\tau, \nonumber \\ 
	&\stackrel{(a)}{\ge} \lim_{t\to\infty}\frac{1}{t}\sum_{\cA_n\in G_>^L(1)}\int_{g_{0_n}}^{g_{m_n}}\Delta_{\pi^\star}(\tau)d\tau,
\end{align}
where $(a)$ follows because $G_>^L(1)$ is a subset of $\cB$ (from \eqref{eq:G1-G2}), and summands in \eqref{eq:break-integral} are non-negative (by definition).

From Lemma \ref{lemma:lb-nu-star}, we know that the earliest time instant by which an offline optimal policy $\pi^\star$ can completely transmit an update $j\ge 0_n$ is at least $g_{0_n}+\nu_{0_n}^{\min}$. Thus, with $\pi^\star$, at any time $t\in(g_{0_n},g_{0_n}+\nu_{0_n}^{\min}]$, the generation time of the latest completely transmitted update is $\lambda^\star(t)\le g_{0_n}$, which implies that the AoI of $\pi^\star$ at time $t\in(g_{0_n},g_{0_n}+\nu_{0_n}^{\min}]$ is $\Delta_{\pi^\star}(t)= t-\lambda^\star(t)\ge t-g_{0_n}$. Hence, 

\begin{align} \label{eq:LCFS-proof-p1}
	\int_{g_{0_n}}^{g_{m_n}}\Delta_{\pi^\star}(\tau)d\tau&\ge\int_{g_{0_n}}^{\min\{g_{m_n},\ \ g_{0_n}+\nu_{0_n}^{\min}\}}\Delta_{\pi^\star}(\tau)d\tau, \nonumber \\
	&\ge \frac{\min\{(g_{m_n}-g_{0_n})^2,(\nu_{0_n}^{\min})^2\}}{2}, \nonumber \\
	&\stackrel{(a)}{\ge} \frac{(g_{m_n}-g_{0_n})^2}{8}, \nonumber \\
	&\stackrel{(b)}{\ge} \frac{(g_{m_n}-g_{0_n})\nu_{k_n}^{\min}}{8}, 
\end{align}
where $(a)$ follows from Proposition \ref{prop:nu-0-star-ub}, and $(b)$ is true for $\cA_n\in G_>^L(1)$ (from \eqref{eq:G1-G2}). 

Comparing \eqref{eq:break-integral} and \eqref{eq:LCFS-proof-p1}, 
we get 
$$	\lim_{t\to\infty}\frac{1}{t}\sum_{\cA_n\in G_>^L(1)}(g_{m_n}-g_{0_n}) \nu_{k_n}^{\min}\le \lim_{t\to\infty}\frac{8}{t}\int_{0}^{t}\Delta_{\pi^\star}(\tau)d\tau = 8\aaoi_{\pi^\star}. \hfil \qedhere $$
\end{proof}


\subsection{$\lim_{t\to\infty}\frac{1}{t}\sum_{\cA_n\in G_>^L(2)}(g_{m_n}-g_{0_n}) \nu_{k_n}^{\min}\le 20\aaoi_{\pi^\star}$.}

%
\begin{proof}
From \eqref{eq:r-i-n-equal}, we know that for each update $i\in\cA_n$, $r_i^L$'s are equal.
Therefore, we define  
\begin{align} \label{eq:r-A-n}
	r_{\cA_n}^L=r_{1_n}^L=r_{2_n}^L=\cdots=r_{m_n}^L.
\end{align}   
Note that $r_i^L$'s may also be equal for updates in successive subsets 
$\cA_n$ and $\cA_{n+1}$, for example, when update $k_n$ is preempted by update $k_{n+1}$. 
Therefore, we partition $G_>^L(2)$ \eqref{eq:G1-G2} into sets $\cB_1, \cB_2, \cdots$ such that $r_{\cA_n}^L=r_{\cA_{n'}}^L$ \eqref{eq:r-A-n} if and only if both $\cA_n$ and $\cA_{n'}$ lie in the same set $\cB_\ell$ (for some $\ell$). 
Further, we define $r_{\cB_\ell}^L=r_{\cA_n}^L$, $\forall \cA_n\in\cB_\ell$, and let the order $\cB_1, \cB_2, \cdots$ be such that $r_{\cB_1}^L<r_{\cB_2}^L<\cdots$. 
Then, 
\begin{align} \label{eq:split-3}
	\lim_{t\to\infty}\frac{1}{t}\sum_{\cA_n\in G_>^L(2)}(g_{m_n}-g_{0_n}) \nu_{k_n}^{\min}=\lim_{t\to\infty}\frac{1}{t}\sum_{\forall \ell}\sum_{\cA_n\in \cB_\ell}(g_{m_n}-g_{0_n}) \nu_{k_n}^{\min}.
\end{align}

Consider any particular set $\cB_\ell$. For notational clarity, we distinguish the subsets $\cA_n\in\cB_\ell$ (and related variables such as $g_{0_n}$, $g_{k_n}$, etc.) by a superscript $\ell$. Let the number of subsets $\cA_n\in\cB_\ell$ be $u_\ell$. Then, $\cA_1^\ell,\cA_2^\ell, \cdots, \cA_{u_\ell}^\ell$ denotes the subsets $\cA_n\in\cB_\ell$, in increasing order of $g_{k_n}^\ell$'s, where $g_{k_n}^\ell$ denotes the generation time of update $k_n^\ell$, and $k_n^\ell$ is the earliest generated update that SRPT$^L$ begins to transmit after all the updates in $\cA_n^\ell$ are generated. 

\begin{proposition} \label{prop:split-2}
	i) $\sum_{\cA_n^\ell\in \cB_\ell}(g_{m_n}^\ell-g_{0_n}^\ell) \nu_{k_n}^{\min,\ell}\le 10\int_{g_{k_1}^\ell}^{g_{k_1}^\ell+\nu_{k_1}^{\min,\ell}}\Delta_{\pi^\star}(\tau)d\tau$, and ii) time intervals $(g_{k_1}^\ell,g_{k_1}^\ell+\nu_{k_1}^{\min,\ell}]$ and $(g_{k_1}^{\ell+2},g_{k_1}^{\ell+2}+\nu_{k_1}^{\min,\ell+2}]$ are disjoint, $\forall \ell$.
\end{proposition}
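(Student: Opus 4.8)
The plan is to handle each set $\cB_\ell$ separately, first pinning down the combinatorial structure of the subsets $\cA_1^\ell,\dots,\cA_{u_\ell}^\ell$ it contains and then reading both claims off a few inequalities between $r_{\cB_\ell}^L$, the $g_{k_n^\ell}$'s, the $\nu_{k_n}^{\min,\ell}$'s, and the intervals $(g_{0_n^\ell},g_{m_n^\ell}]$ (whose pairwise disjointness was already established for all $\cA_n\in\cB$). The structural fact I would establish is that the subsets in $\cB_\ell$ all lie inside one SRPT$^L$ busy period ending at $r_{\cB_\ell}^L$: since they share the completion time $r_{\cA_n}^L=r_{\cB_\ell}^L$, SRPT$^L$ completes nothing strictly between $b_{\cA_1^\ell}^L$ and $r_{\cB_\ell}^L$. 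Hence $k_n^\ell$ is exactly the update SRPT$^L$ begins transmitting at $b_{\cA_n^\ell}^L$, so $b_{k_n^\ell}^L=b_{\cA_n^\ell}^L$ and $r_{k_n^\ell}^L=r_{\cB_\ell}^L$, and for $n\ge 2$ that switch must be a preemption, giving $b_{\cA_n^\ell}^L=g_{k_n^\ell}$. From Lemma~\ref{lemma:LCFS+properties} (parts 1 and 3), $b_{\cA_1^\ell}^L=g_{k_1^\ell}+w_{k_1^\ell}^L\le g_{k_1^\ell}+\nu_{k_1}^{\min,\ell}$ and $d_{k_1^\ell}^L\le\nu_{k_1}^{\min,\ell}$, so
\[
 r_{\cB_\ell}^L=b_{\cA_1^\ell}^L+d_{k_1^\ell}^L\le g_{k_1^\ell}+2\nu_{k_1}^{\min,\ell},
\]
while Lemma~\ref{lemma:lb-nu-star} applied to SRPT$^L$ gives $\nu_{k_n}^{\min,\ell}\le r_{k_n^\ell}^L-g_{k_n^\ell}=r_{\cB_\ell}^L-g_{k_n^\ell}$; since $g_{k_1^\ell}\le g_{k_2^\ell}\le\cdots$ this yields $\nu_{k_n}^{\min,\ell}\le r_{\cB_\ell}^L-g_{k_1^\ell}\le 2\nu_{k_1}^{\min,\ell}$ for every $n$.

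For part (i) I would separate the $n=1$ term, where $\cA_1^\ell\in G_>^L(2)$ gives directly $(g_{m_1}^\ell-g_{0_1}^\ell)\nu_{k_1}^{\min,\ell}<(\nu_{k_1}^{\min,\ell})^2$. For $n\ge 2$, using $m_n^\ell\le k_n^\ell$, on $\tau\in(g_{0_n^\ell},g_{m_n^\ell}]$ we have $r_{\cB_\ell}^L-\tau\ge r_{\cB_\ell}^L-g_{k_n^\ell}\ge\nu_{k_n}^{\min,\ell}$, hence $(g_{m_n}^\ell-g_{0_n}^\ell)\nu_{k_n}^{\min,\ell}\le\int_{g_{0_n^\ell}}^{g_{m_n^\ell}}(r_{\cB_\ell}^L-\tau)\,d\tau$. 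Summing over $n\ge 2$ and using that the $(g_{0_n^\ell},g_{m_n^\ell}]$ are disjoint and (for $n\ge2$) contained in $(g_{k_1^\ell},r_{\cB_\ell}^L]$ — because no update of $\cA_2^\ell$ can be generated before $g_{k_1^\ell}$, otherwise its $b^L$ would be at most $b_{\cA_1^\ell}^L$ — gives $\sum_{n\ge2}(g_{m_n}^\ell-g_{0_n}^\ell)\nu_{k_n}^{\min,\ell}\le\tfrac12(r_{\cB_\ell}^L-g_{k_1^\ell})^2\le 2(\nu_{k_1}^{\min,\ell})^2$. Finally, since $\pi^\star$ cannot complete any update $j\ge k_1^\ell$ before $g_{k_1^\ell}+\nu_{k_1}^{\min,\ell}$ (Lemma~\ref{lemma:lb-nu-star}), $\Delta_{\pi^\star}(\tau)\ge\tau-g_{k_1^\ell}$ on $(g_{k_1^\ell},g_{k_1^\ell}+\nu_{k_1}^{\min,\ell}]$, so $\int_{g_{k_1^\ell}}^{g_{k_1^\ell}+\nu_{k_1}^{\min,\ell}}\Delta_{\pi^\star}(\tau)\,d\tau\ge(\nu_{k_1}^{\min,\ell})^2/2$; adding the two pieces bounds the whole $\cB_\ell$-sum by $3(\nu_{k_1}^{\min,\ell})^2\le 10\int_{g_{k_1^\ell}}^{g_{k_1^\ell}+\nu_{k_1}^{\min,\ell}}\Delta_{\pi^\star}(\tau)\,d\tau$, which is the assertion (with slack in the constant).

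For part (ii) I would use that distinct sets $\cB_\ell$ correspond to distinct SRPT$^L$ busy periods, ordered in time by $r_{\cB_1}^L<r_{\cB_2}^L<\cdots$, so $\cB_{\ell+2}$'s busy period starts no earlier than $r_{\cB_{\ell+1}}^L\ge r_{\cB_\ell}^L$, and hence $b_{k_1^{\ell+2}}^L=b_{\cA_1^{\ell+2}}^L\ge r_{\cB_{\ell+1}}^L$. I would then argue $k_1^{\ell+2}$ is not yet generated at the start of $\cB_{\ell+1}$'s busy period: otherwise SRPT$^L$ would there begin transmitting some update no older than $k_1^{\ell+2}$ (it picks the latest available, not-yet-completed update, and $k_1^{\ell+2}$ is not completed until $\cB_{\ell+2}$), forcing $b_{k_1^{\ell+2}}^L$ to be at most that start, contradicting the previous line. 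Since that start is $\ge r_{\cB_\ell}^L\ge g_{k_1^\ell}+\nu_{k_1}^{\min,\ell}$ (the last step by Lemma~\ref{lemma:lb-nu-star}), we get $g_{k_1^{\ell+2}}>g_{k_1^\ell}+\nu_{k_1}^{\min,\ell}$, i.e. the two intervals are disjoint. The main obstacle I anticipate is the structural part: rigorously establishing the busy-period/preemption picture of $\cB_\ell$ — in particular that $b_{\cA_n^\ell}^L=g_{k_n^\ell}$ for $n\ge2$, that no $\cA_2^\ell$-update predates $g_{k_1^\ell}$, and that $k_1^{\ell+2}$ is unavailable at the start of $\cB_{\ell+1}$'s busy period — after which the remaining estimates are short computations.
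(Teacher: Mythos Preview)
Your proposal is correct and follows essentially the same skeleton as the paper: separate the $n=1$ term via the defining inequality of $G_>^L(2)$, bound the remaining terms using $r_{\cB_\ell}^L\le g_{k_1^\ell}+2\nu_{k_1}^{\min,\ell}$ and $\nu_{k_n}^{\min,\ell}\le r_{\cB_\ell}^L-g_{k_n^\ell}$, and then compare against $\int\Delta_{\pi^\star}\ge(\nu_{k_1}^{\min,\ell})^2/2$; for part~(ii) both you and the paper use that $r_{\cB_\ell}^L<r_{\cB_{\ell+1}}^L$ forces $k_1^{\ell+2}$ to be generated after $r_{\cB_\ell}^L$. The one genuine (minor) difference is your treatment of the $n\ge 2$ sum: instead of the paper's product bound $\sum_{n\ge2}(g_{k_n}^\ell-g_{0_n}^\ell)\nu_{k_n}^{\min,\ell}\le 2\nu_{k_1}^{\min,\ell}\cdot(g_{k_{u_\ell}}^\ell-g_{k_1}^\ell)\le 4(\nu_{k_1}^{\min,\ell})^2$, you use the integral $\int(r_{\cB_\ell}^L-\tau)\,d\tau$ over the disjoint intervals $(g_{0_n^\ell},g_{m_n^\ell}]$, which exploits the decay of $\nu_{k_n}^{\min,\ell}$ in $n$ and yields the sharper $2(\nu_{k_1}^{\min,\ell})^2$; this gives $3(\nu_{k_1}^{\min,\ell})^2$ versus the paper's $5(\nu_{k_1}^{\min,\ell})^2$, both comfortably inside the stated factor $10$. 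Your ``busy period'' phrasing in part~(ii) is informal (distinct $\cB_\ell$'s need not correspond to distinct idle-to-idle periods), but the facts you actually need --- that the update completed at $r_{\cB_\ell}^L$ is $<k_1^{\ell+1}$, hence $b_{k_1^{\ell+1}}^L\ge r_{\cB_\ell}^L$, and that $g_{k_1^{\ell+2}}>b_{k_1^{\ell+1}}^L$ by Lemma~\ref{lemma:LCFS+properties}(2) --- are exactly the ones the paper's contradiction argument uses.
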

\begin{proof}
	See Appendix \ref{app:proof-prop-split-2}.
\end{proof}

With newly defined notations, we have $\lim_{t\to\infty}\frac{1}{t}\sum_{\forall \ell}\sum_{\cA_n^\ell\in \cB_\ell}(g_{m_n}^\ell-g_{0_n}^\ell) \nu_{k_n}^{\min,\ell}$ 
\begin{align} \label{eq:split-4}
	&=\lim_{t\to\infty}\frac{1}{t}\sum_{\forall \ell\in\{1,3,5,\cdots\}}\sum_{\cA_n^\ell\in \cB_\ell}(g_{m_n}^\ell-g_{0_n}^\ell) \nu_{k_n}^{\min,\ell}+ \nonumber \\ 
	&\hspace{30ex}\lim_{t\to\infty}\frac{1}{t}\sum_{\forall \ell\in\{2,4,6,\cdots\}}\sum_{\cA_n^\ell\in \cB_\ell}(g_{m_n}^\ell-g_{0_n}^\ell) \nu_{k_n}^{\min,\ell}, \nonumber \\
	&\stackrel{(a)}{\le} \lim_{t\to\infty}\frac{1}{t}\sum_{\forall \ell\in\{1,3,5,\cdots\}}10\int_{g_{k_1}^\ell}^{g_{k_1}^\ell+\nu_{k_1}^{\min,\ell}}\Delta_{\pi^\star}(\tau)d\tau \nonumber \\
	&\hspace{30ex}+\lim_{t\to\infty}\frac{1}{t}\sum_{\forall \ell\in\{2,4,6,\cdots\}}10\int_{g_{k_1}^\ell}^{g_{k_1}^\ell+\nu_{k_1}^{\min,\ell}}\Delta_{\pi^\star}(\tau)d\tau, \nonumber \\
	&\stackrel{(b)}{\le} \lim_{t\to\infty}\frac{10}{t}\int_{0}^{t}\Delta_{\pi^\star}(\tau)d\tau
	+\lim_{t\to\infty}\frac{10}{t}\int_{0}^{t}\Delta_{\pi^\star}(\tau)d\tau, \nonumber \\
	&=20\aaoi_{\pi^\star}, 
\end{align}
where we get $(a)$ using the inequality in Proposition \ref{prop:split-2}, and $(b)$ follows because as shown in Proposition \ref{prop:split-2}, $(g_{k_1}^\ell,g_{k_1}^\ell+\nu_{k_1}^{\min,\ell}]$ and $(g_{k_1}^{\ell+2},g_{k_1}^{\ell+2}+\nu_{k_1}^{\min,\ell+2}]$ are disjoint sub-intervals of $(0,t]$, and the integrand $\Delta_{\pi^\star}(\tau)\ge 0$, $\forall \tau\ge0$. From \eqref{eq:split-3} and \eqref{eq:split-4}, we get $\lim_{t\to\infty}\frac{1}{t}\sum_{\cA_n\in G_>^L(2)}(g_{m_n}-g_{0_n}) \nu_{k_n}^{\min}\le 20\aaoi_{\pi^\star}$. 
\end{proof}

\section{Proof of Proposition \ref{prop:split-2}} \label{app:proof-prop-split-2}

\subsection{$\sum_{\cA_n^\ell\in \cB_\ell}(g_{m_n}^\ell-g_{0_n}^\ell) \nu_{k_n}^{\min,\ell}\le 10\int_{g_{k_1}^\ell}^{g_{k_1}^\ell+\nu_{k_1}^{\min,\ell}}\Delta_{\pi^\star}(\tau)d\tau$.}
\begin{proof}
For set $\cB_\ell=\{\cA_1^\ell,\cA_2^\ell,\cdots,\cA_{u_\ell}^\ell\}$, consider the sequence of updates $$0_1^\ell,k_1^\ell, 0_2^\ell,k_2^\ell, 0_3^\ell,k_3^\ell,\cdots,0_{u_\ell}^\ell,k_{u_\ell}^\ell.$$ 
By definition, $0_n^\ell$ is generated before the earliest generated update in $\cA_n^\ell$, while as discussed in Remark \ref{remark:k_n}, $k_n^\ell$ is either the latest generated update in $\cA_n^\ell$ (i.e. update $m_n^\ell$), or the earliest generated update after $m_n^\ell$. Therefore, $0_n^\ell$ and $k_n^\ell$ always refers to distinct updates, and 
\begin{align} \label{eq:k-for-m}
	g_{m_n}^\ell-g_{0_n}\le g_{k_n}^\ell-g_{0_n}^\ell.
\end{align} 
Further, since the updates in $\cA_n^\ell$ are generated before the updates in $\cA_{n+1}^\ell$ (follows from the numbering/ordering of $\cA_n^\ell$'s), we get $0_1^\ell<k_1^\ell\le 0_2^\ell<k_2^\ell\le 0_3^\ell <k_3^\ell\le\cdots\le 0_{u_\ell}^\ell<k_{u_\ell}^\ell$. Thus,
\begin{align} \label{eq:temp-simp-1}
	\sum_{\cA_n^\ell\in \cB_\ell}(g_{m_n}^\ell-g_{0_n}^\ell) \nu_{k_n}^{\min,\ell}&= (g_{m_1}^\ell-g_{0_1}^\ell) \nu_{k_1}^{\min,\ell}+\sum_{\cA_n^\ell\in \cB_\ell/\{\cA_1^\ell\}}(g_{m_n}^\ell-g_{0_n}^\ell) \nu_{k_n}^{\min,\ell}, \nonumber \\
	&\stackrel{(a)}{\le} (g_{m_1}^\ell-g_{0_1}^\ell) \nu_{k_1}^{\min,\ell}+\sum_{n\in\{2,3,4,\cdots,u_\ell\}}(g_{k_n}^\ell-g_{0_n}^\ell) \nu_{k_n}^{\min,\ell}, \nonumber \\
	&\stackrel{(b)}{\le} (\nu_{k_1}^{\min,\ell})^2+\sum_{n\in\{2,3,4,\cdots,u_\ell\}}(g_{k_n}^\ell-g_{0_n}^\ell) \nu_{k_n}^{\min,\ell},
\end{align}
where we get $(a)$ from \eqref{eq:k-for-m}, and $(b)$ follows from the fact that $\cA_1^\ell\in \cB_\ell\subseteq G_>^L(2)$, and for any subset $\cA_n^\ell\in G_>^L(2)$, $g_{m_n}^\ell-g_{0_n}^\ell<\nu_{k_n}^{\min,\ell}$ (by definition of $G_>^L(2)$).

By definition of $\cB_\ell$, the earliest time instant when an update $j\ge k_{n_\ell}^\ell$ is completely transmitted by SRPT$^L$ (i.e. $r_{k_n}^{\ell,\ell}$) is equal $\forall n\in\{1,2,3,\cdots,u_\ell\}$. Therefore, $r_{k_n}^{\ell,\ell}=r_{k_1}^{L,\ell}$, $\forall n\in\{2,3,\cdots,u_\ell\}$. 
Thus, using Lemma \ref{lemma:lb-nu-star}, we get that $\forall n\in\{2,3,\cdots,u_\ell\}$,  $r_{k_1}^{L,\ell}=g_{k_1}^{L,\ell}+\nu_{k_1}^{L,\ell}\ge g_{k_n}^\ell+\nu_{k_n}^{\min,\ell}$, which implies $(g_{k_1}^{L,\ell}-g_{k_n}^\ell)+\nu_{k_1}^{L,\ell}\ge \nu_{k_n}^{\min,\ell}$. Since $k_1^\ell<k_n^\ell$ (for $n\ge 2$), $(g_{k_1}^{L,\ell}-g_{k_n}^\ell)<0$. Therefore, $\nu_{k_1}^{L,\ell}\ge \nu_{k_n}^{\min,\ell}$.
Further, by definition of $k_1^\ell$, at some time, SRPT$^L$ begins to transmit update $k_1^\ell$. Thus, using Lemma \ref{lemma:LCFS+properties}, we get $\nu_{k_1}^{L,\ell}\le 2\nu_{k_1}^{\min,\ell}$, which implies $\nu_{k_n}^{\min,\ell}\le 2\nu_{k_1}^{\min,\ell}$.
Hence, 
\begin{align} \label{eq:simp-2}
	\sum_{n\in\{2,3,4,\cdots,u_\ell\}}(g_{k_n}^\ell-g_{0_n}^\ell) \nu_{k_n}^{\min,\ell}\le 2\nu_{k_1}^{\min,\ell}\sum_{n\in\{2,3,4,\cdots,u_\ell\}}(g_{k_n}^\ell-g_{0_n}^\ell)\stackrel{(a)}{\le}2\nu_{k_1}^{\min,\ell}(g_{k_{u_\ell}}^\ell-g_{k_1}^\ell),
\end{align}
where $(a)$ follows from the fact that $0_1^\ell<k_1^\ell\le 0_2^\ell<k_2^\ell\le 0_3^\ell <k_3^\ell\le\cdots\le 0_{u_\ell}^\ell<k_{u_\ell}^\ell$.

Again, by definition, $g_{k_{u_\ell}}^\ell\le r_{k_{u_\ell}}^\ell \stackrel{(a)}{=} r_{k_1}^{L,\ell}=g_{k_1}^\ell+\nu_{k_1}^{L,\ell}\stackrel{(b)}{\le} g_{k_1}^\ell+2\nu_{k_1}^{\min,\ell}$, where $(a)$ is true because $k_1^\ell\in\cA_1^\ell\in\cB_\ell$ and $k_{u_\ell}^\ell\in\cA_1^\ell\in\cB_\ell$ both lie in same set $\cB_\ell$, while $(b)$ follows because $\nu_{k_1}^{L,\ell}=w_{k_1}^{L,\ell}+d_{k_1}^{L,\ell}\le 2\nu_{k_1}^{\min,\ell}$ (Lemma \ref{lemma:lb-nu-star}). Thus, $(g_{k_{u_\ell}}^\ell-g_{k_1}^\ell)\le 2\nu_{k_1}^{\min,\ell}$. Substituting this result in \eqref{eq:simp-2}, we get $\sum_{n\in\{2,3,4,\cdots,u_\ell\}}(g_{k_n}^\ell-g_{0_n}^\ell) \nu_{k_n}^{\min,\ell}\le 4 (\nu_{k_1}^{\min,\ell})^2$, which together with \eqref{eq:temp-simp-1} implies 
\begin{align} \label{eq:temp-simp-3}
	\sum_{\cA_n^\ell\in \cB_\ell}(g_{m_n}^\ell-g_{0_n}^\ell) \nu_{k_n}^{\min,\ell}\le 5(\nu_{k_1}^{\min,\ell})^2. 
\end{align}

From Lemma \ref{lemma:lb-nu-star}, we get that with an optimal offline policy $\pi^\star$, the earliest time instant when the transmission of an update $j\ge k_1^\ell$ completes is $g_{k_1}^\ell+\nu_{k_1}^{\star\ell}\ge g_{k_1}^\ell+\nu_{k_1}^{\min,\ell}$. Thus, at any time $t\in(g_{k_1}^\ell,g_{k_1}^\ell+\nu_{k_1}^{\min,\ell}]$, $\lambda^\star(t)\le g_{k_1}^\ell$, which implies that the AoI of $\pi^\star$ at time $t\in(g_{k_1}^\ell,g_{k_1}^\ell+\nu_{k_1}^{\min,\ell}]$ is $\Delta_{\pi^\star}(t)= t-\lambda^\star(t)\ge t-g_{k_1}^\ell$. Hence, 
\begin{align} \label{eq:ell-interval-opt-lb}
	\int_{g_{k_1}^\ell}^{g_{k_1}^\ell+\nu_{k_1}^{\min,\ell}}\Delta_{\pi^\star}(\tau)d\tau\ge \frac{(\nu_{k_1}^{\min,\ell})^2}{2}.
\end{align}

Comparing \eqref{eq:temp-simp-3} and \eqref{eq:ell-interval-opt-lb}, we get $\sum_{\cA_n^\ell\in \cB_\ell}(g_{m_n}^\ell-g_{0_n}^\ell) \nu_{k_n}^{\min,\ell}\le 10\int_{g_{k_1}^\ell}^{g_{k_1}^\ell+\nu_{k_1}^{\min,\ell}}\Delta_{\pi^\star}(\tau)d\tau$.
\end{proof}



\subsection{Time intervals $(g_{k_1}^\ell,g_{k_1}^\ell+\nu_{k_1}^{{\min},\ell}]$ and $(g_{k_1}^{\ell+2},g_{k_1}^{\ell+2}+\nu_{k_1}^{{\min},\ell+2}]$ are disjoint, $\forall \ell$.}

\begin{proof}
By definition, $\cB_1, \cB_2,\cdots$ are such that $r_{\cB_1}^L<r_{\cB_2}^L<\cdots$, where $r_{\cB_\ell}^L=r_{k_n}^{L,\ell}$, $\forall \ell,n$ ($r_{k_n}^{L,\ell}$ denotes the earliest time instant when an update $j\ge k_n^\ell$ is completely transmitted by SRPT$^L$, while $k_n^\ell$ denotes the $k^{th}$ update in the subset $\cA_n^\ell$). Therefore, $r_{k_1}^{L,\ell}<r_{k_1}^{L,\ell+1}<r_{k_1}^{L,\ell+2}$, $\forall \ell$, which also implies that $g_{k_1}^\ell<g_{k_1}^{\ell+1}<g_{k_1}^{\ell+2}$, $\forall \ell$. Therefore, to show that the intervals $(g_{k_1}^\ell,g_{k_1}^\ell+\nu_{k_1}^{{\min},\ell}]$ and $(g_{k_1}^{\ell+2},g_{k_1}^{\ell+2}+\nu_{k_1}^{{\min},\ell+2}]$ are disjoint, we show that $g_{k_1}^\ell+\nu_{k_1}^{{\min},\ell}\le g_{k_1}^{\ell+2}$, using contradiction.

Let $g_{k_1}^\ell+\nu_{k_1}^{{\min},\ell}> g_{k_1}^{\ell+2}$. From Lemma \ref{lemma:lb-nu-star}, the earliest time by which SRPT$^L$ can completely transmit an update $j\ge k_1^\ell$ is $r_{k_1}^{L,\ell}\ge g_{k_1}^\ell+\nu_{k_1}^{{\min},\ell}$. Therefore, 
$r_{k_1}^{L,\ell}> g_{k_1}^{\ell+2}>g_{k_1}^{\ell+1}$, which implies that at time $r_{k_1}^{L,\ell}$, source has update $k_1^{\ell+2}$. 
Since SRPT$^L$ begins to transmit an update at time $t$ only if its generation time is latest, we get that at time 
$t\ge r_{k_1}^{L,\ell}$, SRPT$^L$ only transmits updates $j\ge k_2^{\ell+2}$. However, after time $r_{k_1}^{L,\ell}$, directly transmitting an update $j\ge k_2^{\ell+2}$ would imply that $r_{k_1}^{L,\ell+1}=r_{k_1}^{L,\ell+2}$, which is in contradiction to the definitions of $\cB_{\ell +1}$ and $\cB_{\ell+2}$. Therefore, the assumption that $g_{k_1}^\ell+\nu_{k_1}^{{\min},\ell}> g_{k_1}^{\ell+2}$ must be wrong. Hence, we get that $g_{k_1}^\ell+\nu_{k_1}^{{\min},\ell}\le g_{k_1}^{\ell+2}$.
\end{proof}

\end{document}